\newcommand{\onlyShort}[1]{\ifthenelse{\boolean{short}}{#1}{}}
\newcommand{\onlyLong}[1]{\ifthenelse{\boolean{short}}{}{#1}}
\newtheorem{theorem}{Theorem}[section]
\newtheorem{lemma}[theorem]{Lemma}
\newtheorem{claim}[theorem]{Claim}
\newtheorem{definition}[theorem]{Definition}
\newcommand{\calvin}[1]{ [[[ \textcolor{blue}{\bf Cal:} {\em #1} ]]]}
\newcommand{\sebastian}[1]{ [[[ \textcolor{green}{\bf Seb:} {\em #1} ]]]}
\newcommand{\mohsen}[1]{ [[[ \textcolor{blue}{\bf Mohsen:} {\em #1}
  ]]]}
\newcommand{\fabian}[1]{ [[[ \textcolor{orange}{\bf Fabian:} {\em #1}
  ]]]}
\newcommand{\seth}[1]{ [[[ \textcolor{red}{\bf Seth:} {\em #1} ]]]}
\newcommand{\hide}[1]{}
\renewcommand{\vec}[1]{\underline{#1}}
\newcommand{\brackets}[1]{\ensuremath{\left(#1\right)}}
\newcommand{\whp}[1][\empty]{\ensuremath{\text{w.h.p.}\ifthenelse{\equal{#1}{\empty}}{}{(#1)}}}
\newcommand{\Whp}[1][\empty]{\ensuremath{\text{W.h.p.}\ifthenelse{\equal{#1}{\empty}}{}{(#1)}}}
\renewcommand{\Pr}{\mathbb{P}}
\algnewcommand\algorithmicswitch{\textbf{switch}}
\algnewcommand\algorithmiccase{\textbf{case}}
\algnewcommand\algorithmicwithprob{\textbf{with probability}}
\algnewcommand\algorithmicotherwise{\textbf{otherwise}}
\newcommand{\Prob}[1]{\text{Pr}\left[#1\right]\xspace}
\newtheorem*{rep@theorem}{\rep@title}
\newcommand{\newreptheorem}[2]{%
\newenvironment{rep#1}[1]{%
 \def\rep@title{#2 \ref{##1}}%
 \begin{rep@theorem}}%
 {\end{rep@theorem}}}
\title{Fault-Tolerant Consensus with an Abstract MAC Layer\thanks{Peter Robinson acknowledges the support of the Natural Sciences and Engineering Research Council of Canada (NSERC), application ID RGPIN-2018-06322.
Calvin Newport acknowledges the support of the National Science Foundation, award number 1733842.}}
\author{Calvin Newport\\ Georgetown University\\ \texttt{cnewport@cs.georgetown.edu}
\and Peter Robinson\\ McMaster University\\ \texttt{peter.robinson@mcmaster.ca}}
\date{}
\begin{document}
\maketitle

\begin{abstract}
In this paper, we study fault-tolerant distributed consensus in wireless systems. 
In more detail, we produce two new randomized algorithms that solve this problem in the abstract MAC layer model,
which captures the basic interface and communication guarantees provided by most wireless MAC layers. 
Our algorithms work for any number of failures,  require no advance knowledge of the network participants or network size,
and guarantee termination with high probability after a number of broadcasts that are polynomial in the network size.
Our first algorithm satisfies the standard agreement property, while our second trades a faster termination guarantee in exchange
for a looser agreement property in which most nodes agree on the same value.
These are the first known fault-tolerant consensus algorithms for this model.
In addition to our main upper bound results,
we explore the gap between the abstract MAC layer and the standard asynchronous message passing model
by proving fault-tolerant consensus is impossible in the latter in the absence of information regarding the network participants,
even if we assume no faults, allow randomized solutions, and provide the algorithm a constant-factor approximation of the network size.
\end{abstract}

\section{Introduction}
\label{sec:intro}

Consensus provides a fundamental building block for developing reliable distributed
systems~\cite{guerraoui:1997,guerraoui:2000,guerraoui:2001}.
Accordingly, it is well studied in many different system models~\cite{lynch:1996}.
Until recently, however, little was known about solving this problem in distributed
systems made up of devices communicating using commodity wireless cards.
Motivated by this knowledge gap, 
this paper studies consensus in the {\em abstract MAC layer} model, 
which abstracts the basic behavior and guarantees of standard wireless MAC layers.
In recent work~\cite{newport:2014}, 
we proved deterministic fault-tolerant consensus is impossible in this setting.
In this paper, we describe and analyze the first known randomized
fault-tolerant consensus algorithms for this well-motivated model.

\smallskip
\noindent {\bf The Abstract MAC Layer.}
Most existing work on distributed algorithms for wireless networks assumes low-level synchronous models 
that force algorithms to directly grapple with issues caused by contention and signal fading.
Some of these models describe the network topology with a graph
(c.f.,~\cite{baryehuda:1987,jurdzinski:2002,kowalski:2005,moscibroda:2005,czumaj:2006,gasieniec:2007}),
while others use signal strength calculations to determine message behavior
(c.f.,~\cite{moscibroda:2006,moscibroda:2007,goussevskaia:2009,halldorsson:2012b,jurdzinski:2013:random,daum:2013}). 

As also emphasized in~\cite{newport:2014}, these models are useful for asking foundational questions about distributed computation
on shared channels, but  are not so useful for developing algorithmic strategies suitable for deployment.
In real systems, algorithms typically do not operate in synchronous rounds and they are not provided unmediated access to the radio.
They must instead operate on top of a general-purpose MAC layer 
which is responsible for many network functions, including contention management, rate control, and co-existence with other network traffic.

Motivated by this reality, in this paper we adopt the {\em abstract MAC layer} model~\cite{kuhn:2011abstract},
an asynchronous broadcast-based communication model
that captures the basic interfaces and guarantees provided by common existing wireless MAC 
layers. In more detail, if you provide the abstract MAC layer a message to broadcast,
it will eventually be delivered to nearby nodes in the network.
The specific means by which contention is managed---e.g., CSMA, TDMA, uniform probabilistic routines such as DECAY~\cite{baryehuda:1987}---is abstracted away
by the model. 
At some point after the contention management completes,
the abstract MAC layer passes back an {\em acknowledgment} indicating that it is ready for the next message.
This acknowledgment contains no information about the number or identities of the message recipient.

(In the case of the MAC layer using CSMA, for example, the acknowledgment would be generated after the MAC layer detects a clear channel.
In the case of TDMA, the acknowledgment would be generated after the device's turn in the TDMA schedule.
In the case of a probabilistic routine such as DECAY, the acknowledgment would be generated after a sufficient number of attempts to guarantee
successful delivery to all receivers with high probability.)

The abstract MAC abstraction, of course, does not attempt to provide a detailed representation of any specific existing MAC layer.
Real MAC layers offer many more modes and features then is captured by this model.
In addition, the variation studied in this paper assumes messages are always delivered, whereas more realistic variations would allow
for occasional losses.

This abstraction, however, still serves to 
capture the fundamental dynamics of real wireless application design in which the lower layers dealing directly
with the radio channel are separated from the higher layers executing the application in question.
An important goal in studying this abstract MAC layer, therefore, is attempting to uncover principles and strategies that can close the gap
between theory and practice in the design of distributed systems deployed on standard layered wireless architectures.

\smallskip
\noindent {\bf Our Results.}
In this paper, we studied randomized fault-tolerant consensus algorithms in the abstract MAC layer model.
In more detail, we study binary consensus and assume a single-hop network topology.
Notice, our use of randomization is necessary, as deterministic consensus is impossible in the abstract MAC layer model in the presence of even
a single fault (see our generalization of FLP from~\cite{newport:2014}).

To contextualize our results,
we note that the abstract MAC layer model differs from standard asynchronous message passing models
in two main ways: (1) the abstract MAC layer model provides the algorithm no advance information about the network size or membership,
requiring nodes to communicate with a blind broadcast primitive instead of using point-to-point channels,
(2) the abstract MAC layer model provides an acknowledgment to the broadcaster at some point after its message
has been delivered to all of its neighbors. This acknowledgment, however, contains no information about
the number or identity of these neighbors (see above for more discussion of this fundamental feature of standard wireless MAC layers).

Most randomized fault-tolerant consensus algorithms in the asynchronous message passing model strongly leverage
knowledge of the network. A strategy common to many of these algorithms, for example, is to repeatedly collect messages from
at least $n-f$ nodes in a network of size $n$ with at most $f$ crash failures (e.g.,~\cite{benor}).
This strategy does not work in the abstract MAC layer model as nodes do not know $n$.

To overcome this issue, we adapt an idea introduced in early work on fault-tolerant consensus in the asynchronous shared memory
model: {\em counter racing} (e.g.,~\cite{chandra,jim}).
At a high-level, this strategy has nodes with initial value $0$ advance a shared memory counter associated with $0$,
while nodes with initial value $1$ advance a counter associated with $1$.
If a node sees one counter get ahead of the other, they adopt the initial value associated with the larger counter,
and if a counter gets sufficiently far ahead, then nodes can decide.

Our first algorithm (presented in Section~\ref{slow}) implements a counter race of sorts using the acknowledged blind broadcast primitive provided by the model.
Roughly speaking, nodes continually broadcast their current proposal and counter, and update both based on the
pairs received from other nodes. 
Proving safety for this type of strategy in shared memory models is simplified by the atomic nature of register accesses.
In the abstract MAC layer model, by contrast, a broadcast message is delivered non-atomically to its recipients, 
and in the case of a crash, may not arrive at some recipients at all.\footnote{We note that register simulations are also not an option
in our model for two reasons: standard simulation algorithms require knowledge of $n$ and a majority correct nodes,
whereas we assume no knowledge of $n$ and wait-freedom.} 
Our safety analysis, therefore, requires novel analytical tools that tame a more diverse set of possible system configurations.

To achieve liveness, we use a technique loosely inspired by the randomized delay strategy introduced by Chandra in the shared memory model~\cite{chandra} .
In more detail, nodes probabilistically decide to replace certain sequences of their counter updates with $nop$ placeholders.
We show that if these probabilities are adapted appropriately, the system eventually arrives at a state where it becomes likely for only a single
node to be broadcasting updates, allowing progress toward termination.

Formally, we prove that with high probability in the network size $n$, the algorithm terminates after $O(n^3\log{n})$ broadcasts are scheduled.
This holds regardless of which broadcasts are scheduled (i.e., we do not impose a fairness condition), and regardless of the number
of faults. The algorithm, as described, assumes nodes are provided unique IDs that we treat as comparable black boxes (to prevent them from leaking network size information).
We subsequently show how to remove that assumption by describing an algorithm that generates unique IDs in this setting with high probability. 

Our second algorithm (presented in Section~\ref{fast}) trades a looser agreement guarantee for more efficiency.
 In more detail, we describe and analyze a solution to {\em almost-everywhere} agreement~\cite{dwork:1988},
 that guarantees most nodes agree on the same value.
 This algorithm terminates after $O(n^2\log^4{n}\log\log{n})$ broadcasts, which is a linear factor faster than our first algorithm (ignoring log factors).
 The almost-everywhere consensus algorithm consists of two phases.
 The first phase is used to ensure that almost all nodes obtain a good approximation of the network size. 
 In the second phase, nodes use this estimate to perform a sequence of broadcasts meant to help spread their proposal to the network.
 Nodes that did not obtain a good estimate in Phase~1 will leave Phase~2 early.
 The remaining nodes, however, can leverage their accurate network size estimates to probabilistically sample a subset to actively participate
 in each round of broadcasts.
 To break ties between simultaneously active nodes, each chooses a random rank using the estimate obtained in Phase~1. 
 We show that with high probability, after not too long,
 there exists a round of broadcasts in which the first node receiving its acknowledgment is both active and has the minimum rank among other active nodes---allowing its proposal to spread to all remaining nodes.
 
Finally, we explore the gap between the abstract MAC layer model and the related asynchronous message passage passing model.
We prove (in Section~\ref{sec:lower}) that fault-tolerant consensus is impossible in the asynchronous message passing model
in the absence of knowledge of network participants, even if we assume no faults, allow randomized algorithms, and provide a constant-factor approximation of $n$.
This differs from the abstract MAC layer model where we solve this problem without network participant or network size information,
and assuming crash failures.
This result implies that the fact that broadcasts are acknowledged in the abstract MAC layer model is crucial to overcoming 
the difficulties induced by limited network information.

\smallskip
\noindent {\bf Related Work.}
Consensus provides a fundamental building block for reliable distributed computing~\cite{guerraoui:1997,guerraoui:2000,guerraoui:2001}.
It is particularly well-studied in asynchronous models~\cite{paxos,schiper:1997,mostefaoui:1999,aguilera:2000}.

The abstract MAC layer approach\footnote{There is no {\em one} abstract MAC layer model.
Different studies use different variations. They all share, however, the same general commitment to capturing the types of interfaces
and communication/timing guarantees that are provided by standard wireless MAC layers} 
to modeling wireless networks was introduced in~\cite{kuhn:2009}
(later expanded to a journal version~\cite{kuhn:2011abstract}), and
has been subsequently used to study several
 different problems~\cite{cornejo2009neighbor,khabbazian:2010,khabbazian:2011,cornejo2014reliable,newport:2014}. 
 The most relevant of this related work is~\cite{newport:2014},
 which was the first paper to study consensus in the abstract MAC layer model.
 This previous paper generalized the seminal FLP~\cite{flp} result
 to prove deterministic consensus is impossible in this model even in the presence of a single failure.
 It then goes on to study deterministic consensus in the absence of failures,
identifying the pursuit of fault-tolerant  {\em randomized} solutions as important future work---the challenge
 taken up here.

We note that other researchers have also studied consensus using high-level wireless network abstractions. 
Vollset and Ezhilchelvan~\cite{vollset:2005}, 
and Alekeish and Ezhilchelvan~\cite{alekeish:2012}, study consensus
in a variant of the asynchronous message passing model where pairwise channels come and go dynamically---capturing 
some behavior of {mobile} wireless networks. Their correctness results depend on detailed liveness guarantees
that bound the allowable channel changes.
Wu et~al.~\cite{wu:2009} use the standard asynchronous message passing model (with 
unreliable failure detectors~\cite{chandra:1996}) as a stand-in for a wireless network,
focusing on how to reduce message complexity (an important metric in a resource-bounded wireless setting)
in solving consensus.

A key difficulty for solving consensus in the abstract MAC layer model is the absence of advance information
about network participants or size.
These constraints have also been studied in other models.
Ruppert~\cite{ruppert2007anonymous},
and Bonnet and Raynal~\cite{bonnet2010anonymous},
for example, study the amount of extra power needed (in terms of shared objects and failure detection, respectively)
to solve wait-free consensus in {\em anonymous} versions of the standard models.
Attiya et~al.~\cite{attiya2002computing} describe consensus solutions for shared memory systems without failures or unique ids.
A series of papers~\cite{cavin:2004,greve:2007,alchieri:2008}, starting with the work of Cavin et~al.~\cite{cavin:2004},
study the related problem of  {\em consensus with unknown participants} (CUPs),
where nodes are only allowed to communicate with other nodes whose identities have been provided
by a {\em participant detector} formalism. 

Closer to our own model is the work of Abboud~et~al.~\cite{abboud:2008},
which also studies a single hop network where nodes broadcast messages to an unknown group of network participants.
They prove deterministic consensus is impossible
in these networks under these assumptions without knowledge of network size.
In this paper, we extend these existing results by proving this impossibility still holds even if we assume
randomized algorithms and provided the algorithm a constant-factor approximation of the network size. 
This bound opens a sizable gap with our abstract MAC layer model in which consensus is solvable without
this network information.

We also consider almost-everywhere (a.e.) agreement \cite{dwork:1988}, a weaker variant of consensus, where a small number of nodes are allowed to decide on conflicting values, as long as a sufficiently large majority agrees. 
Recently, a.e.\ agreement has been studied in the context of peer-to-peer networks (c.f.\ \cite{king:2006,augustine:2015}), where the adversary can isolate small parts of the network thus rendering (everywhere) consensus impossible. 
We are not aware of any prior work on a.e.\ agreement in the wireless settings.

\section{Model and Problem}
\label{sec:model}

In this paper, we study a variation of the {\em abstract MAC layer} model,
which describes system consisting of a single hop network of $n\geq 1$ computational devices (called {\em nodes} in the following) that communicate
wirelessly using communication interfaces and guarantees inspired by commodity wireless MAC layers.

In this model, nodes communicate with a $bcast$ primitive 
 that guarantees to eventually deliver the broadcast message to all the other nodes (i.e., the network is single hop).
At some point after a given $bcast$ has succeeded in delivering a message to all other nodes, the broadcaster
receives an $ack$ informing it that the broadcast is complete (as detailed in the introduction, this captures the reality
that most wireless contention management schemes have a definitive point at which they know a message broadcast is complete). 
This acknowledgment contains no information about
the number or identity of the receivers.

We assume a node can only broadcast one message at a time. That is, once it invokes $bcast$, it cannot broadcast another message until 
receiving the corresponding $ack$ (formally, overlapping messages are discarded by the MAC layer).
We also assume any number of nodes can permanently stop executing due to crash failures.
As in the classical message passing models, a crash can occur during a broadcast, meaning that some nodes might receive the message while others do not.

This model is event-driven with the relevant events scheduled asynchronously by an arbitrary {\em scheduler}.
In more detail,
for each node $u$, there are four event types relevant to $u$ that can be scheduled:  $init_u$ (which occurs at the beginning of an execution and allows $u$ to initialize), $recv(m)_u$ (which indicates that $u$ has received message $m$ broadcast from another node),
 $ack(m)_u$ (which indicates that the message $m$ broadcast by $u$ has been successfully delivered), and $crash_u$ (which indicates that $u$ is crashed for the remainder of the execution).

A distributed algorithm specifies for each node $u$
a finite collection of steps to execute for each of the non-$crash$ event types.
When one of these events is scheduled by the scheduler,
we assume the corresponding steps are executed atomically at the point that the event is scheduled.
Notice that one of the steps that a node $u$ can take in response to these events is to invoke a $bcast(m)_u$ primitive for some message $m$.
When an event includes a $bcast$ primitive we say it is {\em combined} with a broadcast.\footnote{Notice, we can assume without loss of generality, 
that the steps executed in response to an event never invoke more than a single $bcast$ primitive, as any additional broadcasts invoked at the same time
would lead to the messages being discarded due to the model constraint that a node must receive an $ack$ for the current message before broadcasting a new message.} 

We place the following constraints on the scheduler.
 It must start each execution by scheduling an $init$ event for each node; i.e., 
we study the setting where all participating nodes are activated at the beginning of the execution.
If a node $u$ invokes a valid $bcast(m)_u$ primitive,
then for each $v\neq u$ that is not crashed when the broadcast primitive is invoked,
the scheduler must subsequently either schedule a single $recv(m)_v$ or $crash_v$ event at $v$.
At some point after these events are scheduled, it must then eventually schedule an $ack(m)_u$ event at $u$.
These are the only $recv$ and $ack$ events it schedules (i.e., it cannot create new messages from scratch or cause messages
to be received/acknowledged multiple times).
If the scheduler schedules a $crash_v$ event, it cannot subsequently schedule any future events for $u$.

We assume that in making each event scheduling decision,
the scheduler can use the schedule history as well as the algorithm definition,
but it does not know the nodes' private states (which includes the nodes' random bits).
When the scheduler schedules an event that triggers a broadcast (making it a combined event), 
it is provided this information so that it knows it must now schedule receive events for the message.
We assume, however, that the scheduler does not learn the {\em contents} of the broadcast message.\footnote{This adversary model is sometimes called {\em message oblivious} and it is commonly
considered a good fit for schedulers that control network behavior. 
This follows because it allows the scheduler to adapt the schedule based on the number of messages being sent and their
sources---enabling it to model contention and load factors. One the other hand, there is not good justification for the idea that this schedule should somehow also depend on the specific bits
contained in the messages sent. Notice, our liveness proof specifically leverages the message oblivious assumption as it prevents the scheduler from knowing which nodes are sending updates
and which are sending $nop$ messages. }

Given an execution $\alpha$,
we say the {\em message schedule} for $\alpha$,
also indicated $msg[\alpha]$,
is the sequence of message events (i.e., $recv$, $ack$, and $crash$) scheduled in the execution.
We assume that a message schedule includes indications of which events are combined with broadcasts.

\smallskip
\noindent {\bf The Consensus Problem.} 
 In this paper, we study binary consensus with probabilistic termination.
 In more detail, at the beginning of an execution each node is provided an {\em initial value} from $\{0,1\}$ as input.
Each node has the ability to perform a single irrevocable $decide$ action for either value $0$ or $1$.
To solve consensus, an algorithm must guarantee the following three properties: (1) {\em agreement}: no two nodes
decide different values; (2) {\em validity}: if a node decides value $b$, then at least one node started with initial value $b$;
and (3) {\em termination (probabilistic)}: every non-crashed node decides with probability $1$ in the limit.

Studying finite termination bounds is complicated in asynchronous models because the scheduler can delay specific nodes taking
steps for arbitrarily long times.
In this paper, we circumvent this issue by proving bounds on the number of scheduled events before the system reaches
a {\em termination state} in which every non-crashed node has: (a) decided; or (b) will decide whenever the scheduler gets around to scheduling its next $ack$ event.

Finally, in addition to studying consensus with standard agreement, we also study {\em almost-everywhere} agreement, 
in which only a specified majority fraction (typically a $1-o(n)$ fraction of the $n$ total nodes) must agree.

\begin{algorithm}

\begin{algorithmic}
\State
\State \underline{Initialization:}  
\State $c_u \gets 0$
\State $n_u \gets 2$  %
\State $C_u \gets \{  (id_u, c_u,v_u)  \}$
\State $peers \gets \{ id_u\}$
\State $phase \gets 0$
\State $active \gets true$
\State $decide \gets -1$
\State $k \gets 3$ 
\State $c\gets k+3$
\State {\bf bcast}$(nop,id_u,n_u)$
\State

\State \underline{On Receiving $ack(m)$:}
\State $phase \gets phase +1$
\If{$m=(decide,b)$}
	 	\State {\bf decide}$(b)$ and {\bf halt}$()$
\Else
	\State $newm \gets \bot$
	\State $C_u' \gets C_u$
	\State $\hat c_u^{(0)} \gets$ max counter in $C_u'$ paired with value $0$ (default to $0$ if no such elements)
	\State $\hat c_u^{(1)} \gets$ max counter in $C_u'$ paired with value $1$ (default to $0$ if no such elements)
	
	\If{$\hat c_u^{(0)} > \hat c_u^{(1)}$} $v_u \gets 0$
	\ElsIf{$\hat c_u^{(1)} > \hat c_u^{(0)}$}  $v_u \gets 1$
	\EndIf
	
	\If{$\hat c_u^{(0)} \geq \hat c_u^{(1)} + k$ {\bf or} $decide = 0$} $newm \gets (decide,0)$
	\ElsIf{$\hat c_u^{(1)} \geq \hat c_u^{(0)} + k$ {\bf or} $decide =1$} $newm \gets (decide,1)$
	\EndIf
	
	\If{$newm = \bot$}
		\If{$\max\{\hat c_u^{(0)}, \hat c_u^{(1)} \} \leq c_u$ {\bf and} $m \neq nop$} $c_u \gets c_u + 1$
		\ElsIf{$\max\{\hat c_u^{(0)}, \hat c_u^{(1)} \} > c_u$} $c_u \gets max\{\hat c_u^{(0)}, \hat c_u^{(1)} \}$
		\EndIf
		
		\State {\bf update} $(id_u,*,*)$ element in $C_u$ with new $c_u$ and $v_u$
		\State $newm \gets (counter,id_u,c_u,v_u,n_u)$
	\EndIf
	
	\If{$phase\ \%\ c = 1$} with probability $1/n_u$ $active\gets true$ otherwise $active\gets false$ \EndIf
	\If{$newm = (decide,*)$ {\bf or} $active = true$} 
		\State {\bf bcast}$(newm)$
	\Else 
		\State {\bf bcast}$(nop,id_u,n_u)$
	\EndIf
\EndIf
\State

\State \underline{On Receiving Message $m$:}
\State {\bf updateEstimate}$(m)$
 \If{$m=(decide,b)$}
 	\State $decide \gets b$
 \ElsIf{$m=(counter,id,c,v,n')$}
 	\If{$\exists c',v'$ such that $(id,c',v') \in C_u$}
		\State {\bf remove} $(id,c',v')$ from $C_u$
	\EndIf
	 \State {\bf add} $(id,c,v)$ to $C_u$
 \EndIf

\end{algorithmic}

\label{alg:1}
\caption{Counter Race Consensus (for node $u$ with UID $id_u$ and initial value $v_u$)}
\end{algorithm}

\begin{algorithm}
\begin{algorithmic}
\If{$m$ contains a UID $id$ and network size estimate $n'$}
	\State $peers \gets peers \cup \{id\}$
	\State $n_u \gets \max\{ n_u,|peers|, n'\}$
\EndIf

\end{algorithmic}
\caption{The {\bf updateEstimate}$(m)$ subroutine called by Counter Race Consensus during $recv(m)$ event.}
\end{algorithm}

\section{Consensus Algorithm}
\label{slow}

Here we describe analyze our randomized binary consensus algorithm: {\em counter race consensus} (see Algorithms $1$ and $2$ for pseudocode,
and Section~\ref{sec:slow:alg} for a high-level description of its behavior).
This algorithm assumes no advance knowledge of the network participants or network size.
Nodes are provided unique IDs,
but these are treated as comparable black boxes, preventing them from leaking information about the network size.
(We will later discuss how to remove the unique ID assumption.)
It tolerates any number of crash faults.
\onlyShort{The detailed proofs can be found in the full paper \cite{fullpaper}.}

\subsection{Algorithm Description}
\label{sec:slow:alg}

The counter race consensus algorithm is described in pseudocode in the figures labeled Algorithm $1$ and $2$.
Here we summarize the behavior formalized by this pseudocode.

The core idea of this algorithm is that each node $u$ maintains a counter $c_u$ (initialized to $0$) and a proposal $v_u$ (initialized to its consensus initial value).
Node $u$ repeatedly broadcasts $c_u$ and $v_u$,  updating these values before each broadcast.
That is, during the $ack$ event for its last broadcast of $c_u$ and $v_u$,
node $u$ will apply a set of {\em update rules}  to these values. It then concludes the $ack$ event by broadcasting these updated values.
This pattern repeats until $u$ arrives at a state where it can safely commit to deciding a value.

The update rules and decision criteria applied during the $ack$ event are straightforward. 
Each node $u$ first calculates $\hat c_u^{(0)}$, the largest counter value it has sent or received in a message containing proposal value $0$,
and $\hat c_u^{(1)}$, the largest counter value it has sent or received in a message containing proposal value $1$.

If $\hat c_u^{(0)} > \hat c_u^{(1)}$, then $u$ sets $v_u \gets 0$, 
and if $\hat c_u^{(1)} > \hat c_u^{(0)}$, then $u$ sets $v_u \gets 1$.
That is, $u$ adopts the proposal that is currently ``winning" the counter race (in case of a tie, it does not change its proposal).

Node $u$ then checks to see if either value is winning by a large enough margin to support a decision.
In more detail, if $\hat c_u^{(0)} \geq \hat c_u^{(1)} + 3$,
then $u$ commits to deciding $0$, and if $\hat c_u^{(1)} \geq \hat c_u^{(0)} + 3$,
then $u$ commits to deciding $1$.

What happens next depends on whether or not $u$ committed to a decision. 
If $u$ did {\em not} commit to a decision (captured in the {\bf if} $newm = \bot$ {\bf then} conditional),
then it must update its counter value. 
To do so, it compares its current counter $c_u$ to $\hat c_u^{(0)}$ and $\hat c_u^{(1)}$.
If $c_u$ is smaller than one of these counters, it sets $c_u \gets \max\{ \hat c_u^{(0)}, \hat c_u^{(1)}\}$.
Otherwise, if $c_u$ is the largest counter that $u$ has sent or received so far, it
will set $c_u \gets c_u + 1$. Either way, its counter increases.
At this point, $u$ can complete the $ack$ event by broadcasting a message containing its newly updated
$c_u$ and $v_u$ values.

On the other hand, if $u$ committed to deciding value $b$, then it will send a $(decide,b)$ message to inform
the other nodes of its decision.
On subsequently receiving an $ack$ for this message, $u$ will decide $b$ and halt. 
Similarly, if $u$ ever receives a $(decide, b)$ message from {\em another} node,
it will commit to deciding $b$.
During its next $ack$ event, it will send its own $(decide,b)$ message and decide and halt on its corresponding $ack$.
That is, node $u$ will not decide a value until it has broadcast its commitment to do so, and received an $ack$ on the broadcast.

The behavior described above guarantees agreement and validity.
It is not sufficient, however, to achieve liveness, as an ill-tempered scheduler can conspire to keep the race between $0$ and $1$ too close for a decision commitment.
To overcome this issue we introduce a random delay strategy that has nodes randomly step away from
the race for a while by replacing their broadcast values with $nop$ placeholders ignored by those who receive them.
Because our adversary does not learn the {\em content} of broadcast messages,
it does not know which nodes are actively participating and which nodes are taking a break (as in both cases, nodes continually broadcast messages)---thwarting
its ability to effectively manipulate the race.

In more detail, each node $u$ partitions its broadcasts into {\em groups} of size $6$.
At the beginning of each such group, $u$ flips a weighted coin to determine whether or not to replace the counter and proposal
values it broadcasts during this group with $nop$ placeholders---eliminating its ability to affect other nodes' counter/proposal values.
As we will later elaborate in the liveness analysis,
the goal is to identify a point in the execution in which a single node $v$ is broadcasting its values while all other nodes are broadcasting $nop$ values---allowing
$v$ to advance its proposal sufficiently far ahead to win the race.

To be more specific about the probabilities used in this logic, node $u$ maintains an estimate $n_u$ of the number of nodes in the network.
It replaces values with $nop$ placeholders in a given group with probability $1/n_u$.
(In the pseudocode, the  $active$ flag indicates whether or not $u$ is using $nop$ placeholders in the current group.)
Node $u$ initializes $n_u$ to $2$. It then updates it by calling the {\em updateEstimate} routine (described in Algorithm $2$)
for each message it receives.

There are two ways for this routine to update $n_u$.
The first is if the number of unique IDs that $u$ has received so far (stored in $peers$) is larger than $n_u$. 
In this case, it sets $n_u \gets |peers|$.
The second way is if it learns another node has an estimate $n' > n_u$.
In this case, it sets $n_u \gets n'$.
Node $u$ learns about other nodes' estimates, as the algorithm has each node append its current estimate to all
of its messages (with the exception of $decide$ messages).
In essence, the nodes are running a network size estimation routine parallel to its main counter race logic---as nodes refine
their estimates, their probability of taking useful breaks improves.

\subsection{Safety}
We begin our analysis by proving
that our algorithm satisfies the agreement and validity properties of the consensus problem.
Validity follows directly from the algorithm description.
Our strategy to prove agreement is to show that if any node sees a value $b$ with a counter at least $3$ ahead of value $1-b$ (causing it to commit to deciding $b$),
then $b$ is the only possible decision value.
Race arguments of this type are easier to prove in a shared memory setting where nodes work with objects like atomic registers that guarantee linearization points.
In our message passing setting, by contrast,
in which broadcast messages arrive at different receivers at different times,
we will require  more involved definitions and operational arguments.\footnote{We had initially hoped there might be some way to simulate linearizable shared objects
in our model. Unfortunately, our nodes' lack of information about the network size thwarted standard simulation strategies which typically require nodes to collect messages
from a majority of nodes in the network before proceeding to the next step of the simulation.}

We start
with a useful definition.
We say $b$ {\em dominates} $1-b$ at a given point in the execution,
if every (non-crashed) node at this point believes $b$ is winning the race, and none of the messages
in transit can change this perception.

To formalize this notion we need some notation.
In the following, we say {\em at point $t$} (or {\em at $t$}), with respect to an event $t$ from the message schedule of an execution $\alpha$,
to describe the state of the system immediately after event $t$ (and any associated steps that execute atomically with $t$) occurs.
We also use the notation {\em in transit at $t$} to describe messages that have been broadcast but not yet received at every non-crashed receiver
at $t$.

\begin{definition}
Fix an execution $\alpha$, event $t$ in the corresponding message schedule $msg[\alpha]$, consensus value $b\in \{0,1\}$, and counter value $c\geq 0$.
We say $\alpha$ is {\em $(b,c)$-dominated} at $t$ if the following conditions are true:
\begin{compactenum}
\item For every node $u$ that is not crashed at $t$: $\hat c_u^{(b)}[t] > c$ and $\hat c_u^{(1-b)}[t] \leq c$,
where at point $t$, $\hat c_u^{(b)}[t]$ (resp. $\hat c_u^{(1-b)}[t]$) is the largest value $u$ has sent or received in a counter message containing consensus value $b$ (resp. $1-b$).
If $u$ has not sent or received any counter messages containing $b$ (resp. $1-b$), then by default it sets $\hat c_u^{(b)}[t] \gets 0$ (resp. $\hat c_u^{(1-b)}[t] \gets 0$) in making this comparison.

\item For every message of the form $(counter,id,1-b,c',n')$ that is in transit at  $t$: $c' \leq c$.

\end{compactenum} 

\label{def:dom}
\end{definition}

The following lemma formalizes the intuition that once an execution becomes dominated by a given value, it remains dominated by this value.

\begin{lemma}
Assume some execution $\alpha$ is $(b,c)$-dominated at point $t$. It follows that $\alpha$ is $(b,c)$-dominated at every $t'$ that comes after $t$.
\label{lem:dom}
\end{lemma}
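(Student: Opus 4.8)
The plan is to prove this by induction on the events in the message schedule $msg[\alpha]$ that occur after $t$. It suffices to show that if $\alpha$ is $(b,c)$-dominated at some point $s \geq t$, then it is still $(b,c)$-dominated at the very next scheduled event $s'$. Since only one event executes atomically at a time, I only need to check the finitely many event types ($recv$, $ack$ possibly combined with a $bcast$, and $crash$) and verify that neither condition of Definition~\ref{def:dom} can be broken by a single such event.

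First I would dispatch the easy cases. A $crash_v$ event only removes a node from consideration and creates no new messages, so both conditions are trivially preserved (the universally quantified conditions only get easier). For a $recv(m)_v$ event: by the inductive hypothesis, if $m$ is a counter message with value $1-b$ then its counter $c' \leq c$ (Condition~2 at $s$), so when $v$ incorporates it into $C_v$ the quantity $\hat c_v^{(1-b)}$ stays $\leq c$; and $\hat c_v^{(b)}$ can only increase or stay the same, so it remains $> c$. Thus Condition~1 holds for $v$ at $s'$, and it is untouched for every other node; Condition~2 holds because receiving a message only shrinks the set of in-transit messages (or leaves it unchanged if other copies are still outstanding), and no new message is created by a plain $recv$ event. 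Note a $recv$ event in this algorithm is not combined with a $bcast$.

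The substantive case is the $ack(m)_v$ event, which is combined with a broadcast of a new message $newm$ and which is where $v$ updates $c_v$ and $v_v$. Here I would argue as follows. At $s$ we have $\hat c_v^{(b)}[s] > c \geq \hat c_v^{(1-b)}[s]$, so in particular $\hat c_v^{(b)}[s] > \hat c_v^{(1-b)}[s]$, which forces the update rule to set $v_v \gets b$ (it adopts the strictly larger side). Moreover the counter update sets $c_v$ to either $\max\{\hat c_v^{(0)}, \hat c_v^{(1)}\} = \hat c_v^{(b)}[s] > c$, or to $c_v + 1$ in the case $\hat c_v^{(b)}[s] = c_v$ is already the max, which is again $> c$; in the decide-commitment branch ($\hat c_v^{(b)} \geq \hat c_v^{(1-b)} + k$, recall $k=3$) the node broadcasts a $(decide,b)$ message rather than a counter message, so it contributes nothing to Condition~2 for value $1-b$ and does not disturb anyone's $\hat c^{(\cdot)}$ values upon being received. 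In all non-decide cases, the message $newm = (counter, id_v, c_v, b, n_v)$ that is injected into transit carries value $b$, not $1-b$, so it is irrelevant to Condition~2; and it carries a counter $> c$ paired with $b$, so any node later receiving it only increases its $\hat c^{(b)}$, preserving Condition~1. Finally, $v$'s own updated state has $\hat c_v^{(b)} > c$ and $\hat c_v^{(1-b)}$ unchanged hence still $\leq c$, so Condition~1 holds for $v$ at $s'$ as well. The one point to be careful about — and what I expect to be the main obstacle — is the corner case where a node commits to deciding the \emph{wrong} value $1-b$; I must argue this cannot happen, which follows precisely because Condition~1 guarantees $\hat c_v^{(1-b)}[s] \leq c < \hat c_v^{(b)}[s]$, so the margin $\hat c_v^{(1-b)} \geq \hat c_v^{(b)} + k$ is impossible. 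Assembling these cases, every single scheduled event after $t$ preserves $(b,c)$-domination, and induction gives the claim.
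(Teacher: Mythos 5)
Your proof is correct and takes essentially the same route as the paper's: induction over the events following $t$, a case split on $crash$/$recv$/$ack$ events, and the key observation that the inductive hypothesis forces $\hat c_v^{(b)} > \hat c_v^{(1-b)}$ at any $ack$ event, so the node must adopt $b$ and therefore cannot inject a counter message carrying $1-b$ with a large counter. Your treatment is somewhat more exhaustive than the paper's (which only rules out the specific ways each event type could break domination), but the substance is identical.
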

\begin{proof}
In this proof, we focus on the suffix of the message schedule $msg[\alpha]$ that begins with event $t$.
For simplicity, we label these events $E_1,E_2,E_3,...$, with $E_1 = t$.
We will prove the lemma by induction on this sequence.

The base case ($E_1$) follows directly from the lemma statement. 
For the inductive step, we must show that if $\alpha$ is $(b,c)$-dominated at point $E_{i}$, then it will be dominated at $E_{i+1}$ as well.
By the inductive hypothesis, we assume the execution is dominated immediately before $E_{i+1}$ occurs.
Therefore, the only way the step is violated is if $E_{i+1}$ transitions the system from dominated to non-dominated status.
We consider all possible cases for $E_{i+1}$ and show none of them can cause such a transition.

The first case is if $E_{i+1}$ is a $crash_u$ event for some node $u$. It is clear that a crash cannot transition a system into non-dominated status.

The second case is if $E_{i+1}$ is a $recv(m)_u$ event for some node $u$. This event can only transition the system into a non-dominated status
if $m$ is a counter message that includes $1-b$ and a counter $c' > c$.
For $u$ to receive this message, however, means that the message was in transit immediately before $E_{i+1}$ occurs.
Because we assume the system is dominated at $E_i$, 
however, no such message can be in transit at this point (by condition $2$ of the domination definition).

The third and final case is if $E_{i+1}$ is a $ack(m)_u$ event for some node $u$, that is combined with a $bcast(m')_u$
event, where $m'$ is a counter message that includes $1-b$ and a counter $c' > c$.
Consider the values $\hat c_u^{(b)}$ and $\hat c_u^{(1-b)}$ set
by node $u$ early in the steps associated with this $ack(m)_u$ event.
By our inductive hypothesis, 
which tells us that the execution is dominated right before this $ack(m)_u$ event occurs,
it must follow that $\hat c_u^{(b)} > \hat c_u^{(1-b)}$ (as $\hat c_u^{(b)} = \hat c_u^{(b)}[E_{i}]$
and $\hat c_u^{(1-b)} = \hat c_u^{(1-b)}[E_{i}]$).
In the steps that immediately follow, therefore,
node $u$ will set $v_u \gets b$.
It is therefore impossible for $u$ to then broadcast a counter message with value $v_u = 1-b$.
\end{proof}

To prove agreement, we are left to show that if a node commits to deciding some value $b$, then it must be the case
that $b$ dominates the execution at this point---making it the only possible decision going forward.
The following helper lemma, 
which captures a useful property about counters,
will prove crucial for establishing this point.

\begin{lemma}
Assume event $t$ in the message schedule of execution $\alpha$ is combined with a $bcast(m)_v$, where $m=(counter, id_v, c,b,n_v)$,
for some counter $c>0$. It follows that prior to $t$ in $\alpha$, every node that is non-crashed at $t$ received a counter message
with counter $c-1$ and value $b$.
\label{lem:inc}
\end{lemma}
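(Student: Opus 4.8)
The plan is to prove the lemma by strong induction on the index of $t$ in the message schedule $msg[\alpha]$. Write $E_i = t$ and let $v$ be the broadcaster. Broadcasting $(counter, id_v, c, b, n_v)$ at $E_i$ means that $E_i$ is an $ack(m)_v$ event reaching the ``$newm = \bot$'' branch, that $v$ ends $E_i$ with $c_v = c$ and $v_v = b$, and --- crucially --- that $v$ has never committed to a decision before $E_i$ (else it would have halted and could not broadcast a counter message now). Consequently all of $v$'s $ack$-events up to and including $E_i$ actually execute the counter-update step, and I would first record two structural facts about this run, together with the bookkeeping invariant that $v$'s own entry in $C_v$ always equals $(id_v, c_v, v_v)$ and is never removed: (i) $c_v$ is non-decreasing across these events (each update either increments $c_v$, raises it to $\max\{\hat c_v^{(0)}, \hat c_v^{(1)}\}$ in the branch where that quantity exceeds $c_v$, or leaves it unchanged); and (ii) whenever $v_v$ flips during step~6, the update in step~8 strictly increases $c_v$, since a flip requires $\hat c_v^{(1-v_v)} > \hat c_v^{(v_v)}$ and $\hat c_v^{(v_v)}$ is at least the counter $v$ currently records for $id_v$, namely $c_v$.

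Next I would trace the value $c$ to its origin. Let $E_q$ (with $q \le i$) be the first $ack$-event of $v$ at whose end $c_v = c$; by (i), every $ack$-event of $v$ between $E_q$ and $E_i$ also ends with $c_v = c$, so by (ii) $v_v$ never flips on this stretch and hence equals $b$ throughout, in particular at $E_q$. At $E_q$ the counter jumps from a value $< c$ up to $c$, which can happen in only two ways. If it happens by the increment rule, then $c_v$ equalled $c-1$ just before $E_q$ and the message $m'$ acknowledged at $E_q$ is not $nop$ (the increment guard demands this), hence $m'$ is a counter message; its counter field is $v$'s counter at the moment $m'$ was sent, namely $c-1$, and its value field is $v_v$ at that moment, which equals $b$ because the increment guard $\max\{\hat c_v^{(0)}, \hat c_v^{(1)}\} \le c-1$ together with $v$'s own entry $(id_v, c-1, \cdot)$ in $C_v$ forbids a flip at $E_q$; so $m' = (counter, id_v, c-1, b, \cdot)$. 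The model's acknowledgment guarantee then finishes this case: $ack(m')_v = E_q$ is scheduled only after every node other than $v$ that was non-crashed when $bcast(m')_v$ was invoked has received $m'$ (or crashed), and any node non-crashed at $t \ge E_q$ did not crash, so it received $m'$ before $E_q \le t$ (and $v$ itself sent $m'$). If instead $c_v$ jumps by being raised to $\max\{\hat c_v^{(0)}, \hat c_v^{(1)}\}$, then $c = \hat c_v^{(b)}$ (step~6 makes the winning value $b$) and $c$ strictly exceeds the counter stored in $v$'s own entry, so the entry of $C_v$ witnessing $\hat c_v^{(b)} = c$ is some $(id', c, b)$ with $id' \ne id_v$. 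That entry can only have been installed by $v$ receiving the message $(counter, id', c, b, \cdot)$, which was broadcast by the node with UID $id'$ at an event $E_\ell < E_q \le t$ combined with that broadcast. Applying the induction hypothesis at $E_\ell$ then gives that every node non-crashed at $E_\ell$ --- hence every node non-crashed at $t$ --- received a counter message with counter $c-1$ and value $b$ before $E_\ell < t$.

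I expect the main obstacle to be exactly what forces this trace-back: a node that acknowledges a $nop$ and whose update leaves $c_v$ unchanged simply re-emits its current counter, so there may be no event local to $t$ that ``created'' the value $c$, and nothing can be read off directly at $t$. Making the trace-back clean is what the two monotonicity facts buy: (i) makes $E_q$ well-defined and unique, and (ii) guarantees the proposal still reads $b$ the whole way from $E_q$ to $t$, which is what lets the increment/raise analysis at $E_q$ conclude about value $b$ rather than some stale proposal. A cosmetic caveat: for the broadcaster $v$ itself the conclusion is really ``$v$ has sent or received such a message'' (in the increment case $v$ may only have sent it); this is harmless, since the lemma is invoked only to reason about the \emph{other} non-crashed nodes and the messages still in transit.
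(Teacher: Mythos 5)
Your proof is correct, but it is organized quite differently from the paper's. The paper avoids induction entirely: it takes a global extremal event --- the first event $t'$ in the whole execution at which \emph{any} node $w$ ends an event with $c_w \ge c$ and $v_w = b$ --- and shows directly that at $t'$ the value cannot have just flipped and the ``raise to max'' branch cannot have fired (either option would force some node to have reached counter $\ge c$ with value $b$ even earlier, contradicting minimality), so the increment branch fired on an acknowledged message $(counter, id_w, c-1, b, *)$, at which point the model's acknowledgment guarantee finishes. Your argument reaches the same crucial mechanism (increment from $c-1$ plus the ack guarantee) but gets there by strong induction on the event index combined with a local trace-back through $v$'s own history: your monotonicity facts (i) and (ii) pin down the event $E_q$ where $v$'s counter first reaches $c$ with value $b$, your increment case reproduces the paper's core step, and your max-jump case is discharged by recursing to the originating broadcaster via the induction hypothesis --- exactly the case the paper kills outright by its extremal choice of $t'$. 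The paper's version is shorter; yours makes the provenance of the value $c$ explicit and isolates two reusable invariants (counters are non-decreasing; a proposal flip forces a strict counter increase). Your closing caveat is accurate and, notably, applies verbatim to the paper's own proof: the node that broadcast the $(c-1,b)$ counter message has only \emph{sent} such a message, not received one, so the lemma as literally stated is off by ``sent or'' for that single node; the paper silently absorbs this imprecision (its Claim~2 in the proof of the safety theorem indeed reads ``sent or received'').
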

\begin{proof}
Fix some $t$, $\alpha$, $v$ and $m=(counter, id_v, c,b,n_v)$, as specified by the lemma statement.
Let $t'$ be the first event in $\alpha$ such that at $t'$ some node $w$ has local counter $c_w \geq c$ and value $v_w = b$.
We know at least one such event exists as $t$ and $v$ satisfy the above conditions, so the earliest such event, $t'$, is well-defined.
Furthermore, because $t'$ must modify local counter and/or consensus values, it must also be an $ack$ event.

For the purposes of this argument, let $c_w$ and $v_w$ be $w$'s counter and consensus value, respectively, immediately before $t'$ is scheduled.
Similarly, let $c_w'$ and $v_w'$ be these values immediately after $t'$ and its steps complete (i.e., these values at point $t'$).
By assumption: $c_w' \geq c$ and $v_w'=b$.
We proceed by studying the possibilities for $c_w$ and $v_w$ and their relationships with $c_w'$ and $v_w'$.

We begin by considering $v_w$.
We want to argue that $v_w=b$.
To see why this is true, assume for contradiction that $v_w=1-b$.
It follows that early in the steps for $t'$, node $w$ switches its consensus value from $1-b$ to $b$.
By the definition of the algorithm,
it only does this if at this point in the $ack$ steps:
 $\hat c_w^{(b)} > \hat c_w^{(1-b)} \geq c_w$ (the last term follows because $c_w$ is included in the values considered when defining $c_w^{(1-b)}$).
Note, however, that $c_w^{(b)}$ must be less than $c$. If it was greater than or equal to $c$,  
this would imply that a node ended an earlier event with counter $\geq c$ and value $b$---contradicting our assumption that $t'$ was the earliest such event.
If $c_w^{(b)} < c$ and $c_w^{(b)} > c_w$,
then $w$ must increase its $c_w$ value during this event.
But because  $\hat c_w^{(b)} > \hat c_w^{(1-b)}\geq c_w$, the only allowable change to $c_w$ would be to set it to $\hat c_w^{(b)} < c$.
This contradicts the assumption that $c_w' \geq c$.

At this checkpoint in our argument we have argued that $v_w=b$.
We now consider $c_w$.
If $c_w \geq c$, then $w$ starts $t'$ with a sufficiently big counter---contradicting the assumption
that $t'$ is the earliest such event.
It follows that $c_w < c$ and $w$ must increase this value during this event.

There are two ways to increase a counter; i.e., the two conditions in the {\em if/else-if} statement that follows the $newm = \bot$ check.
We start with the second condition.
If $\max\{\hat c_w^{(b)}, \hat c_w^{(1-b)}\} > c_w$, then $w$ can set $c_w$ to this maximum.
If this maximum is equal to $\hat c_w^{(b)}$, then this would imply $\hat c_w^{(b)} \geq c$.
As argued above, however, it would then follow that a node had a counter $\geq c$ and value $b$ before $t'$.
If this is not true, then $\hat c_w^{(1-b)} > c_w^{(b)}$.
If this was the case, however, $w$ would have adopted value $1-b$ earlier in the event,
contradicting the assumption that $v_w' = b$.

At this next checkpoint in our argument we have argued that $v_w =b$, $c_w < c$,
and $w$ increases $c_w$ to $c$ through the first condition of the {\em if/else if};
i.e., it must find that $\max\{\hat c_w^{(b)}, \hat c_w^{(1-b)}\} \leq c_w$ and $m\neq nop$.
Because this condition only increases the counter by $1$, we can further refine our assumption to $c_w = c-1$.

To conclude our argument, consider the implications of the  $m\neq nop$ component of this condition.
It follows that $t'$ is an $ack(m)_w$ for an actual message $m$.
It cannot be the case that $m$ is a $decide$ message,
as $w$ will not increase its counter on acknowledging a $decide$.
Therefore, $m$ is a counter message.
Furthermore, because counter and consensus values are not modified after broadcasting a counter message but before receiving its subsequent acknowledgment,
we know $m=(counter, id_w, c_w,v_w,*) = (counter, id_w, c-1, b,*)$ (we replace the network size estimate with a wildcard here as these estimates could
change during this period).

Because $w$ has an acknowledgment for this $m$,
by the definition of the model, prior to $t'$: every non-crashed node received a counter message with counter $c-1$ and consensus value $b$.
This is exactly the claim we are trying to prove.
\end{proof}

Our main safety theorem leverages the above two lemmas to establish that committing to decide $b$ means that $b$ dominates the execution. The key idea is that counter values cannot become too stale.
By Lemma~\ref{lem:inc}, if some node has a counter $c$ associated with proposal value $1-b$, then
all nodes have seen a counter of size at least $c-1$ associated with $1-b$.
It follows that if {some} node thinks $b$ is far ahead, then all nodes must think $b$ is far ahead in the race (i.e.,
$b$ dominates). Lemma~\ref{lem:dom} then establishes that this dominance is permanent---making $b$ the only
possible decision value going forward.

\begin{theorem}
The Counter Race Consensus algorithm satisfies validity and agreement.
\label{safety}
\end{theorem}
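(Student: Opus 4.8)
The plan is to treat the two properties separately: validity via a minimal-counterexample argument, and agreement by showing that the first decision commitment made through the ``margin'' rule permanently freezes the winning value, using Lemma~\ref{lem:inc} (counters cannot lag) and Lemma~\ref{lem:dom} (domination persists).

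For validity, suppose toward a contradiction that some node decides a value $b$ even though no node has initial value $b$. A node decides $b$ only after some node has broadcast a counter message carrying value $b$, which in turn requires that broadcasting node to hold proposal $b$ at that moment; since no node starts with $b$, there is therefore a well-defined first event $E$ at which some node $u$ sets $v_u \gets b$. This assignment requires $\hat c_u^{(b)} > \hat c_u^{(1-b)} \ge 0$, hence $\hat c_u^{(b)} \ge 1$. Since $u$'s own entry in $C_u$ still carries its previous proposal $1-b$ at this instant, the entry witnessing $\hat c_u^{(b)} \ge 1$ must have arrived in a counter message $(counter, id', c', b, n')$ received by $u$ strictly before $E$; but $id'$ held proposal $b$ when it broadcast that message, so an assignment of $b$ to some proposal occurred strictly before $E$ --- contradicting minimality. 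Hence no node ever adopts $b$, and validity follows.

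For agreement, the core claim is: if node $u$ commits to decide $b$ at an event $t$ because $\hat c_u^{(b)}[t] \ge \hat c_u^{(1-b)}[t] + 3$ fires, then, writing $a := \hat c_u^{(b)}[t] \ge 3$ and $c := a - 2$, the execution is $(b,c)$-dominated at $t$ in the sense of Definition~\ref{def:dom}. I would verify its three requirements using Lemma~\ref{lem:inc}. (i) Since $u$'s view records counter $a$ with value $b$, some node's local counter reached $a$ with proposal $b$ no later than $t$; applying Lemma~\ref{lem:inc} at the earliest such event shows every node non-crashed at $t$ has already received a counter message with counter $a-1$ and value $b$, so $\hat c_w^{(b)}[t] \ge a-1 > c$ for each such $w$. (ii) If some non-crashed $w$ had $\hat c_w^{(1-b)}[t] \ge a-1$, Lemma~\ref{lem:inc} would force $u$ to have received a counter message with counter $a-2$ and value $1-b$ by $t$, giving $\hat c_u^{(1-b)}[t] \ge a-2$, contradicting $\hat c_u^{(1-b)}[t] \le a-3$; thus $\hat c_w^{(1-b)}[t] \le c$. (iii) An in-transit $(counter, id, 1-b, c', n')$ at $t$ with $c' \ge a-1$ would, via its broadcast event and Lemma~\ref{lem:inc}, force the same contradiction, so $c' \le c$.

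Finally I would assemble the pieces. Let $t^\ast$ be the earliest event at which any node commits to decide via the margin rule, with value $b^\ast$; if no such event exists, no $(decide,\cdot)$ message is ever sent and no node decides, so agreement is vacuous. Otherwise, by the core claim the execution is $(b^\ast, c^\ast)$-dominated at $t^\ast$, and by Lemma~\ref{lem:dom} it remains $(b^\ast,c^\ast)$-dominated at every later point. A later margin commitment to a value $b'$ at an event $t'$ requires $\hat c_w^{(b')}[t'] \ge \hat c_w^{(1-b')}[t'] + 3$, which for $b' \neq b^\ast$ is impossible because domination gives $\hat c_w^{(b')}[t'] \le c^\ast < \hat c_w^{(b^\ast)}[t']$; and every $(decide, b')$ message originates, through a finite chain of such messages, at some margin commitment. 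Hence every decision carries value $b^\ast$, which is agreement. The step I expect to be the main obstacle is parts (ii)--(iii): certifying that a ``losing'' counter cannot be stale by more than one in a model where messages are delivered non-atomically and crashes may truncate a broadcast. This is precisely the role of Lemma~\ref{lem:inc}, so the real work is threading its hypotheses correctly --- in particular, identifying for each counter value appearing in $u$'s view or in transit an actual earliest event at which some node's local counter attains that value, so that Lemma~\ref{lem:inc} (whose proof is really about that earliest event) applies.
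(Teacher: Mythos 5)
Your proof is correct and follows essentially the same route as the paper's: Lemma~\ref{lem:inc} is used to show that a margin commitment to $b$ forces $(b,c)$-domination at that event (this is exactly the paper's Claims 1 and 2, with the paper choosing $c = \hat c_u^{(1-b)}+1$ where you choose $\hat c_u^{(b)}-2$, a cosmetic difference), and Lemma~\ref{lem:dom} then makes that domination permanent so no conflicting commitment can ever fire. Your only additions are an explicit minimal-counterexample argument for validity (which the paper treats as immediate from the code) and the explicit tracing of every $(decide,\cdot)$ message back through a finite chain to a margin commitment --- both are sound and, if anything, slightly more careful than the paper's wording.
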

\begin{proof}
Validity follows directly from the definition of the algorithm.
To establish agreement, fix some execution $\alpha$ that includes at least one decision.
Let $t$ be the first $ack$ event in $\alpha$ that is combined with a broadcast of a $decide$ message.
We call such a step a {\em pre-decision} step as it prepares nodes to decide in a later step.
Let $u$ be the node at which this $ack$ occurs and $b$ be the value it includes in the $decide$ message.
Because we assume at least one process decides in $\alpha$, we know $t$ exists.
We also know it occurs before any decision. 

During the steps associated with $t$, 
$u$ sets $newm\gets (decide,b)$.
This indicates the following is true: $\hat c_u^{(b)} \geq \hat c_u^{(1-b)} + 3.$
Based on this condition, we establish two claims about the system at $t$, expressed with respect to the value $\hat c_u^{(1-b)}$ during these steps:

\begin{itemize}

 \item {\em Claim 1.} The largest counter included with value $1-b$ in a counter message broadcast\footnote{Notice, in these claims,
 when we say a message is ``broadcast" we only mean that the corresponding $bcast$ event occurred. We make no assumption on which nodes
 have so far received this message.} before $t$ is no more than $\hat c_u^{(1-b)} + 1$.
 
 Assume for contradiction that before $t$ some $v$ broadcast a counter message with value $1-b$ and counter $c > \hat c_u^{(1-b)} + 1$.
 By Lemma~\ref{lem:inc}, it follows that before $t$ every non-crashed node receives a counter message with value $1-b$ and counter $c-1 \geq \hat c_u^{(1-b)} + 1$.
 This set of nodes includes $u$. This contradicts our assumption that at $t$ the largest counter $u$ has seen associated with $1-b$ is $\hat c_u^{(1-b)}$.
 
 \item {\em Claim 2.} Before $t$, every non-crashed node has sent or received a counter message with value $b$ and counter at least $\hat c_u^{(1-b)}+2$.
 
 By assumption on the values $u$ has seen at $t$, we know that before $t$ some node $v$ broadcast a counter message with value $b$ and counter $c \geq \hat c_u^{(1-b)}+3$.
 By Lemma~\ref{lem:inc}, it follows that before $t$, every node has sent or received a counter with value $b$ and counter $c-1 \geq \hat c_u^{(1-b)}+2$.
 
 \end{itemize}
 
Notice that claim 1 combined with claim 2 implies that the execution is $(b,\hat c_u^{(1-b)}+1)$-dominated before $t$.
By Lemma~\ref{lem:dom}, the execution will remain dominated from this point forward.  
We assume $t$ was the first pre-decision, and it will lead $u$ to tell other nodes to decide $u$ before doing so itself.
Other pre-decision steps might occur, however, before all nodes have received $u$'s preference for $b$.
With this in mind, let $t'$ be any other pre-decision step.
Because $t'$ comes after $t$ it will occur in a $(b,\hat c_u^{(1-b)}+1)$-dominated system.
This means that during the first steps of $t'$, the node will adopt $b$ as its value (if it has not already done so),
meaning it will also promote $b$.

To conclude, we have shown that once any node reaches a pre-decision step for a value $b$, 
then the system is already dominated in favor of $b$, and
therefore $b$ is the only possible decision value going forward. Agreement follows directly.
\end{proof}
\subsection{Liveness}

We now turn our attention liveness. Our goal is to prove the following theorem:
 
\begin{theorem}
With high probability, within $O(n^3\ln{n})$ scheduled $ack$ events, every node executing counter race consensus has either crashed, decided, or received a $decide$ message. In the limit, this termination condition occurs with probability $1$.
\label{live:thm:main}
\end{theorem}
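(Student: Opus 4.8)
The plan is to reduce the claim to a single event --- that some non-crashed node \emph{commits to a decision}, meaning it executes an $ack$ step in which it sets $newm\gets(decide,b)$ --- and then to bound the number of acks until this event. For the reduction: once a node commits to $b$ it broadcasts only $(decide,b)$ thereafter, and the model guarantees the recv events for such a broadcast are scheduled before its ack; moreover any node that receives $(decide,b)$ subsequently broadcasts only $(decide,b)$ as well. Since at most $n$ broadcasts are outstanding at any moment, within $O(n)$ acks after the first commit every non-crashed node has received a $decide$ message (or decided), which is the termination condition. So it suffices to show that with high probability some node commits within $O(n^3\ln n)$ acks, and that in the limit some node commits almost surely.

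The structural ingredient I would prove is: \emph{if there is a stretch of consecutive acks during which the only non-$nop$ counter messages broadcast are broadcast by a single node $v$, and this stretch contains at least a constant number of $v$'s own acks, then $v$ commits within it.} The point is that, before any commit, the largest counters attached to value $0$ and to value $1$ anywhere in the system differ by at most $2$ --- otherwise, by Lemma~\ref{lem:inc}, the node holding the larger one would see a gap of $\geq 3$ and commit. Hence at the start of such a stretch there are at most $n-1$ in-transit counter messages and all of their counters lie in an $O(1)$-sized window. As $v$ receives them its ``winning'' value can flip only when it receives a counter strictly larger than its current winner's, so it flips only $O(1)$ times and then stabilizes; from then on, $v$ being the sole counter-broadcaster, every one of its $ack$ steps has $\max\{\hat c_v^{(0)},\hat c_v^{(1)}\}=c_v$ and the acknowledged message is a genuine counter message, so $v$ increments $c_v$ by one while the losing value stays frozen, and after three increments $v$ commits.

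The core is the probabilistic argument, where the essential leverage is that the scheduler is \emph{message-oblivious}: until the first commit, every node broadcasts on every ack and no node halts, so the entire message schedule up to the first commit is a deterministic function of the scheduler's strategy and is independent of the nodes' $active$ coins. I would first establish \emph{estimate stabilization}: within the first $O(n)$ acks, either some node commits or every node has $n_u=\Omega(n)$ --- using that initial $nop$ broadcasts carry node IDs, that estimates are monotone and bounded by $n$, and that a node that keeps receiving acks has heard the initial broadcast of every node that has itself received an ack. Fix such a point; thereafter $\Pr[\text{node }w\text{ active in a given group}]=1/n_w=\Theta(1/n)$, independently across $w$'s groups. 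Now consider a horizon of $cn^3\ln n$ acks: some node $v$ receives at least $cn^2\ln n$ of them by pigeonhole, hence $\Omega(n^2\ln n)$ disjoint blocks of $12$ consecutive $v$-acks (two of $v$'s groups), and by averaging a constant fraction of these blocks span only $O(n)$ total acks. For such a block, the probability that $v$ is active in both of its groups is $(1/n_v)^2=\Omega(1/n^2)$, and conditioned on that, the probability that every one of the $O(n)$ interleaved acks lands on a node that is inactive at that moment is $\prod(1-1/n_w)\geq(1-O(1/n))^{O(n)}=\Omega(1)$, where estimate stabilization is exactly what keeps each factor close to $1$. When this occurs, $v$ is the sole counter-broadcaster over a stretch containing $12$ of its acks, so by the structural lemma $v$ commits. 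Across these $\Omega(n^2\ln n)$ trials, with probability $1-\exp(-\Omega(\ln n))=1-n^{-\Omega(1)}$ at least one succeeds. For the limit statement, partition an infinite schedule into consecutive horizons of this length; from the second horizon on, estimates are already stabilized and the coins are fresh, so each horizon independently reaches termination with probability $\geq 1-n^{-\Omega(1)}$, and hence termination occurs almost surely.

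I expect the main obstacle to be making the trials in the last paragraph sufficiently independent. The ``$v$ active in both groups'' events for distinct blocks are genuinely independent (disjoint coins of $v$), but the ``all interleaved acks inactive'' events involve coins of other nodes whose groups can straddle several blocks, so a single adaptive schedule can couple them. I would handle this either by selecting a sub-collection of $\Omega(n^2\ln n)$ blocks spread out enough to use disjoint coins, or by replacing the union bound with a second-moment / martingale argument over the schedule-determined block structure. A second, more routine, obstacle is the bookkeeping in the structural lemma: tracking $v$'s $\hat c$ values, its proposal $v_u$, and $c_v$ through the absorption of in-transit messages to pin down the $O(1)$ bound on value-flips and to confirm that each post-stabilization ack really does increment $c_v$.
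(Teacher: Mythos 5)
Your overall architecture mirrors the paper's: isolate a stretch in which one node is the sole non-\emph{nop} broadcaster (the paper's ``clean group''), exploit the message-oblivious scheduler to decouple the schedule from the $active$ coins, and restore independence by selecting blocks with disjoint coins. However, there is a genuine gap in your \emph{estimate stabilization} step, and it is not a repairable detail but the place where the paper's main technical idea lives. The claim ``within the first $O(n)$ acks, either some node commits or every node has $n_u=\Omega(n)$'' is false. The theorem imposes no fairness condition: the scheduler may concentrate all $O(n)$ (indeed all $O(n^3\ln n)$) acks on two or three nodes while leaving every other node's initial broadcast undelivered. A node's estimate $n_u$ grows only through $|peers|$ and through estimates reported by others, and everyone's $|peers|$ is bounded by the number of \emph{distinct} nodes that have received an ack (plus itself); the scheduler controls that number and can hold it at $2$. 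In that regime no node commits deterministically within $O(n)$ acks --- commitment is a random event --- so neither horn of your dichotomy holds. Your subsequent computation $\prod_w(1-1/n_w)\ge(1-O(1/n))^{O(n)}=\Omega(1)$ genuinely needs every interleaved node to have estimate $\Omega(n)$; with a reservoir of nodes at estimate $2$ the product collapses, and the scheduler can introduce such nodes adaptively throughout the execution, so there is no single ``stabilization point'' to fix.

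The paper escapes this with a local, not global, condition it calls \emph{calibration}: a target group for $u$ only requires that the nodes which actually deliver messages to $u$ during that group have estimates at least $|P_\gamma|$, the number of such senders --- not $\Omega(n)$. Then $\prod_{v\in P_\gamma}(1-1/n_v)^3\ge(1-1/n_\gamma)^{3n_\gamma}\ge 1/64$ regardless of how small $n_\gamma$ is, giving cleanliness probability $\Omega(1/n)$ per calibrated group. Crucially, each \emph{failure} of calibration forces the global minimum estimate to rise (because $u$ rebroadcasts an estimate of at least $|P_\gamma|$ and everyone receives it before $u$'s next ack), so at most $n$ target groups in the whole execution are uncalibrated, and these are simply discarded from the $\Omega(n\ln n)$ candidates. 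You would need to replace your stabilization step with an amortized argument of this kind. Two smaller points: your reduction paragraph asserts delivery of the decide message ``within $O(n)$ acks after the first commit,'' but the scheduler can starve the committing node's ack indefinitely; you need the committing node to receive one more ack \emph{inside} the analyzed block (your $12$-ack blocks do leave room for this, as do the paper's length-$c=k+3$ groups). And your $O(1)$ bound on value flips from absorbed in-transit messages rests on a ``counters lie in a width-$2$ window before any commit'' claim that needs a careful argument via Lemma~\ref{lem:inc}; the paper sidesteps this entirely by defining clean groups in terms of what $u$ \emph{receives} rather than what is broadcast.
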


Notice that this theorem does not require a fair schedule. It guarantees its termination criteria (with high probability)
after {\em any} $O(n^3\ln{n})$ scheduled $ack$ events, regardless of {\em which} nodes these events occur at.
Once the system arrives at a state in which every node has either crashed, decided, or received a $decide$ message,
the execution is now univalent (only one decision value is possible going forward), and each non-crashed node $u$ will decide after at most two additional $ack$ events at 
$u$.\footnote{In the case where $u$ receives a $decide$ message, the first $ack$ might correspond to the message
it was broadcasting when the $decide$ arrived, and the second $ack$ corresponds to the $decide$ message 
that $u$ itself will then broadcast. During this second $ack$, $u$ will decide and halt.}

Our liveness proof is longer and more involved than our safety proof. This follows, in part, from the need to
introduce multiple technical definitions to help identify the execution fragments sufficiently well-behaved for
us to apply our probabilistic arguments.
With this in mind, we divide the presentation of our liveness proof into two parts.
The first part introduces the main ideas of the analysis and provides a road map of sorts to its component pieces.
The second part, which contains the details, can be found in the full paper \cite{fullpaper}.

\subsubsection{Main Ideas}

Here we discuss the main ideas of our liveness proof.
A core definition used in our analysis is the notion of an {\em $x$-run}.
Roughly speaking, for a given constant integer $x \geq 2$ and node $u$, we say an execution fragment $\beta$ is an $x$-run for some node $u$, 
if it starts and ends with an $ack$ event for $u$, it contains $x$ total $ack$ events for $u$, and no other node has more than $x$ 
$ack$ events interleaved. We deploy a recursive counting argument to establish that an execution fragment $\beta$ that contains at least $n\cdot x$
total $ack$ events, must contain a sub-fragment $\beta'$ that is an $x$-run for some node $u$.

To put this result to use,  we focus our attention on $(2c+1)$-runs, where $c=6$ is the constant used in the algorithm definition
to define the length of a {\em group} (see Section~\ref{sec:slow:alg} for a reminder of what a group is and how it is used by the algorithm).
A straightforward argument establishes that a $(2c+1)$-run for some node $u$
must contain at least one {\em complete group} for $u$---that is, it must contain all $c$ broadcasts of one of $u$'s groups.

Combining these observations, it follows that if we partition an execution into {\em segments} of length $n\cdot(2c+1)$,
each such segment $i$ contains a $(2c+1)$-run for some node $u_i$,
and each such run contains a complete group for $u_i$.
We call this complete group the {\em target group} $t_i$ for segment $i$ (if there are multiple complete groups in the run, 
choose one arbitrarily to be the target).

These target groups are the core unit to which our subsequent analysis applies.
Our goal is to arrive at a target group $t_i$ that is {\em clean} in the sense that $u_i$ is $active$ during the group (i.e., sends its actual values instead 
of $nop$ placeholders), and all broadcasts that arrive at $u$ during this group come from {\em non-active} nodes (i.e., these received messages
contain $nop$ placeholders instead of values). If we achieve a {\em clean} group, then it is not
hard to show that $u_i$ will advance its counter at least $k$ ahead of all other counters, pushing all other nodes into the termination criteria
guaranteed by Theorem~\ref{live:thm:main}.

To prove clean groups are sufficiently likely, our analysis must overcome two issues.
The first issue concerns network size estimations.
Fix some target group $t_i$. 
Let $P_i$ be the nodes from which $u_i$ receives at least one message during $t_i$.
If all of these nodes have a network size estimate of at least $n_i = |P_i|$ at the start of $t_i$,
we say the group is {\em calibrated.}
We prove that if $t_i$ is calibrated, then it is clean with a probability in $\Omega(1/n)$.

The key, therefore, is  proving most target groups are calibrated.
To do so, we note that if some $t_i$ is not calibrated, it means at least one node used an estimate
strictly less than $n_i$ when it probabilistically defined $active$ at the beginning of this group. 
During this group, however, all nodes will receive broadcasts from at least
$n_i$ unique nodes, increasing all network estimates to size at least $n_i$.\footnote{This summary is eliding
some subtle details tackled in the full analysis concerning which broadcasts are guaranteed to be received during a target group.
But these details are not important for understanding the main logic of this argument.}
Therefore, each target group that fails to be calibrated increases the minimum network size estimate in the system by
at least $1$. It follows that at most $n$ target groups can be non-calibrated.

The second issue concerns probabilistic dependencies. 
Let $E_i$ be the event that target group $t_i$ is clean and $E_j$ be the event that some other target group $t_j$ is clean.
Notice that $E_i$ and $E_j$ are not necessarily independent. If a node $u$ has a group that overlaps both $t_i$ and $t_j$,
then its probabilistic decision about whether or not to be active in this group impacts the potential cleanliness of both $t_i$ 
and $t_j$. 

Our analysis tackles these dependencies by identifying a subset of target groups that are pairwise independent. 
To do so, roughly speaking, we process our target groups in order. Starting with the first target group, we mark as
unavailable any future target group that overlaps this first group (in the sense described above). We then proceed until we arrive at the next target
group {\em not} marked unavailable and repeat the process.
Each available target group marks at most $O(n)$ future groups as unavailable. 
Therefore, given a sufficiently large set $T$ of target groups,
we can identify a subset $T'$, with a size in $\Omega(|T|/n)$, such that all groups in $T'$ are pairwise independent.

We can now pull together these pieces to arrive at our main liveness complexity claim.
Consider the first $O(n^3\ln{n})$ $ack$ events in an execution.
We can divide these into $O(n^2\ln{n})$ segments of length $(2c+1)n \in \Theta(n)$.
We now consider the target groups defined by these segments.
By our above argument, there is a subset $T'$ of these groups, where
$|T'| \in  \Omega(n\ln{n})$, and all target groups in $T'$ are mutually independent.
At most $n$ of these remaining  target groups are not calibrated. 
If we discard these, we are left with a slightly smaller set, of size still $\Omega(n\ln{n})$,
that contains only calibrated and pairwise independent target groups.

We argued that each calibrated group has a probability in $\Omega(1/n)$ of being clean.
Leveraging the independence between our identified groups, 
a standard concentration analysis establishes with high probability in $n$
that at least one of these $\Omega(n/\ln{n})$ groups  is clean---satisfying the Theorem statement.

\onlyLong{
\subsubsection{Full Analysis}

Our proof of Theorem~\ref{live:thm:main} proceeds in two steps. The first step introduces useful notation for describing parts of message schedules,
and proves some deterministic properties regarding these concepts.
The second step leverages these definitions and properties in making the core probabilistic arguments.

\paragraph{Definitions and Deterministic Properties}
Each node  keeps a counter called $phase$.
This counter is initialized to $0$ and is incremented with each $ack$ event. 
Given a message schedule and node $u$, we
can divide the schedule into {\em phases} with respect to $u$ based on $u$'s local $phase$ counter.
In more detail, label the $ack_u$ events in the schedule, $a_1,a_2,a_3...$.
For each $i \geq 1$,
we define {\em phase $i$} (with respect to $u$) to be the schedule fragment that starts with acknowledgment
$a_i$ and includes all events up to but {\em not} including $a_{i+1}$.
If no such $a_{i+1}$ exists (i.e., if $a_i$ is the last $ack_u$ event in the execution), we consider phase $i$ undefined and consider $u$ to only have $i-1$ phases in this schedule.
Notice, by our model definition, during a given phase $i$, all non-crashed nodes receive the message broadcast as part of the $ack$ that starts the phase.

We partition a given node $u$'s phases into {\em groups}, which we define with respect to the constant $c$ used in the algorithm
definition as part of the logic for resetting the nodes' $active$ flag.
In particular, we partition the phases into groups of size $c$. 
For a given node $u$,
phases $1$ to $c$ define group $1$, phases $c+1$ to $2c$ define group $2$, and, more generally,
for all $i\geq 1$, phases $(i-1)c + 1$ to $i\cdot c$ define group $i$.
Notice, by the definition of our algorithm, a node only updates its $active$ flag at the beginning of each group.
Therefore, the messages sent by a give node during a given one of its groups are either all $nop$ messages,
or all non-$nop$ messages.

We now introduce the higher level concept of a {\em run}, which will prove useful going forward.

\begin{definition}
Fix an execution $\alpha$ with corresponding message schedule $msg[\alpha]$, an integer $x\geq 2$, and a node $u$. 
We call a subsequence $\beta$ of $msg[\alpha]$ an {\em $x$-run for $u$} if it satisfies the following three properties:
\begin{compactenum}
 \item $\beta$ starts and ends with an $ack_u$ event,
 \item  $\beta$ contains $x$ total $acks$ for $u$, and 
 \item no other node has more than $x$ $acks$ in $\beta$.
\end{compactenum}
\label{live:def:x}
\end{definition}

We now show that for any $x$, any sufficiently long (defined with respect to $x$) fragment from a message schedule will contain an $x$-run for some node: 

\begin{lemma}
Fix an execution $\alpha$ and integer $x\geq 2$.
Let $\gamma$ be any subsequence of the corresponding message schedule $msg[\alpha]$ that includes
at least $n\cdot x$ $ack$ events.
There exists a subsequence $\beta$ of $\gamma$ that is an $x$-run for some node $u$.
\label{live:lem:x}
\end{lemma}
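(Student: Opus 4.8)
The plan is to avoid building the $x$-run incrementally and instead isolate a single node by an extremal choice: the node whose $x$-th acknowledgment in $\gamma$ comes as early as possible is \emph{forced} to have a clean window, because any competing node with too many acknowledgments in that window would have an even earlier $x$-th acknowledgment.

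First I would set up notation and a pigeonhole step. Treating $\gamma$ as a linearly ordered sequence of events, note that since $\gamma$ contains at least $n\cdot x$ $ack$ events and there are only $n$ nodes, some node has at least $x$ of its $ack$ events in $\gamma$; let $S$ be the (nonempty) set of such nodes. For each $v\in S$, let $\beta_v$ be the contiguous fragment of $\gamma$ beginning at the first $ack_v$ event of $\gamma$ and ending at the $x$-th $ack_v$ event of $\gamma$, and let $g(v)$ denote the position of that $x$-th $ack_v$ event in $\gamma$. By construction $\beta_v$ starts and ends with an $ack_v$ event and contains exactly $x$ $ack_v$ events, so properties (1) and (2) of Definition~\ref{live:def:x} hold automatically; only property (3) can fail.

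Now I would choose $u\in S$ minimizing $g(u)$ and show $\beta_u$ satisfies property (3), hence is an $x$-run for $u$. Suppose not: then some node $w\neq u$ has at least $x+1$ $ack_w$ events inside $\beta_u$. All of these occur at positions $\le g(u)$, so at least $x$ of $w$'s $ack$ events in $\gamma$ occur at positions $\le g(u)$; in particular $w\in S$ and its $x$-th $ack$ event in $\gamma$ occurs at position $g(w)\le g(u)$. Since $g(w)$ and $g(u)$ are the positions of two distinct events (an $ack_w$ and an $ack_u$), they cannot be equal, so $g(w)<g(u)$, contradicting the minimality of $g(u)$. Therefore no node other than $u$ has more than $x$ acknowledgments in $\beta_u$, and $\beta_u$ is the desired $x$-run.

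The only real point to be careful about is the reading of ``subsequence'': property (3) is a genuine constraint only when $\beta$ is required to be a \emph{contiguous} fragment of $\gamma$ (otherwise one could simply delete the offending $ack_w$ events), so I would state this convention explicitly at the outset, consistent with the ``interleaved'' phrasing of the informal definition. A secondary, trivial point is that the pigeonhole argument only needs $|S|\ge 1$ to make the minimization well-defined, which holds even though fewer than $n$ nodes may appear in $\gamma$ at all. No induction on $n$ and no recursion are needed; the extremal choice of $u$ resolves everything in one step.
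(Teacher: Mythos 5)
Your proof is correct, but it takes a different route from the paper's. The paper proves this lemma by iterative refinement: it picks any node $v$ with at least $x$ acknowledgments, trims $\gamma$ to the window from $v$'s first to its $x$-th acknowledgment, and, if some other node has more than $x$ acknowledgments in that window, recurses on the smaller window; termination follows because each recursive step strictly shrinks the fragment. You instead make a single extremal choice --- the node $u \in S$ whose $x$-th acknowledgment in $\gamma$ occurs earliest --- and derive a one-step contradiction: any $w$ with more than $x$ acknowledgments inside $\beta_u$ would have its $x$-th acknowledgment strictly before $g(u)$ (strictly, since position $g(u)$ is occupied by an $ack_u$ event), violating minimality. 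This buys you a shorter argument with no recursion and no termination bookkeeping, at the cost of nothing; the paper's version is arguably closer to an algorithm one could execute, but the combinatorial content is identical. Your side remarks are also sound: property (3) is only meaningful for contiguous fragments (which is how the paper's own proof and your construction both read ``subsequence''), and the pigeonhole step needs only $\lvert S\rvert \ge 1$, which holds since $n\cdot x$ acknowledgments are distributed over at most $n$ nodes.
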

\begin{proof}
 Because $\gamma$ contains $n\cdot x$ total $acks$, 
 a straightforward counting argument provides that at least one node $v$ has at least $x$ $acks$ in $\gamma$.
 Consider the the subsequence $\gamma'$ of $\gamma$ that starts with the first $ack_v$ event and ends with the $x^{th}$ such $ack_v$ event.
 (That is, we remove the prefix of $\gamma$ before the first $ack_v$ and the suffix after the $x^{th}$ $ack_v$ event.)

It is clear that $\gamma'$ satisfies the first properties of our definition of an $x$-run for $v$. 
If it also satisfies the third property (that no {\em other} node has more than $x$ $acks$ in $\gamma'$), then we are done: setting $\beta \gets \gamma'$ satisfies the lemma statement.

On the other hand, if $\gamma'$ does not satisfy the third property, there
must exist some node $u$ that has more than $x$ $ack$ events in $\gamma'$. 
In this case, we can apply the above argument recursively to $u$ and $\gamma'$, 
identifying a subsequence of $\gamma'$ that starts with the first $ack_u$ and ends after the $x^{th}$ such event.
The resulting $\gamma''$ satisfies the first two properties of the definition of an $x$-run for $u$.
If it also satisfies the third property, we are done.
Otherwise, we can recurse again on $\gamma''$.

Because each such recursive application of this argument strictly reduces the size of the subsequence (at the very least, you are trimming off the first and last $ack$),
and the original $\gamma$ has a bounded number of events, the recursion must eventually arrive at a subsequence that satisfies all three properties of the $x$-run definition.
\end{proof}

We next prove an additional useful property of $x$-runs. In particular, a $(2c+1)$-run defined for some node $u$ is long enough that it must contain all phases of at least one
of $u$'s groups. Identifying {\em complete} groups of this type will be key to the later probabilistic algorithms.

\begin{lemma}
Let $\beta$ be a $(2c+1)$-run for some node $u$.
It follows that $\beta$ contains all of the phases for at least one of $u$'s groups (i.e., a {\em complete group} for $u$). 
\label{live:lem:target}
\end{lemma}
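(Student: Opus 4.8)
The statement to prove is Lemma~\ref{live:lem:target}: a $(2c+1)$-run $\beta$ for node $u$ must contain at least one complete group for $u$. The plan is a simple pigeonhole-style argument on how $u$'s $ack_u$ events inside $\beta$ are distributed among $u$'s groups, combined with the observation that $\beta$ begins and ends with an $ack_u$ event. First I would recall that each of $u$'s groups consists of exactly $c$ consecutive phases, hence is ``delimited'' by $c$ consecutive $ack_u$ events (the ones starting phases $(i-1)c+1, \dots, ic$), and that ``$\beta$ contains all phases of group $i$'' is equivalent to saying $\beta$ contains the $ack_u$ event that starts phase $(i-1)c+1$ together with every subsequent $ack_u$ event up to and including the one that starts phase $ic+1$ (the first $ack_u$ after group $i$ ends) — i.e.\ $\beta$ contains a run of $c+1$ consecutive $ack_u$ events aligned to a group boundary.

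The core counting step: by property~2 of Definition~\ref{live:def:x}, $\beta$ contains exactly $2c+1$ $ack_u$ events, and by property~1 the first and last events of $\beta$ are both $ack_u$ events, so these $2c+1$ acknowledgments are precisely the ``endpoints'' we get to work with, with all intervening phases of $u$ lying entirely inside $\beta$. Label them $a_{j}, a_{j+1}, \dots, a_{j+2c}$ according to $u$'s global phase numbering. Among the $2c$ intervals $[a_{j+\ell}, a_{j+\ell+1})$ for $\ell = 0, \dots, 2c-1$ — each of which is a full phase of $u$ contained in $\beta$ — consider where the group boundaries fall. Group boundaries for $u$ occur at phases whose index is $\equiv 1 \pmod c$; consecutive group boundaries are exactly $c$ phases apart. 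Since we have $2c$ consecutive phases of $u$ fully inside $\beta$ (namely phases $j, j+1, \dots, j+2c-1$), and group boundaries recur every $c$ phases, this window of length $2c$ must contain at least two group-boundary phases, say the starts of groups $i$ and $i+1$; these two boundaries are $c$ phases apart, so phases $(i-1)c+1$ through $ic$ — all $c$ phases of group $i$ — lie within our window of $2c$ consecutive phases, and hence within $\beta$. That is exactly what it means for $\beta$ to contain a complete group for $u$.

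\textbf{Main obstacle.} The only delicate point is bookkeeping about endpoints: $\beta$ containing the $ack_u$ that \emph{starts} phase $(i-1)c+1$ and the $ack_u$ that starts phase $ic$ guarantees all phases up through the one beginning at $ic$ are present, but a ``complete group $i$'' as defined in the text means $\beta$ contains all $c$ broadcasts of group $i$ — i.e.\ the broadcasts combined with the $ack_u$ events starting phases $(i-1)c+1, \dots, ic$. So I must be careful to index the window so that $\beta$ genuinely contains the $c$ relevant $ack_u$ events (and their associated broadcasts), not merely $c-1$ of them; this is why having $2c+1$ acknowledgments (equivalently $2c$ full phases) rather than $2c$ is exactly what makes the argument go through, and I would state the modular-arithmetic step precisely enough to make this margin visible. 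Everything else is routine; no probabilistic reasoning is needed here since this is a purely deterministic structural lemma.
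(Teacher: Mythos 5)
Your proof is correct and follows essentially the same route as the paper's: count the $2c+1$ $ack_u$ events, conclude that $2c$ full consecutive phases of $u$ lie inside $\beta$, and then observe that any window of $2c$ consecutive phases must contain a boundary-aligned block of $c$ phases, i.e.\ a complete group. The paper simply asserts this last pigeonhole step, whereas you spell out the modular-arithmetic detail and the endpoint bookkeeping; no substantive difference.
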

\begin{proof}
Because $x = 2c+1$, $\beta$ must contain at least $2c+1$ $ack_u$ events.
It follows that it contains 
at least $2c$ of node $u$'s phases (extra final $ack$ of the $2c+1$ ensures that all of the events that define 
phase $2c$ of the run are included in the run).
Because each node $u$ group consists of $c$ phases,
any sequence of $2c$ phases must include all $c$ phase of at least one full group.
\end{proof}

We next introduce the notion of a {\em clean} group,
and establish that the occurence of a clean group guarantees that we arrive at the termination state from our main theorem.

\begin{definition}
Let $\beta$ be a complete group for some node $u$.
We say $\beta$ is {\em clean} if the following two properties are satisfied:
\begin{enumerate}
\item Node $u$ sets $active$ to $true$ at the beginning of the group described by $\beta$. 
\item For every $recv_u(m)$ event that occurs in the first $c-1$ phases of $\beta$, $m$ is a $nop$ message.
(We do not restrict the messages received during the final phase of the clean group.)
\end{enumerate}
\label{live:def:clean}
\end{definition}

\begin{lemma}
Fix some execution $\alpha$. Assume fragment $\beta$ from $\alpha$ is a clean group for some node $u$.
It follows that by the end of $\beta$ all nodes have either crashed, decided, or received a $decide$ message.
\label{live:lem:decide}
\end{lemma}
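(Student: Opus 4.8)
\emph{Proof plan.} The plan is to track node $u$'s local state across the $c=6$ phases $p_1,\dots,p_c$ of the clean group $\beta$ and show that $u$ is forced to broadcast a $(decide,\cdot)$ message during one of its first five phases; since $\beta$ is a complete group (Lemma~\ref{live:lem:target}) it contains the acknowledgment that follows that broadcast, and the model's delivery guarantee then ensures every non-crashed node has received a $decide$ message by the end of $\beta$.

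First I would pin down the structural consequences of Definition~\ref{live:def:clean}. Because $u$ runs its $active$ coin flip exactly once per group, at $a_{p_1}$, and cleanness says the outcome is $true$, $u$ is $active$ throughout $\beta$; hence every message $u$ broadcasts in $\beta$ is a counter message or a $decide$ message, never a $nop$. Also, $\beta$ containing all $c$ phases means $u$ performs the acknowledgments $a_{p_1},\dots,a_{p_c}$ without crashing, and for every $recv_u(m)$ event in phases $p_1,\dots,p_{c-1}$ we have $m = nop$.

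The heart of the argument is a short induction over $\ell = 1,\dots,4$ establishing the following, on the hypothesis that $u$ has not yet set $newm$ to a $decide$ message at any acknowledgment up to and including $a_{p_\ell}$: the proposal $u$ broadcasts in phase $p_\ell$ is a fixed value $w$, the counter it broadcasts is $\gamma + (\ell-1)$ for a fixed $\gamma \ge 0$, and $\hat c_u^{(1-w)} \le \gamma$ at each of $a_{p_1},\dots,a_{p_5}$. Three facts drive this: (i) the counter-update rule always leaves $c_u \ge \max\{\hat c_u^{(0)},\hat c_u^{(1)}\}$, so immediately after $u$'s phase-$p_1$ broadcast $\hat c_u^{(w)} = \gamma \ge \hat c_u^{(1-w)}$ and $v_u = w$; (ii) since $u$ hears only $nop$s in phases $p_1,\dots,p_4$ and only ever sends value $w$, the quantity $\hat c_u^{(1-w)}$ never changes and $\hat c_u^{(w)}$ equals the last counter $u$ broadcast; and (iii) at each of $a_{p_2},a_{p_3},a_{p_4}$ the message being acknowledged is a (non-$nop$) counter message and $\max\{\hat c_u^{(0)},\hat c_u^{(1)}\} = c_u$, so the ``$c_u \gets c_u+1$'' branch fires, advancing the counter by one and leaving $v_u = w$. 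Instantiating this at $\ell = 4$ and evaluating $u$'s decision test at $a_{p_5}$ gives $\hat c_u^{(w)} = \gamma + 3$ while $\hat c_u^{(1-w)} \le \gamma$, so $\hat c_u^{(w)} \ge \hat c_u^{(1-w)} + k$ with $k=3$ and $u$ sets $newm \gets (decide,w)$ at $a_{p_5}$.

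It remains to collect the cases. Let $\ell^\ast$ be the first phase of $\beta$ at whose acknowledgment $u$ sets $newm$ to a $decide$ message. Either $u$'s $decide$ flag was already set when $\beta$ began, or some value was already at least $k$ ahead at one of $a_{p_1},\dots,a_{p_4}$ — in which cases $\ell^\ast \le 4$ — or, by the previous paragraph, $\ell^\ast = 5$; in all cases $\ell^\ast \le 5$. Then $u$ invokes $bcast(decide,w)$ during phase $p_{\ell^\ast}$, its acknowledgment $a_{p_{\ell^\ast+1}}$ lies in $\beta$ because $\ell^\ast + 1 \le c = 6$, and by the model every node not crashed when $u$ invoked that $bcast$ has received the $decide$ message or crashed before $a_{p_{\ell^\ast+1}}$; $u$ itself decides when it processes $a_{p_{\ell^\ast+1}}$. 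Hence by the end of $\beta$ every node has crashed, decided, or received a $decide$ message. I expect the main obstacle to be making the induction in the previous paragraph airtight — in particular, verifying that $nop$-only reception keeps $\hat c_u^{(1-w)}$ frozen and keeps $\max\{\hat c_u^{(0)},\hat c_u^{(1)}\} \le c_u$ at every relevant acknowledgment, uniformly over the boundary cases for $u$'s state at the start of $\beta$ (ties $\hat c_u^{(0)} = \hat c_u^{(1)}$, the possibility $\gamma = 0$, and whatever message $u$ was awaiting an acknowledgment for at $a_{p_1}$).
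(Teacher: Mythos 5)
Your proposal is correct and follows essentially the same route as the paper's proof: track $u$'s state phase by phase, use the $nop$-only reception to freeze $\hat c_u^{(1-w)}$ while the counter for $w$ increments once per phase, conclude that the decision test fires by the ack starting phase $k+2=5$, and use the in-group acknowledgment of the resulting $decide$ broadcast to guarantee delivery to all non-crashed nodes. Your version merely makes the induction and the early-decision cases more explicit than the paper does.
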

\begin{proof}
Fix some $\alpha$, $\beta$ and $u$ as specified by the lemma statement.
Let $b$ be the consensus value $u$ adopts for the first phase of the clean group.
Because $u$ only receives $nop$ messages during all but the last phase of a clean group, we know $u$ will not change this value again in this group until (potentially)
the last phase. As we will now argue, however, it will have already decided before this last phase, so the fact that $u$ might receive values in that phase is inconsequential.

In more detail, let $\hat c_u^{(b)}$ and $\hat c_u^{(1-b)}$ be the largest counter values that $u$ has seen for $b$ and $1-b$, respectively, 
by the time it completes the $ack$ that begins the first phase.
Because we just assumed that $u$ adopts $b$ at this point, we know $\hat c_u^{(b)} \geq \hat c_u^{(1-b)}$.
Furthermore, because $u$ only receives $nop$ messages, we know that in every phase starting with phase $2$ of the group, $u$ will
either increment the counter associated $b$ or send a $decide$ message.
The largest counter associated with $1-b$ will not increase beyond $\hat c_u^{(1-b)}$ during these phases.

It follows that if $u$ has not yet sent a $decide$ message by the start of phase $k+2$, 
it will see during the $ack$ event that starts this phase that its largest counter for $b$ is $k$ larger than the largest counter for $1-b$.
Accordingly, during this phase $u$ will send a $decide$ message. During the $ack$ event that starts $k+3$,
$u$ will receive this $ack$ and decide. At this point, all other nodes have received its $decide$ message as well.
Because this is the last phase of the group, it is possible that $u$ receives non-$nop$ messages from other nodes---but at this point, this is too late to have an impact as $u$ has already
decided and halted.
(It is here that we see why $k+3$ is the right value for the group length $c$.)
\end{proof}

\paragraph{Randomized Analysis} 
In Part 1 of this analysis we introduced several useful definitions and execution properties.
These culminated with the argument in Lemma~\ref{live:lem:decide} that if we ever get a clean group in an execution,
then we will have achieved the desired termination property.
Our goal in this second part of the analysis is to leverage the tools from the preceding part 
to prove, with high probability, that the algorithm will generate a clean group after not too many $acks$ are scheduled.

\smallskip

\noindent {\em On Network Size Assumptions.}
If $n=1$, then all that is required for the single node $u$ to experience a clean group is for it to set $active$ to $true$.
By Lemma~\ref{live:lem:decide}, it will then decide and halt in the group that follows. 
By the definition of the algorithm, this occurs with probability $1/2$ at the beginning of each group,
as $u$ initializes $n_u \gets 2$, and this will never change.
Therefore: with high probability, $u$ will decide within $O(\log{n})$ groups (and therefore, $O(c\log{n})$ scheduled $acks$),
and with probability $1$, it will decide in the limit. This satisfies our liveness theorem. 
In the analysis that follows, therefore, we will assume $n>1$.

\smallskip

\noindent {\em On Independence Between the Schedule and Random Choices.}
According to our model assumptions (Section~\ref{sec:model}), the scheduler is provided no advance information
about the nodes' state or the contents of the messages they send. All the scheduler learns is the input assignment, 
and whether or not a given
node sent {\em some} message (but not the message contents) as part of the steps it takes for a given $init$ or $recv$ event. 
By the definition of our algorithm, however, until it halts, each node sends a message when initialized and after every $ack$,
regardless of its random choices or the specific contents of the messages its receives.
It follows that the scheduler learns nothing new about the nodes' states beyond their input values until the first node halts---at which point,
some additional information might be inferred.
For a node to halt, however, means it has already sent a $decide$ message and received an $ack$ for this message, meaning that we have
already satisfied the desired termination property at this point.
Accordingly, in the analysis that follows, we can treat the scheduler's choices as independent of the nodes' random choices.
This allows us to fix the schedule first and then reason probabilistically about the messages sent during the schedule,
without worrying about dependence between the schedule and those choices.\footnote{Technically speaking, in the analysis above, we imagine, without loss of generality,
that the scheduler
creates an infinite schedule that describes how it wants the execution to unfold {\em until} it learns the first node halts. At that point,
it can modify the schedule going forward. }

\smallskip

In analyzing the probability of a group ending up clean, a key property is whether or not the nodes participating in that group all have
good estimates of the network size (e.g., their $n_v$ values used in setting their $active$ flags).
We call a group with good estimates a {\em calibrated} group.
The formal definition of this property requires some care to ensure it exactly matches how we later study it:

\begin{definition}
Fix an execution $\alpha$.
Let $\beta$ be a complete group for some node $u$ in the message schedule $msg[\alpha]$.
Let $P_{\beta}$ be the set of nodes that have at least one of their messages received
by $u$ in the first $c-1$ phases of $u$'s group, let $n_{\beta} = |P_{\beta}|$, and
for each $v\in P_{\beta}$, let $t_v$ be the event in $msg[\alpha]$ that starts the node $v$ group that sends
the first of its messages received by $u$ in $\beta$.
We say that group $\beta$ is {\em calibrated} if for every $v\in P_{\beta}$:
the value $n_v$ used in event $t_v$ to probabilistically set $v$'s $active$ flag is of size at least $n_{\beta}$.
\label{live:def:cal}
\end{definition}

Notice in the above that if $P_{\beta}$ is empty than the property is vacuously true.
Another key property of calibration is that it is determined entirely by the message schedule. 
That is, given an prefix of a message schedule,
you can correctly determine the network size estimation of all nodes at the end of that prefix without needing to know anything about their input values or random choices.
This follows because network size estimates are based on two things: the number of UIDs from which you have received messages (of any type),
and other nodes' reported estimates (which are included on all message types). As argued above, the only thing impacted by the node random choices and inputs
are the types of messages they send, not {\em when} they send.

Therefore, given a message schedule and a group within the message schedule, we can determine whether or not that group is calibrated
independent of the nodes' random choices, supporting the following:

\begin{lemma}
Let $\alpha$ be a message schedule generated by the scheduler.
Let $\beta$ be a $(2c + 1)$-run for some node $u$ in $\alpha$, and $\gamma$ be a complete group
for $u$ in $\beta$.
If $\gamma$ is calibrated,
then the probability that $\gamma$ is clean is at least $1/(64n)$. 
\label{live:lem:prob}
\end{lemma}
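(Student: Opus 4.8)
The plan is to show that the event ``$\gamma$ is clean'' is implied by a small number of independent coin flips, each of which succeeds with probability at least $1/n$, and then bound the product of these probabilities. Recall from Definition~\ref{live:def:clean} that $\gamma$ is clean if (i) $u$ sets $active \gets true$ at the start of $\gamma$, and (ii) every message $u$ receives in the first $c-1$ phases of $\gamma$ is a $nop$ message. Let $P_\gamma$ be the set of nodes whose messages reach $u$ during those phases (so $n_\gamma = |P_\gamma|$), and for each $v \in P_\gamma$ let $t_v$ be the event starting the node-$v$ group that contributed the first such message. Condition (ii) is implied by the single event: every $v \in P_\gamma$ has $active = false$ throughout the group started at $t_v$, because a group is monochromatic in $nop$ vs.\ non-$nop$ (as established in Part 1), so if $v$ is inactive for the group beginning at $t_v$, every message of that group — in particular the one $u$ receives — is a $nop$.

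Next I would invoke the independence setup: by the ``independence between the schedule and random choices'' discussion, the schedule (hence $\gamma$, $P_\gamma$, and the events $t_v$) may be fixed first, and then the $active$ coin flips are independent fair/biased coins that the scheduler cannot see. Moreover I claim the relevant coin flips are distinct: the flip $u$ makes at the start of $\gamma$, and the flip each $v \in P_\gamma$ makes at $t_v$. The $t_v$'s are distinct groups for distinct $v$'s (one per node), and $u$'s flip at the start of $\gamma$ is its own group's flip. (A subtle point to check: $u \notin P_\gamma$ — $u$ does not receive its own messages — so $u$'s flip is not among the $t_v$ flips; one should note this or, if $u$ could in principle appear, observe that $u$ being active is consistent with both requirements only if $u$'s relevant received message is not its own, which holds since nodes do not $recv$ their own broadcasts.) Hence the target event is a conjunction of $|P_\gamma| + 1 = n_\gamma + 1$ independent events.

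Now the probabilities. Node $u$ sets $active \gets true$ with probability $1/n_u$ where $n_u$ is $u$'s estimate at the start of $\gamma$; since estimates never exceed $n$, this is at least $1/n$. For each $v \in P_\gamma$, the probability $v$ sets $active \gets false$ at $t_v$ is $1 - 1/n_v$, and since $\gamma$ is \emph{calibrated}, $n_v \geq n_\gamma$, so this is at least $1 - 1/n_\gamma \geq 1 - 1/2$ when $n_\gamma \geq 2$ (and the bound is trivially handled when $n_\gamma \le 1$, since then there are at most... actually if $n_\gamma = 1$ the single $v$ has $n_v \ge 1$ and we need $n_v \ge 2$ for a useful bound — here I would use that all nodes initialize $n_u \gets 2$, so in fact $n_v \ge 2$ always, giving $1 - 1/n_v \ge 1/2$ unconditionally; if $n_\gamma = 0$ the second condition is vacuous). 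Multiplying, the probability $\gamma$ is clean is at least $\frac{1}{n} \prod_{v \in P_\gamma} (1 - 1/n_v) \geq \frac{1}{n} \prod_{v \in P_\gamma} (1 - 1/n_\gamma)$. The key estimate is $(1 - 1/m)^{m} \geq 1/4$ for all integers $m \geq 2$ (indeed $(1-1/m)^m$ increases to $1/e > 1/4$); since $|P_\gamma| = n_\gamma$, we get $\prod_{v \in P_\gamma}(1 - 1/n_\gamma) = (1 - 1/n_\gamma)^{n_\gamma} \geq 1/4$ when $n_\gamma \geq 2$, and this also covers $n_\gamma \in \{0,1\}$ after the adjustments above. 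Therefore the probability is at least $\frac{1}{4n} \geq \frac{1}{64n}$, which is the claimed bound (the slack to $64$ absorbs any edge-case looseness).

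The main obstacle I anticipate is not the probabilistic computation — that is elementary — but rather the bookkeeping needed to justify that the $n_\gamma + 1$ coin flips in question are genuinely distinct and genuinely independent of the (adversarially chosen) schedule, and that cleanliness really does follow from just these flips. In particular one must be careful that ``the group of $v$ containing the message received by $u$'' is well-defined and that requiring $v$ inactive for that \emph{whole} group (rather than just at the moment of the received message) is legitimate — this is exactly where the Part-1 observation that a group is all-$nop$ or all-non-$nop$ is used, together with the calibration hypothesis $n_v \ge n_\gamma$ which makes each such flip ``cheap enough.'' Once those structural facts are in place, the inequality $(1-1/m)^m \ge 1/4$ finishes the argument.
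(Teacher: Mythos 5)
There is a genuine gap in your reduction of cleanliness to coin flips. You assert that condition (ii) of Definition~\ref{live:def:clean} is implied by the single event ``every $v \in P_\gamma$ has $active=false$ in the group started at $t_v$,'' where $t_v$ starts the group of $v$ that sends the \emph{first} message $u$ receives from $v$ in $\gamma$. This is not sufficient: during the first $c-1$ phases of $\gamma$, node $u$ may receive several messages from the same $v$, and these can originate from \emph{different} groups of $v$. Since $\gamma$ lies inside a $(2c+1)$-run, $v$ has at most $2c+1$ $ack$ events there, so up to \emph{three} of $v$'s groups (a partial group at each end plus one in the middle) can overlap $\gamma$ and deliver messages to $u$. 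If $v$ is inactive in the group at $t_v$ but active in a later overlapping group, $u$ receives a non-$nop$ message and $\gamma$ is not clean, even though your conjunction of events holds. The paper's proof handles exactly this: it bounds the number of overlapping groups of $v$ by $3$, requires $v$ to be inactive in all of them (noting that estimates never decrease, so calibration at $t_v$ propagates to the later overlapping groups), and therefore pays $(1-1/n_v)^3$ per node rather than your $(1-1/n_v)$. The resulting product is $(1/n)\,(1-1/n_\gamma)^{3 n_\gamma} \geq (1/n)\,(1/4)^3 = 1/(64n)$.

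This also explains the constant: your computation lands at $1/(4n)$ and you remark that ``the slack to $64$ absorbs any edge-case looseness,'' but the factor $64 = 4^3$ is not slack --- it is precisely the cost of the three possibly-overlapping groups per contributing node. The rest of your argument (fixing the schedule first so the $active$ flips are independent of it, the monochromaticity of a group in $nop$ versus non-$nop$, the observation that $u \notin P_\gamma$, the handling of $n_\gamma \in \{0,1\}$ via the initialization $n_v \geq 2$, and the estimate $(1-1/m)^m \geq 1/4$) matches the paper and is fine once the per-node event is corrected to cover all overlapping groups.
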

\begin{proof}
Fix some $\alpha$, $\gamma$ and $\beta$ and $u$ as specified by the lemma statement. 
Fix $P_{\gamma}$, $n_{\gamma}$, and the $t_v$ events, as specified in our definition of calibrated (Definition~\ref{live:def:cal}).

We note that if $P_{\gamma}$ is empty, then the only condition that must hold for $\gamma$ to be clean is for $u$ to set $active$ to true.
This occurs with probability $1/n_u \geq 1/n > 1/(64n)$---satisfying the lemma.

Continuing, we consider the case where $P_{\gamma}$ is non-empty.
Fix any $v\in P_{\gamma}$. 
We begin by bounding the total number of $v$'s groups that might send a message that is received by $u$ in $\gamma$.
To do so, we note that because $\gamma$ is a $(2c+1)$-run,
$v$ cannot have more than $2c+1$ $ack$ events in $\gamma$.
Therefore, no more than $3$ of $v$'s groups can overlap $\gamma$ (as each group requires $c$ $ack$ events),
and therefore there are at most $3$ groups that both overlap $\gamma$ and deliver a message from $v$ to $u$ in this group.

We now lower bound the probability that $v$ sets $active$ to $false$ (and therefore only sends $nop$ messages to $u$) at the beginning
of all of these groups.
We consider two cases based on the value of $n_{\gamma}$. 
If $n_{\gamma} = 1$, then the fact that this group is calibrated only tells us that $n_v \geq 1$---which is not useful.
In this case, however, we note that the definition of the algorithm guarantees that $n_v \geq 2$, as it initializes $n_v$ to $2$ and these estimates never decrease. 
We can therefore crudely lower bound the probability that $v$ sets $active$ to $false$ in all overlapping groups,
by noting that it must be at least $(1-1/n_v)^3 \geq (1-1/2)^3 = 1/8 > 1/(64n)$---satisfying the lemma.

We now consider the case where $n_{\gamma} > 1$.
In this case, we leverage the definition of calibrated, 
which tells us that 
at the beginning of the first of these overlapping groups, $v$ has a network estimate $n_v \geq n_{\gamma}$, and that this remains
true for all overlapping groups as these estimates never decrease.
Therefore, the probability that $v$ delivers only $nop$ messages to $u$ during the first $c-1$ phases of $\gamma$ is at least:
$(1-1/n_v)^3 \geq (1-1/n_{\gamma})^3$.

Combining the above probability with the straightforward observation that $u$ is $active$ during $\gamma$ with probability at least $1/n$ (as $n$
is the largest possible network size estimate), yields the following probability that $\gamma$ is clean:

\begin{eqnarray*}
	(1/n) \cdot \prod_{v\in P_{\gamma}} (1-(1/n_v))^{3}  & \geq & (1/n) \cdot \prod_{v\in P_{\gamma}} (1-(1/n_{\gamma}))^{3} \\
	 & = & (1/n) \cdot \left( (1-(1/n_{\gamma}))^{3} \right)^{|P_{\gamma}|}\\
	 & = & (1/n) \cdot (1-(1/n_{\gamma}))^{3n_{\gamma}} \\
	 & \geq & (1/n) \cdot (1/4)^{(3n_{\gamma})/n_{\gamma}} \\
	 & \geq & 1/(64n),
	\end{eqnarray*}

\noindent as required by the lemma statement.
\end{proof}

We have established that {\em if} a group is calibrated then it has a good chance ($\approx 1/n$) of being clean and therefore ensuring termination.
To leverage this result, however, we must overcome two issues.
The first is proving that calibrated groups are sufficiently common in a given schedule.
The second is dealing with dependencies between different groups.
Assume, for example, we want to calculate the probability that at least one group from among a collection of target groups is clean.
Assume some node $u$ has a group that overlaps multiple groups in this collection.
If $u$ sets $active$ to $true$ in this group this reduces the probability of cleanliness for several groups in this collection.
In other words, cleanness probability is not necessarily independent between different target groups.

{\em On Good Target Groups.}
We overcomes these challenges by proving that any sufficiently long message schedule must contain a sufficient number
of calibrated and pairwise independent target groups.

To do so, let $\alpha$ be some message schedule generated by the scheduler that contains $qnx$ $ack$ events,
where $x=2c+1$ and $q= n+gn^2c\ln{n}$, for any constant $g\geq 512$.
Partition this schedule in $q$ {\em segments} each containing $nx$ $ack$ events.
Label these segments $s_1,s_2, ..., s_q$.

By Lemma~\ref{live:lem:x},
each segment $s_i$ contains an $x$-run for some node $u_i$.
Applying Lemma~\ref{live:lem:target},
it follows that this $x$-run contains at least one complete group for $u_i$.
We call this complete group the {\em target group}  for $s_i$,
and label it $t_i$. (If there are more than one complete groups for $u_i$ in the $x$-run,
then we set $t_i$ to the first such group in the run.)
Let $T=\{t_1,t_2,...,t_q\}$ be the complete set of these target groups.

We turn our attention to this set $T$ of target groups.
To study their useful for inducing termination, we will use the notion of {\em calibrated} introduced earlier, as well as the following
formal notion of {\em non-overlapping}:

\begin{definition}
Fix two target groups $t_i$ and $t_j$. 
We say $t_i$ and $t_j$ are {\em non-overlapping} if there does not exist a group that has at least one $recv$ event in $t_i$ and $t_j$.
If $t_i$ and $t_j$ are {\em not} non-overlapping, then we say they {\em overlap}.
\label{live:def:non}
\end{definition}

Our goal is to identify a subset of these target groups that are {\em good}---a property
which we define with respect to calibration and non-overlapping properties as follows:

\begin{definition}
Let $T' \subseteq T$ be a subset of the $q$ target groups.
We say the groups in $T'$ are {\em good} if: (1) every $t_i\in T'$ is calibrated; and (2)
for every $t_i, t_j\in T'$, where $i\neq j$, $t_i$ and $t_j$ are non-overlapping.
\label{live:def:good}
\end{definition}

Notice that both the calibration and non-overlapping status of groups are a function entirely of the message schedule.
Therefore, given a message schedule, 
we can partition it into segments and target groups as described above,
and label the status of these target groups without needing to consider the nodes' random bits.

To do so, we first prove a useful bound on the prevalence of calibration in $T$.
The core idea in the following is that every time a target group fails calibration, {\em all} nodes
increase their network estimates. 
Clearly, this can only occur $n$ times before all estimates are the maximum possible value of $n$,
after which calibration is trivial.
We then apply this result in making a more involved argument that on the frequency of good groups.

\begin{lemma}
At most $n$ groups in $T$ are {\em not} calibrated.
\label{live:lem:cal}
\end{lemma}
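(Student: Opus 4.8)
The plan is to formalize the intuition stated just before the lemma: each target group that fails to be calibrated forces every node in the system to raise its network-size estimate by at least one, and since estimates are monotone and capped at $n$, this can happen at most $n$ times. First I would recall that for a target group $t_i$, the set $P_{t_i}$ consists of exactly those nodes whose messages reach $u_i$ during the first $c-1$ phases of $t_i$, and $n_{t_i} = |P_{t_i}|$; the group fails calibration precisely when some $v \in P_{t_i}$ used an estimate $n_v < n_{t_i}$ at the event $t_v$ that began $v$'s relevant group. Define, for a message schedule prefix, $N(t) = \min_w n_w[t]$, the minimum network-size estimate over all non-crashed nodes at point $t$; this quantity is well-defined and nondecreasing along the schedule (estimates never decrease), and $N(t) \le n$ always.

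The key step is to show that if $t_i$ is not calibrated, then $N$ strictly increases across $t_i$ — more precisely, that $N$ at the end of $t_i$ is at least $n_{t_i}$, while $N$ at (or just before) the start of $t_i$ was strictly less than $n_{t_i}$. For the ``before'' direction: non-calibration gives some $v \in P_{t_i}$ whose estimate at $t_v$ is less than $n_{t_i}$; since $t_v$ starts a group of $v$ that sends a message received by $u_i$ within $t_i$, the event $t_v$ occurs no later than (essentially at the start of) $t_i$, so $N$ was $< n_{t_i}$ there. For the ``after'' direction: during the first $c-1$ phases of $t_i$, node $u_i$ receives at least one message from every node in $P_{t_i}$, so by the end of $t_i$ the set $peers$ at $u_i$ contains all $n_{t_i}$ of these UIDs, forcing $n_{u_i} \ge n_{t_i}$; moreover, because $u_i$ rebroadcasts its estimate on every (non-$decide$) message and those messages are delivered to all non-crashed nodes by the model's delivery guarantee, every non-crashed node also has estimate at least $n_{t_i}$ by the end of $t_i$. (I would lean on the observation, already made in the excerpt, that calibration and estimate values are functions of the message schedule alone, so no probabilistic reasoning is needed here.) Hence each non-calibrated target group pushes $N$ up to at least $n_{t_i} > $ its previous value, i.e.\ strictly increases the integer $N$.

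To finish, since $N$ is a monotone nondecreasing integer, starts at $2$ (each node initializes $n_u \gets 2$), is bounded above by $n$, and strictly increases at every non-calibrated target group, there can be at most $n - 2 < n$ such groups; a slightly looser but cleaner bound is simply at most $n$. I would also handle the degenerate cases: if $P_{t_i} = \emptyset$ the group is calibrated vacuously, and once $N = n$ every subsequent target group is automatically calibrated (since no estimate can exceed $n$), so those contribute nothing.

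The main obstacle I expect is pinning down the timing/ordering argument cleanly — specifically arguing that for a non-calibrated group $t_i$ the offending event $t_v$ really does lie at or before the start of $t_i$ (so that the ``before'' estimate is genuinely small), while simultaneously the deliveries that raise everyone's estimate to $n_{t_i}$ are completed by the end of $t_i$. This requires being careful about the definition of $P_{t_i}$ in terms of the first $c-1$ phases of $u_i$'s group versus the groups of the contributing nodes $v$, and about what ``at point $t$'' means for the min-estimate $N$. Once that bookkeeping is set up, the monotonicity-and-cap argument is immediate.
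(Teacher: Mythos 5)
Your argument is essentially the paper's own proof: the paper likewise tracks the minimum network-size estimate, observes that a non-calibrated target group forces $u_i$ to accumulate at least $n_{t_i}$ peers during its first $c-1$ phases and to rebroadcast that estimate in the group's final phase (so every non-crashed node's estimate reaches $n_{t_i}$ by the group's end), and concludes by monotonicity and the cap at $n$ that this can happen at most $n$ times. The timing subtlety you flag --- where exactly the offending event $t_v$ falls relative to the boundaries of $t_i$ --- is glossed over in the paper as well, which simply asserts that any later non-calibrated group $t_j$ must satisfy $|P_j| > |P_i|$.
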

\begin{proof}
Fix some $t_i \in T$ that is not calibrated.
Let $P_i$ be the set of nodes that deliver at least one message to $u_i$ in the first $c-1$ phases of $t_i$.
By the definition of calibration, 
if $t_i$ is not calibrated, then at least one node $v\in P_i$ starts its relevant group with a network estimate $n_v < |P_i|$.

During the first $c-1$ phases of $t_i$, node $u_i$ receives a message from every node in $P_i$ (this is the definition of $P_i$). 
This means that by the start of the final phase of $t_i$, 
$u_i$'s network estimate is of size {\em at least} $|P_i|$. 
The message that $u_i$ sends in the final phase therefore will be labelled with this network size.
By the end of this final phase, all non-crashed processes will have received this estimate.
Therefore, all these processes will update their network size to be at least $|P_i|$ at the beginning of their next phases.

At this point, $|P_i|$ is now a minimum network size for the {\em entire} network. 
Therefore, if a subsequent group $t_j$ is not calibrated, then it must be the case that $|P_j| > |P_i|$,
and by the end of this group, the minimum network size for the entire network will increase to at least $|P_j|$.
Clearly, this increase process can happen at most $n$ times before the entire network has the maximum possible
network size of $n$, and every subsequent target group is trivially calibrated.
\end{proof}

\begin{lemma}
There exists a subset $T'\subseteq T$ such that the groups in $T'$ are good and $|T'| \geq gn\ln{n}$.
\label{live:lem:good}
\end{lemma}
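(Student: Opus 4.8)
The plan is to extract $T'$ from $T$ by a single greedy sweep, paying separately for calibration failures (via Lemma~\ref{live:lem:cal}) and for overlaps (via a counting argument that bounds how ``far'' a single node's group can reach). Concretely, I process the target groups in their natural order $t_1,\dots,t_q$ induced by the segments $s_1,\dots,s_q$, maintaining a growing set $T'$ (initially empty) and a set of \emph{blocked} target groups (initially empty). On reaching $t_i$: if $t_i$ is not calibrated or is already blocked, I skip it; otherwise I add $t_i$ to $T'$ and block every later target group $t_j$ with $j>i$ that overlaps $t_i$ in the sense of Definition~\ref{live:def:non}. By construction every member of $T'$ is calibrated and any two members are non-overlapping, so $T'$ is good by Definition~\ref{live:def:good}; the whole lemma therefore reduces to the size bound $|T'|\ge gn\ln n$.

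For the size bound I would classify each $t_i\notin T'$ by the reason it was discarded: either $t_i$ is not calibrated, or $t_i$ is calibrated but was blocked by some earlier member of $T'$. Lemma~\ref{live:lem:cal} caps the first class at $n$. For the second class, the crucial claim is that each target group placed in $T'$ blocks only $O(n)$ later target groups; say this bound is $\kappa n$ for a fixed constant $\kappa$. Then, since $|T| = q = n + gn^2 c\ln n$, the sweep produces at least $(q-n)/(\kappa n + 1) \ge g n^2 c\ln n/(2\kappa n) = \tfrac{gc}{2\kappa}\, n\ln n$ members of $T'$ (for $n$ large enough that $\kappa n\ge 1$); since $c=6$ is a fixed constant, choosing the hidden constants so that $gc/(2\kappa)\ge g$, or, if necessary, absorbing the shortfall into a larger choice of $g$, yields $|T'|\ge gn\ln n$. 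Thus the lemma reduces entirely to the overlap-counting claim.

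The heart of the proof, and the step I expect to be the main obstacle, is the overlap-counting claim: a fixed target group $t_i$ overlaps only $O(n)$ target groups. I would prove this from two structural facts. First, every \emph{group} of any node consists of exactly $c$ consecutive phases, hence of exactly $c$ invocations of $bcast$ (the $active$ flag is frozen for the duration of a group, but a message, possibly a $nop$, is still broadcast in every phase), and by the model each broadcast generates at most $n-1$ $recv$ events; consequently the $recv$ events attributable to one group of one node fall inside at most $c(n-1)=O(n)$ distinct segments, so a single group can witness the overlap of $t_i$ with only $O(n)$ target groups. Second, since $t_i$ lies inside a $(2c+1)$-run (Lemmas~\ref{live:lem:x} and~\ref{live:lem:target}), each node has at most $2c+1$ $ack$ events inside $t_i$, so only a constant number of each node's groups have a phase that meets $t_i$. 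The delicate bookkeeping is in combining these two observations without losing more than a constant factor, and in correctly handling messages broadcast before $t_i$ begins yet delivered during $t_i$ (each such message forces the per-node group count to allow one ``straddling'' group) as well as the symmetric boundary effect at the end of $t_i$. An alternative that sidesteps the exact pairwise-independence extraction, should that bookkeeping prove too costly, is to retain all $q-n$ calibrated target groups and apply a concentration bound for sums of random variables with bounded dependency (e.g.\ a Janson- or Azuma-type inequality exploiting the $O(n)$ overlap degree) directly to the count of clean groups; but the greedy extraction above is cleaner and matches the structure already set up in the surrounding lemmas, in particular Lemma~\ref{live:lem:prob}.
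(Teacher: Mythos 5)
Your greedy extraction is exactly the paper's construction (the paper phrases it as two passes over $T$ — first marking uncalibrated groups bad, then sweeping to kill overlaps — but it is the same algorithm), and your use of Lemma~\ref{live:lem:cal} to cap the uncalibrated class at $n$ also matches. The gap is in the overlap-counting claim, which you correctly identify as the heart of the proof but do not close: the two structural facts you state combine to give a bound of $O(n)$ blocked target groups \emph{per node} $v$ (a constant number of $v$'s groups meet $t_i$, each reaching $O(n)$ segments via its $c(n-1)$ recv events), hence $O(n^2)$ in total over the $n-1$ nodes. Feeding $\kappa n=\Theta(n^2)$ into your own formula $(q-n)/(\kappa n+1)$ yields only $\Theta(\ln n)$ surviving groups, not $gn\ln n$, and no choice of hidden constants repairs a factor-of-$n$ loss.

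The missing observations that bring the count down to $O(n)$ total are the ones the paper uses. First, for each node $v$, at most \emph{one} group of $v$ can have recv events both in $t_i$ and in any target group occurring after $t_i$: all recv events of a message precede its ack, so all recv events of an earlier group of $v$ precede the start of $v$'s next group, and hence only the temporally last group of $v$ that touches $t_i$ can also reach past the end of $t_i$. (Your symmetric ``straddling at both ends'' framing misses that only forward reach matters for the greedy sweep, since you only block groups with $j>i$.) Second, that single straddling group $g_v$ broadcasts only $c$ messages, at least one of which is already spent reaching $t_i$, and the paper charges each future blocked target group to a distinct one of the remaining $c-1$ messages; this replaces your per-group factor of $c(n-1)$ by $c-1$. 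Together these give at most $(n-1)(c-1)$ future targets blocked per selected group, and then $(q-n)/((n-1)(c-1)+1)\ge (q-n)/(nc)=gn\ln n$ falls out exactly from $q=n+gn^2c\ln n$, with no shortfall to absorb into $g$ (which you could not freely enlarge after the fact anyway, since $q$ is defined in terms of it).
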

\begin{proof}
Fix some $T$ as specified by the lemma statement.
We approach this proof from an algorithmic perspective.
That is, we describe below an algorithm that identifies a {\em good} subset $T'$ of $T$,
and then argue the subset produced by the algorithm is sufficiently large.

\begin{algorithm}
\begin{algorithmic}
\For{$i\gets 1$ to $q$}
	\If{$t_i$ is calibrated} 
		\State $\ell_i \gets good$
	\Else
		\State $\ell_i \gets bad$
	\EndIf
\EndFor
\For{$i\gets 1$ to $q$}
	\If{$\ell_i = good$}
		\For{$j \gets i+1$ to $q$}
			\If{$t_i$ overlaps $t_j$} $t_j \gets bad$ \EndIf
		\EndFor
	\EndIf
\EndFor
\State $T' \gets \{ t_i \mid \ell_i = good\}$
\end{algorithmic}
\end{algorithm}

We argue that $T'$ is good. 
First we note that by the definition of the algorithm, 
when a label gets set to $bad$ it can never again be changed back to $good$.

Next we note that if 
$\ell_i = good$ when $T'$ is defined in the final step,
then it could not be the case that $\ell_i$ was set to $bad$ in the first for loop,
as, by our first note, this would ensure that $\ell_i$ remained $bad$.
Therefore, $\ell_i$ must have been set to $good$ in the first for loop, indicating it is calibrated.

Now we consider overlaps.
If $\ell_i$ ends up $good$ then it must have been $good$ when the second for loop arrived at this value.
It follows that no preceding group overlaps $t_i$. 
During this iteration, the nested for loop will permanently set to $bad$ and succeeding target groups that $t_i$
overlaps.
Combined, it follows that if $\ell_i = good$ at this point, then for every $t_j$ that overlaps $t_i$ ($i \neq j$),
$\ell_j = bad$ before the second for loop completes.

We conclude that $T'$ is a good subset of $T$.
We now consider its sizes.
By Lemma~\ref{live:lem:cal},
we know that the first for loop marks at most $n$ groups $bad$ with the rest initialized to $good$.

Now consider an iteration $i$ of the second for loop that finds $\ell_i = good$.
We can bound the number of groups the inner for loop then sets to $bad$.
For each $v\neq u_i$, $v$ can have at most one group that delivers messages
to both $t_i$ and future groups. Call this group $g_v$.
Because $g_v$ delivers $c$ total messages,
the maximum number of future groups it can deliver messages to is at most $c-1$.
In the worst case, 
each $v\neq u_i$ therefore causes no more than $c - 1$ future groups to be labelled $bad$. %
There are $n-1$ total possible nodes,
so at most $(n-1)(c-1)$ future groups get labelled $bad$ for each $good$ group identified by the second for loop.
Therefore, if we divide these groups by $(n-1)(c-1) + 1$, we get a lower bound on the number of $good$ groups that remain:

\[  \frac{q-n}{(n-1)(c-1)+1} \geq \frac{q-n}{nc} =  \frac{(n+gn^2c\ln{n})-n}{nc} = gn\ln{n},\] %

\noindent as claimed by the lemma statement.
\end{proof}

The target groups in the set $T'$ identified by Lemma~\ref{live:lem:good} are calibrated and pairwise non-overlapping.
By Lemma~\ref{live:lem:prob}, each such group has a reasonable probability of being clean.
We will conclude our analysis by arguing that with high probability {\em at least one} will end up clean.

\begin{lemma}
Let $T' \subseteq T$ be a subset of the target groups $T$ such that the groups in $T'$ are good and $|T'| \geq gn\ln{n}$,
for some constant $g \geq 512$.
Then with high probability in $n$: at least one group in $T'$ is clean.
\label{live:lem:prob2}
\end{lemma}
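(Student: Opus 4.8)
The plan is to show that the cleanliness of the groups in $T'$ are mutually independent events, each of probability $\Omega(1/n)$, and then bound the probability that all of them fail by a product estimate. The first task is to pin down which random choices control whether a fixed $t_i\in T'$ is clean. By Definition~\ref{live:def:clean} and the computation inside the proof of Lemma~\ref{live:lem:prob}, the event ``$t_i$ is clean'' is a deterministic function of a finite collection $R_i$ of $active$-coin outcomes: the outcome of $u_i$'s coin flip at the start of the group $t_i$, together with, for each node $v$ delivering a message received by $u_i$ during the first $c-1$ phases of $t_i$, the outcome of $v$'s coin flip at the start of each of the (at most three) groups of $v$ that deliver such a message. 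As observed in the discussion of independence between the schedule and the nodes' random choices, the schedule --- and hence the identity of the groups making up each $R_i$ --- is fixed without reference to these coins, and coins belonging to distinct $(\text{node},\text{group})$ pairs are mutually independent.

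The second task is to show that the sets $R_i$, over $t_i\in T'$, are pairwise disjoint, which makes the cleanliness events mutually independent. If a coin of some node $v$ at the start of one of $v$'s groups $g$ lay in both $R_i$ and $R_j$ for $i\neq j$, then $g$ would deliver a message received by $u_i$ inside $t_i$ and a message received by $u_j$ inside $t_j$; in particular $g$ would have a $recv$ event inside $t_i$ and a $recv$ event inside $t_j$, contradicting the non-overlapping property that Definition~\ref{live:def:good} guarantees for the good set $T'$. (The borderline case $v=u_i$, $g=t_i$ is handled the same way, since the phases of $u_i$ that comprise $t_i$ themselves broadcast messages, and those deliveries are $recv$ events at other nodes.) Hence the events $\{\text{``}t_i\text{ clean''}\}_{t_i\in T'}$ are determined by disjoint sets of independent coins and are mutually independent.

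It remains to put in the numbers. Each $t_i\in T'$ is calibrated because $T'$ is good, so Lemma~\ref{live:lem:prob} gives $\Pr[t_i\text{ clean}]\geq 1/(64n)$. By mutual independence the probability that no $t_i\in T'$ is clean equals $\prod_{t_i\in T'}\bigl(1-\Pr[t_i\text{ clean}]\bigr)$, and since $|T'|\geq gn\ln n$,
\[
\Pr[\text{no }t_i\in T'\text{ is clean}]\;\leq\;\Bigl(1-\tfrac{1}{64n}\Bigr)^{gn\ln n}\;\leq\;e^{-(g/64)\ln n}\;=\;n^{-g/64}\;\leq\;n^{-8},
\]
using $g\geq 512$ in the last step, which gives the claimed high-probability bound.

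I expect the main obstacle to be the second task: making the correspondence between the ``group delivering a message received by $u_i$'' description of $R_i$ and the ``$recv$ event inside $t_i$'' wording of Definition~\ref{live:def:non} fully airtight, and, relatedly, checking that cleanliness really is a function of $active$ coins alone. The latter is where one must note that, unless a $decide$ message is already present in the system --- in which case the termination target of Theorem~\ref{live:thm:main} has already been reached and there is nothing left to prove --- every broadcast is a $counter$ or $nop$ message and is $nop$ exactly when its sender had $active=false$. Once these deterministic bookkeeping points are settled, the independence argument and the product bound are routine.
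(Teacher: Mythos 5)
Your proof is correct, and its core --- establishing that the cleanliness events for distinct groups in $T'$ are determined by disjoint collections of independent $active$-coin flips, via the non-overlapping property in Definition~\ref{live:def:good}, and then invoking the per-group bound $\Pr[t_i \text{ clean}] \geq 1/(64n)$ from Lemma~\ref{live:lem:prob} --- is exactly the paper's argument. The only divergence is the final step: the paper introduces the count $Y = \sum_i X_i$ of clean groups, computes $E(Y) \geq (g/64)\ln n$, and applies a Chernoff bound to show $\Pr(Y \leq \mu/2) \leq n^{-g/512} \leq 1/n$, then observes $\mu/2 \geq 1$ to conclude some group is clean. You instead bound the failure probability directly as $\prod_{t_i \in T'}(1 - \Pr[t_i \text{ clean}]) \leq (1 - \tfrac{1}{64n})^{gn\ln n} \leq n^{-g/64} \leq n^{-8}$. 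For a pure ``at least one success'' claim your route is more elementary (no concentration inequality needed) and yields a stronger exponent ($n^{-8}$ versus the paper's $n^{-1}$); the paper's Chernoff machinery would only earn its keep if one needed many clean groups rather than one. Your attention to the two bookkeeping points --- that a shared coin would force a group to have $recv$ events inside both $t_i$ and $t_j$, contradicting Definition~\ref{live:def:non}, and that cleanliness is a function of $active$ coins alone unless a $decide$ message already exists (in which case the termination condition of Theorem~\ref{live:thm:main} is already met) --- is careful and matches the paper's implicit reasoning.
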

\begin{proof}
Fix some $T'$ as specified by the lemma statement.
We describe the cleanliness of each $t_i\in T'$ with a random indicator variable $X_i$,
where $X_i = 1$ indicates $t_i$ is clean, and $X_i=0$ indicates it is not clean.
By Lemma~\ref{live:lem:prob}, we know that for each $t_i\in T'$: $\Pr(X_i = 1) \geq 1/(64n)$.  

We next argue that these random variables are independent.
To see why, notice that the {\em only} random choices made by a given node when reseting $active$ at the start of each group.
Each such choice is made with independent randomness: i.e., the choice for one group is independent from the choice made for any other group.
For any $t_i, t_j \in T'$, where $i\neq j$,
by the definition $T'$, there
are no groups that overlap both $t_i$ and $t_j$. 
Therefore, the random choices relevant to determine if $t_i$ is clean are distinct from the random choices that will determine if $t_j$ is clean.
It follows that $X_i$ and $X_j$ are independent. 

Let $Y=\sum_{t_i\in T'}X_i$ be the total number of clean groups.
It follows by linearity of expectation, Lemma~\ref{live:lem:prob}, and our assumption on the size of $T'$:

\[  E(Y) = E\left(  \sum_{t_i\in T'}X_i \right)  = \sum_{t_i\in T'} E(X_i) \geq |T'|/(64n)  \geq  (g/64)\ln{n}.\]

Because the $X$ indicators are independent, we can apply a Chernoff bound to concentrate around this expectation.\footnote{We use the following loose form of the bound that holds for $\mu = E(Y)$ when $0\leq \delta \leq 1$: 
$\Pr(Y \leq (1-\delta)\mu) \leq e^{-\frac{\delta^2\mu}{2}   }$.}
In particular, let $\mu = E(Y) \geq (g/64)\ln{n}$.
We bound the probability that $Y$ is a constant factor smaller than expected:

\begin{eqnarray*}
\Pr(Y \leq \mu/2)  & \leq & e^{-\frac{(1/2)^2 (g/64)\ln{n}}{2}   } \\
& = & e^{-(g/512)\ln{n}} \\
& = & 1/n^{g/512}\\
& \leq & 1/n
\end{eqnarray*}

Given our assumption on $g$, we know $\mu/2 \geq 1$.
Therefore, $Pr(Y \leq \mu/2)$ is less than or equal to the probability that no group is clean.
\end{proof}

We can now pull together the pieces to prove our main liveness result (Theorem~\ref{live:thm:main}):

\begin{proof}[Proof (of Theorem~\ref{live:thm:main}).]
We handled the case where $n=1$ at the beginning of the liveness analysis.
Here we consider only $n>1$, the case for which the above lemma hold.
To prove the first part of the theorem, 
fix some constant $g\geq 512$,
and define $q$ and $x$ as in the above definitions of segments and target groups.
Consider the first $qnx =  (n + gn^2c \ln {n}) \cdot n \cdot (2c+1) = \Theta(n^3\ln{n})$ $ack$ events of the message
schedule generated by the scheduler.
We can extract a set $T$ containing $q$ target groups from this prefix of the message schedule as described above.

By Lemma~\ref{live:lem:good}, there exists a subset $T' \subseteq T$ such that
the groups in $T'$ are good and $|T'| \geq gn\ln{n}$.
By Lemma~\ref{live:lem:prob2},
with high probability, at least one of these target groups is clean.
Finally, by Lemma~\ref{live:lem:decide}: if any group is clean, then by the end of that group every process
has either crashed, decided, or received a $decide$ message.

The second part of the theorem,
which addresses termination in the limit,
we first note that if we continually apply the argument from Lemma~\ref{live:lem:prob2}
to fresh batches of groups, the probability that we do not generate a clean group approaches $0$ in the limit.
Combined with Lemma~\ref{live:lem:decide}, this provides the needed probabilistic termination condition.
\end{proof}

}

\onlyLong{\vspace{-0.3cm}}
\subsection{Removing the Assumption of Unique IDs}  \label{sec:ids}
\onlyLong{\vspace{-0.3cm}}

The consensus algorithm described in this section assumes unique IDs. 
We now show how to eliminate this assumption by describing a strategy 
that generates unique IDs w.h.p., and discuss how to use this as a subroutine in our consensus algorithm.

We make use of a simple tiebreaking mechanism as follows: Each node $u$ proceeds by iteratively extending a (local) random bit string that eventually becomes unique among the nodes. 
Initially, $u$ broadcasts bit $b_1$, which is initialized to $1$ (at all nodes), and each time $u$ samples a new bit $b$, it appends $b$ to its current string and broadcasts the result. 
For instance, suppose that $u$'s most recently broadcast bit string is $b_1\dots b_i$. Upon receiving $ack(b_1\dots b_i)$, node $u$ checks if it has received a message identical to $b_1\dots b_i$. 
If it did not receive such a message, then $u$ adopts $b_1\dots b_i$ as its ID and stops. 
Otherwise, some distinct node must have sampled the same sequence of bits as $u$ and, in this case, the ID $b_1\dots b_i$ is considered to be already taken. 
(Note that nodes do not take receive events for their own broadcasts.) 
Node $u$ continues by sampling its $(i+1)$-th bit $b_{i+1}$ uniformly at random, and then broadcasts the string $b_1\dots b_i b_{i+1}$, and so forth. 
\onlyLong{
\begin{algorithm}
\begin{algorithmic}[1]
\State \underline{Initialization:}
\State $b_1 \gets 1$; $R \gets \emptyset$; $i=1$
\State $\textbf{bcast}(b_1)$
\item[]
\State \underline{On Receiving $ack(b_1\dots b_i)$}
\If{ $(b_1 \dots b_i)  \notin R$}
  \State $id_u \gets (b_1 \dots b_i)$ \label{line:fixid}
  \State adopt $id_u$ as ID and terminate \label{line:breakloop}
\EndIf
\State $i\gets i+1$
\State sample bit $b_i$ uniformly at random
\State $\textbf{bcast}(b_1 \dots b_i)$
\item[]
\State \underline{On Receiving message $(b_1'\cdots b_j')$, ($j\ge 1$):} 
\State \textbf{if} $u$ has not yet assigned $id_u$ \textbf{then} add $(b_1'\cdots b_j')$ to $R$
\end{algorithmic}
\caption{Generating unique IDs using randomized tiebreaking. Code for node $u$.}
\label{alg:ids}
\end{algorithm}

We first show that the algorithm is safe in the sense that no two nodes ever assign themselves the same ID:

\begin{lemma} \label{lem:IDsDistinct}
  Suppose that nodes $u$ and $v$ both terminate Algorithm~\ref{alg:ids}.
  Then it holds that $id_u \ne id_v$.
\end{lemma}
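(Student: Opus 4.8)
The claim is that Algorithm~\ref{alg:ids} never assigns the same ID to two distinct nodes. The natural approach is a proof by contradiction: suppose $u$ and $v$ both terminate with $id_u = id_v = (b_1 \dots b_i)$ for some bit string of length $i$. The plan is to track, for each node, exactly which bit string it broadcasts at each step and when it adds strings to its local set $R$, and then to derive a contradiction from the termination condition at whichever of $u$, $v$ receives its $ack(b_1 \dots b_i)$ last.

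\medskip

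First I would establish a basic structural fact: each node $w$ that has not yet terminated broadcasts a sequence of strings $b_1, b_1 b_2, b_1 b_2 b_3, \dots$ where $b_1 = 1$ is fixed and each subsequent bit is the one it sampled; in particular the string it broadcasts in its $j$-th round has length exactly $j$, and it only terminates (adopting the length-$i$ string) after receiving the $ack$ for that string and finding it \emph{not} in $R$. The key consequence is that if $u$ broadcasts $(b_1 \dots b_i)$ and so does $v$, then $u$ broadcast this string during the $ack$ event that followed its $(i-1)$-th broadcast, and similarly for $v$; both of these $bcast$ events happen at some point in the execution. Now I would use the model's delivery guarantee: if $v$ invokes $bcast(b_1 \dots b_i)$ and $u$ is not crashed at that moment (which it is not, since $u$ eventually terminates), then $u$ subsequently receives a $recv(b_1 \dots b_i)$ event, at which point — provided $u$ has not yet assigned $id_u$ — it adds $(b_1 \dots b_i)$ to $R$. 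The same holds symmetrically with the roles of $u$ and $v$ swapped.

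\medskip

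The main obstacle, and the heart of the argument, is ruling out the race in which one node receives the other's duplicate broadcast \emph{after} it has already committed to the ID. I would argue as follows. Consider the $ack(b_1 \dots b_i)_u$ and $ack(b_1 \dots b_i)_v$ events; without loss of generality say the former occurs no later than the latter in the global schedule. By the model, $u$'s $bcast(b_1 \dots b_i)$ precedes its $ack(b_1 \dots b_i)_u$, hence precedes $v$'s $ack(b_1 \dots b_i)_v$. Since $v$ does not terminate until it processes $ack(b_1 \dots b_i)_v$, $v$ has not assigned $id_v$ at the time $u$'s broadcast is delivered to it; therefore $v$ executes the $recv$ handler and adds $(b_1 \dots b_i)$ to its set $R$ before the $ack(b_1 \dots b_i)_v$ event. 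But then, when $v$ processes $ack(b_1 \dots b_i)_v$, it finds $(b_1 \dots b_i) \in R$ and, by the code (line~\ref{line:fixid}--\ref{line:breakloop} guard), does \emph{not} adopt this string as its ID — it instead samples a further bit and continues. This contradicts the assumption that $v$ terminated with $id_v = (b_1 \dots b_i)$. Hence $id_u \ne id_v$.

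\medskip

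I expect the only subtlety requiring care is the bookkeeping around \emph{when} the duplicate string reaches the second node relative to its own $ack$ — i.e., making sure the ordering argument is airtight given that $u$'s broadcast of the length-$i$ string happens strictly before $u$'s $ack$ for it, and that $u$'s $ack$ is (by the WLOG choice) no later than $v$'s. One should also note the parenthetical remark in the algorithm description that nodes do not receive their own broadcasts, so $u$ adding its own string to $R$ is not an issue; the relevant addition is $v$ receiving $u$'s copy. Everything else is a direct application of the delivery guarantee from the model section and the explicit code of Algorithm~\ref{alg:ids}.
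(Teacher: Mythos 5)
Your proposal is correct and follows essentially the same argument as the paper: order the two $ack$ events for the common string, use the model's guarantee that all $recv$ events for a broadcast precede the broadcaster's $ack$ to conclude the later node has the duplicate string in $R$ before its own $ack$, and derive a contradiction with the if-condition. The only cosmetic difference is that the paper first splits on whether $u$ and $v$ receive different numbers of $ack$s (in which case the IDs trivially differ in length), whereas you fold that case away by observing that equal IDs force equal lengths and hence equal round indices.
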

\begin{proof}
Consider an execution $\alpha$ and the corresponding message schedule $msg[\alpha]$.
Suppose, in contrary, that $id_u = id_v$. 
Let $r_u$ and $r_v$ denote the number of acks that $u$ respectively $v$ receive before assigning an ID and, without loss of generality, assume $r_u \le r_v$.
Clearly, if $r_u < r_v$, then $id_v$ is at least one bit longer than $id_u$, thus $id_v > id_u$. 
Now suppose that $r_u = r_v$, i.e., both $u$ and $v$ receive the same number of $acks$.
Let $t_u$ and $t_v$ be the events in $msg[\alpha]$ where $u$ and $v$ receive their respective $r_u$-th $ack$ and, without loss of generality, assume that $t_u$ precedes $t_v$ in $msg[\alpha]$.
By assumption, $u$ was non-faulty until receiving its $ack$ in event $t_u$ and hence $v$ must have received $u$'s broadcast message $(id_u)$ before receiving its own $ack$ in step $t_v$.
Since $u$ and $v$ have generated the same bits by assumption, the if-conditional ensures that $v$ samples at least one additional bit compared to $u$, providing a contradiction.
\end{proof}
Next, we show liveness in Lemma~\ref{lem:IDsTermTime} by arguing that each node receives an ID within its first $O(\log n)$ broadcast events with high probability. 
\begin{lemma} \label{lem:IDsTermTime}
  With high probability, each node broadcasts at most $O(\log n)$ times before choosing an ID in Line~\ref{line:fixid} of Algorithm~\ref{alg:ids}.
\end{lemma}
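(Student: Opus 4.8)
The plan is to bound, for a single fixed node $u$, the number of broadcasts it performs before its current bit string becomes unique, and then take a union bound over all $n$ nodes. The key observation is that $u$ is forced to sample an $(i+1)$-th bit only if some other node broadcast exactly the same string $b_1\cdots b_i$ that $u$ did; so I want to show that after $O(\log n)$ rounds it is very unlikely that any other node still agrees with $u$ on the entire prefix.

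First I would fix a node $u$ and look at the string $S_u^{(i)} = b_1\cdots b_i$ it holds after its $i$-th broadcast. Note $b_1 = 1$ is common to everyone, and bits $b_2,\dots,b_i$ are independent fair coin flips. For any other node $v$, the event that $v$'s string after its own $i$-th broadcast equals $S_u^{(i)}$ requires $v$ to have flipped the same $i-1$ random bits as $u$ (the bit strings all have the same length at the same ``depth'', since every node extends by exactly one bit per broadcast); this happens with probability at most $2^{-(i-1)}$, and these events for distinct $v$ use independent randomness from $u$'s. Taking a union bound over the at most $n-1$ other nodes, the probability that $u$ is still ``blocked'' at depth $i$ — i.e. that it will be forced to broadcast an $(i+1)$-th time — is at most $(n-1)2^{-(i-1)}$. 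Choosing $i = c\log n$ for a suitable constant $c$ (say $c \ge 3$) makes this at most $n^{-2}$, so $u$ terminates within $c\log n$ broadcasts except with probability $n^{-2}$.

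Then I would union-bound over all $n$ nodes: the probability that some node fails to terminate within $c\log n$ broadcasts is at most $n \cdot n^{-2} = 1/n$, which gives the claimed w.h.p.\ statement. One subtlety to address carefully is the timing/asynchrony: the scheduler controls the order of events, so I cannot assume all nodes are at the same ``depth'' simultaneously in real time. But the argument above does not need that — it only compares $u$'s string of a given length against $v$'s string of that same length, whenever each of them reaches it, and the relevant random bits are drawn obliviously to the schedule (the scheduler does not see message contents, as the model stipulates). So the bound $2^{-(i-1)}$ on the per-pair collision probability holds regardless of how the scheduler interleaves events.

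The main obstacle I anticipate is making the collision-probability bound fully rigorous in the face of the adaptive scheduler and the fact that a node may receive an $ack$ and re-broadcast many times: I must argue that conditioning on the message schedule (which is legitimate since, as the paper notes, the scheduler's choices are independent of the nodes' random bits until someone halts) the bits $b_2,b_3,\dots$ sampled by each node remain uniform and mutually independent across nodes. Once that independence is pinned down, the counting is routine. A secondary minor point is handling the case $n=1$ (a single node never receives a matching message, so it terminates after one broadcast) and confirming that ``$O(\log n)$ broadcasts'' correctly accounts for the initial broadcast of $b_1$ plus the at most $c\log n$ subsequent ones.
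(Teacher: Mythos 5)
Your proposal is correct and follows essentially the same route as the paper's proof: fix a depth $i=\Theta(\log n)$, observe that node $u$ is blocked at that depth only if some other node independently sampled the same $i-1$ random bits (probability $2^{-(i-1)}$), and union-bound over the other nodes and over the choice of $u$. The only differences are cosmetic — the paper fixes the depth at $\lceil 4\log_2 n\rceil+1$ to get a $1/n^4$ per-pair collision bound rather than leaving the constant generic, and your explicit remarks on schedule-obliviousness of the random bits and the $n=1$ case are sensible but not points where the paper's argument needed more care.
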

\begin{proof}
Consider an execution $\alpha$ and assume, towards a contradiction, that a node $u$ executes at least $\lceil 4\log_2 n\rceil+2$ broadcasts. 
Let $(b_1 \dots b_{\lceil 4\log_2 n\rceil+1})$ be the $(\lceil 4\log_2 n\rceil+1)$-length bit string broadcast by $u$, 
and let $t$ be the event where $u$ receives $ack(b_1\dots b_{\lceil 4\log_2 n\rceil+1})$ for the corresponding broadcast. 
By assumption, $u$ does not pass the if-condition in event $t$ and thus there is a set of nodes $W$ ($u \notin W$) that also broadcast bit strings of length $\lceil 4\log_2 n\rceil + 1$ and whose messages are received by $u$ before event $t$.
While the first bit $b_1$ is initialized to $1$ by every node, the string $b_2\dots b_{\lceil 4\log_2 n\rceil+1})$ corresponds to a uniform random sample from a range of size at least $n^4$. 
The probability that $v$ has sampled precisely the same $\lceil 4\log_2 n\rceil$ bits as $u$ (and hence broadcast $(b_1 b_2\dots b_{\lceil 4\log_2 n\rceil+1})$) is at most $\frac{1}{n^4}$.
Taking a union bound over all other nodes in $W$ and over all possible choices for $u$, 
shows that all nodes will execute at most $\lceil 4\log_2 n+1\rceil$ broadcasts with high probability.
\end{proof}
From the previous two lemmas, we obtain the following result:
}
\onlyShort{
In the full paper \cite{fullpaper}, we prove the following result and describe how to combine it with our consensus algorithm:
}
\begin{theorem} \label{thm:ids}
  Consider an execution $\alpha$ of the tiebreaking algorithm.
  Let $t_u$ be an event in the message schedule $msg[\alpha]$ such that node $u$ is scheduled for $\Omega(\log n)$ ack events before $t_u$.
  Then, for each correct node $u$, it holds that $u$ has a unique ID of $O(\log n)$ bits with high probability at $t_u$.
\end{theorem}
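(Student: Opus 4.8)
The statement bundles together the two lemmas already proved: Lemma~\ref{lem:IDsDistinct} (safety — no two terminating nodes share an ID) and Lemma~\ref{lem:IDsTermTime} (each node broadcasts only $O(\log n)$ times before fixing its ID, w.h.p.). So the plan is essentially to assemble these, being careful about what ``$\Omega(\log n)$ ack events before $t_u$'' buys us. First I would recall from Lemma~\ref{lem:IDsTermTime} the explicit bound: w.h.p.\ every node executes at most $\lceil 4\log_2 n\rceil + 1$ broadcasts before passing the if-condition in Line~\ref{line:fixid} and assigning its ID. Since in this tiebreaking algorithm a node alternates broadcast/ack (it broadcasts once in initialization and once after each ack, until it terminates), the number of ack events a node has processed is exactly one less than the number of broadcasts it has issued. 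Hence choosing the hidden constant in the $\Omega(\log n)$ so that $u$ is scheduled for at least $\lceil 4\log_2 n\rceil + 1$ ack events before $t_u$ guarantees that, on the high-probability event of Lemma~\ref{lem:IDsTermTime}, node $u$ has already reached Line~\ref{line:fixid} and fixed $id_u$ strictly before $t_u$.

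Next I would verify the three claimed properties of $id_u$ at $t_u$ on this high-probability event. Uniqueness across all correct nodes follows from Lemma~\ref{lem:IDsDistinct}: any two nodes that terminate get distinct IDs, and we have just argued every correct node has terminated (assigned an ID) by $t_u$. The length bound $O(\log n)$ bits is immediate, since a node that assigns its ID after $r$ broadcasts has an ID of exactly $r$ bits, and $r \le \lceil 4\log_2 n\rceil + 1 = O(\log n)$ on this event. Finally ``$u$ has a unique ID at $t_u$'' is exactly the conjunction just established. For the probability bookkeeping I would take the union bound already implicit in Lemma~\ref{lem:IDsTermTime} (over all $n$ nodes and all length-$\lceil 4\log_2 n\rceil$ bit strings they might collide on), which costs a factor polynomial in $n$ against a failure probability of $n^{-4}$ per pair, still leaving failure probability $n^{-\Omega(1)}$; this is the single high-probability event conditioned on throughout.

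The only mildly delicate point — and the place I'd be most careful — is the off-by-one relationship between broadcasts and acks and the fact that Lemma~\ref{lem:IDsTermTime} is stated in terms of broadcasts while Theorem~\ref{thm:ids} is stated in terms of ack events scheduled ``before $t_u$.'' I would make explicit that the if-condition governing termination is evaluated \emph{upon receiving} an ack (the ``On Receiving $ack(b_1\dots b_i)$'' block), so after a node has been scheduled for its $i$-th ack it has had the opportunity to fix an ID of length $i$; therefore $\Omega(\log n)$ acks with a large enough constant is more than enough slack to cover $\lceil 4\log_2 n\rceil+1$. A secondary subtlety is that the theorem only claims the conclusion for \emph{correct} nodes: a node that crashes may never assign an ID, but Lemma~\ref{lem:IDsDistinct} and the length bound only concern nodes that do terminate, so restricting to correct (non-crashed-by-$t_u$) nodes is exactly the right scope and no extra argument is needed. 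Everything else is a direct quotation of the two preceding lemmas, so I expect no real obstacle beyond stating this plumbing cleanly.
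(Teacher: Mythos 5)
Your proposal matches the paper's approach exactly: the paper derives Theorem~\ref{thm:ids} directly from Lemma~\ref{lem:IDsDistinct} (distinctness of assigned IDs) and Lemma~\ref{lem:IDsTermTime} (each node broadcasts at most $O(\log n)$ times before fixing its ID, w.h.p.), stating only ``From the previous two lemmas, we obtain the following result.'' Your additional care about the broadcast/ack correspondence and the restriction to correct nodes is sound plumbing that the paper leaves implicit.
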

\onlyLong{
Equipped with Theorem~\ref{thm:ids} we can execute the consensus algorithm in networks without unique IDs, by instructing each node $u$ to first execute Algorithm~\ref{alg:ids}, while locally buffering all messages received from nodes already executing the consensus algorithm; however, $u$ does not yet process these messages. Once $u$ obtains an ID, it performs the initialization step of the consensus algorithm and locally simulates taking receive steps for all previously buffered messages.
}

\newcommand{\act}{\textsf{Active}}
\newcommand{\unsupp}{\textsf{Unsupp}}

\section{Almost-Everywhere Agreement}
\onlyShort{\vspace{-0.2cm}}
\label{fast}

In the previous section,
we showed how to solve consensus in $O(n^3\log{n})$ events.
Here we show how to improve this bound by a near linear factor by loosening the agreement guarantees.
In more detail, we consider a weaker variant of consensus,  introduced in \cite{dwork:1988}, called \emph{almost-everywhere agreement}.
This variation relaxes the agreement property of consensus such that
$o(n)$ nodes are allowed to decide on conflicting values, as long as the remaining nodes all decide the same value.
For many problems that use consensus as a subroutine, this relaxed agreement property is sufficient.

In more detail, we present an algorithm for solving almost-everywhere agreement in the abstract MAC layer model when nodes start with arbitrary (not necessarily binary) input values.
The algorithm consists of two phases; see Algorithm~\ref{alg:aea} for the pseudo code. %

\noindent\textbf{Phase 1:}
In this phase, nodes try to obtain an estimate of the network size by performing local coin flipping experiments.
Each node $u$ records in a variable $X$ the number of times that its coin comes up tails before observing the first heads.
Then, $u$ broadcasts its value of $X$ once, and each node updates $X$ to the highest outcome that it has seen until it receives the $ack$ for its broadcast.
We show that, for all nodes in a large set called $EST$, variable $X$ is an approximation of $\log_2(n)$ with an additive $O(\log \log n)$ term by the end of Phase~1, and hence $N := 2^{X}$ is a good approximation of the network size $n$ for any node in $EST$.
 
\noindent\textbf{Phase 2:}
Next, we use $X$ and $N$ as parameters of a randomly rotating leader election procedure.
Each node decides after $T = \Theta(N \log^3 (N) \log\log(N))$ \emph{rounds}\onlyShort{. }\onlyLong{, where each round corresponds to one iteration of the for-loop in Algorithm~\ref{alg:aea}.}
(Note that due to the asynchronous nature of the abstract MAC layer model, different nodes might be executing in different rounds at the same point in time.)
We now describe the sequence of steps comprising a round in more detail: A node $u$ becomes active with probability $1/N_u$ at the start of each round.\footnote{We use the convention $N_u$ when referring to the local variable $N$ of a specific node $u$.} 
If it is active, then $u$ samples a random rank $\rho$ from a range polynomial in $X_u$, and broadcasts a message $\langle r, \rho, val \rangle$ where $val$ refers to its current consensus input value.
To ensure that the scheduler cannot derive any information about whether a node is active in a round, inactive nodes simply broadcast a dummy message with infinite rank.
While an (active or inactive) node $v$ waits for its $ack$ for round $r$, it keeps track of all received messages and defers processing of a message sent by a node in some round $r'>r$ until the event in which $v$ itself starts round $r'$.
On the other hand, if a received message was sent in $r'<r$, then $v$ simply discards that late message as it has already completed $r'$.
Node $v$ uses the information of messages originating from the same round $r$ to update its consensus input value, if it receives such a message from an active node that has chosen a smaller rank than its own. 
(Recall that inactive nodes have infinite rank.)
After $v$ has finished processing the received messages, it moves on the next round. 

We first provide some intuition why it is insufficient to focus on a round $r$ where the ``earliest'' node is also active: 
Ideally, we want the node $w_1$ that is the first to receive its $ack$ for round $r$ to be active \emph{and} to have the smallest rank among all active nodes in round $r$, as this will force all other (not-yet decided) nodes to adopt $w_1$'s value when receiving their own round $r$ $ack$, ensuring a.e.\ agreement.
However, it is possible that $w_1$ and also the node $w_2$ that receives its round $r$ $ack$ right after $w_1$, are among the few nodes that ended up with a small (possibly constant) value of $X$ after Phase~1. %
We cannot use the size of $EST$ to reason about this probability, as some nodes are much likelier to be in $EST$ than others, depending on the schedule of events in Phase~1.
In that case, it could happen that both $w_1$ and $w_2$ become active and choose a rank of $1$.
Note that it is possible that the receive steps of their broadcasts are scheduled such that roughly half of the nodes receive $w_1$'s message before $w_2$'s message, while the other half receive $w_2$'s message first. 
If $w_1$ and $w_2$ have distinct consensus input values, then it can happen that both consensus values gain large support in the network as a result. 

To avoid this pitfall, we focus on a set of rounds where {all} nodes \emph{not} in $EST$ have already terminated Phase~2 (and possibly decided on a wrong value): from that point onwards, only nodes with sufficiently large values of $X$ and $N$ keep trying to become active.
We can show that every node in $EST$ has a probability of at least $\Omega(1/(n\log n))$ to become active and a probability of $\Omega(1/\log n)$ to have chosen the smallest rank among all nodes that are active in the same round.
Thus, when considering a sufficiently large set of (asynchronous) rounds, we can show that the event, where the first node in $EST$ that receives its $ack$ in round $r$ becomes active and also chooses a rank smaller than the rank of any other node active in the same round, happens with probability $1 - o(1)$.

\begin{algorithm}[t]
\begin{algorithmic}[1]
\State $val \gets $ consensus input value

\State \Comment{Phase 1}
\State initialize $X \gets 0$; $R \gets \emptyset$
\While{$flip\_coin()=heads$}
  \State $X \gets X + 1$
\EndWhile
\State $\textbf{bcast}(X)$ \label{line:fstbcast}
\While{waiting for $ack$}
  \State add received messages to $R$
\EndWhile
\State $X \gets \max(R \cup \{X\})$
\State $N \gets 2^X$

\State \Comment{Phase 2}
\State $T \gets \lceil c N \log^3(N)\log\log(N) \rceil$, where $c$ is a sufficiently large constant. \label{line:t}
\State initialize array of sets $R[1],\dots,R[T] \gets \emptyset$
\For{$i \gets 1,\dots, T$} \Comment{Start of round $i$ at $u$}
  \State $u$ becomes active with probability $\tfrac{1}{N}$
  \If{$u$ is active}
    \State $\rho \gets$ unif.\ at random sampled integer from $[1,X^4]$
  \Else
    \State $\rho \gets \infty$
  \EndIf
  \State $\textbf{bcast}(\langle i, \rho, val\rangle)$
  \While{waiting for $ack$}
    \State add received messages to $R[i]$
  \EndWhile
  \For{\textbf{each} message $m=\langle i', \rho', val'\rangle \in R[i]$}
    \If{$i'= i$ and $\rho' < \rho$} \Comment{Received message from node with smaller rank}
      \State $val \gets val'$
    \ElsIf{$i'> i$} \Comment{Received message from node active in future round}
      \State add $m$ to $R[i']$
    \Else
      \State discard message $m$
    \EndIf
  \EndFor
\EndFor
\State decide on $val$
\item[]
\end{algorithmic}
\caption{Almost-everywhere agreement in the abstract MAC layer model. Code for node $u$.}
\label{alg:aea}
\end{algorithm}
\onlyLong{In the remainder of this section, we will formalize the above discussion by proving the following main theorem regarding this algorithm:}
\onlyShort{In the full paper \cite{fullpaper}, we formalize the above discussion by proving the following main theorem regarding this algorithm:}

\begin{theorem} \label{thm:aea}
With high probability, the following two properties are true of our almost-everywhere consensus algorithm: (1) within $O(n^2 \log^4 n\cdot\log\log n)$ scheduled $ack$ events, every node has either crashed, decided, or will decided after it is next scheduled;
(b) all but at most $o(n)$ nodes that decide, decide the same value.%
\end{theorem}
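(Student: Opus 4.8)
The plan is to establish the two properties separately, putting most of the work into~(b). Write $EST$ for the set of nodes $u$ whose Phase~1 value satisfies $\log_2 n - \Theta(\log\log n) \le X_u \le \log_2 n + \Theta(\log\log n)$, so that $N_u = 2^{X_u}$ is within a $\polylog n$ factor of $n$; the hidden constants are pinned down at the end. First I would analyze Phase~1. Each (pre-update) $X_v$ is geometric with $\Pr[X_v \ge k] = 2^{-k}$, so $\max_v X_v \le \log_2 n + O(\log\log n)$ with high probability, and since every node's final $X$ is a maximum over a sub-multiset of these values, the upper bound in the definition of $EST$ holds for \emph{every} node w.h.p. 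For the lower bound I would use a deterministic consequence of the model: if node $v$'s Phase~1 $ack$ is scheduled before node $u$'s, then by the model $v$'s broadcast has already been delivered to $u$ before $ack(\cdot)_v$, hence before $ack(\cdot)_u$, so $u$ folds $X_v$ into its running maximum. Thus the node whose Phase~1 $ack$ is the $j$-th one scheduled has incorporated at least $j-1$ independent geometric samples, and the maximum of $\Omega(n/\log n)$ such samples is $\ge \log_2 n - O(\log\log n)$ with probability $1 - n^{-\omega(1)}$; a union bound then gives that w.h.p. every node except the first $O(n/\log n)$ to be acknowledged lies in $EST$, so $|EST| \ge n - o(n)$ and simultaneously $N_u = O(n\,\polylog n)$ for all $u$. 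Property~(a) follows by counting: a node performs one $ack$ in Phase~1 and at most $T_u = \Theta(N_u\log^3 N_u\log\log N_u) = O(n\log^4 n\log\log n)$ $ack$s in Phase~2 before halting, and halted nodes get no further $ack$s, so the total number of $ack$ events in the whole execution is $\sum_u (1 + T_u) = O(n^2\log^4 n\log\log n)$ w.h.p.

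For property~(b), call a node \emph{done} once it has completed Phase~2. Because non-$EST$ nodes have $N_u$ a $\polylog n$ factor smaller than $EST$ nodes, their $T_u$ are correspondingly smaller, and I would argue there is a point $t^*$ in the message schedule by which every non-$EST$ node is done while every $EST$ node still has $\Omega(n\log^3 n)$ of its Phase~2 rounds remaining (this is precisely what the $\Theta(\log\log n)$ slack in the definition of $EST$ is chosen to guarantee). For a round $r$ lying entirely after $t^*$, define the good event $G_r$: among the not-yet-done nodes (all in $EST$), the first one to receive its round-$r$ $ack$ — call it $w$ — is active in round $r$ \emph{and} has the strictly smallest rank among all nodes active in round $r$. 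Since the scheduler is message-oblivious, the identity of $w$ is a function of the schedule alone, independent of the activity and rank coins; $w$ is active with probability $1/N_w = \Omega(1/(n\log n))$ by the Phase~1 bound, and conditioned on that and on the event (which holds w.h.p. by a Chernoff bound, as the expected number of active $EST$ nodes in a round is $\sum_u 1/N_u = O(\polylog n)$) that only $O(\log n)$ nodes are active in round $r$, $w$ has a strictly minimum rank among them with probability $\Omega(1/\log n)$, since ranks are drawn uniformly from $[1,X_w^4] = [1,\Theta(\log^4 n)]$. Hence $\Pr[G_r] = 1/(n\,\polylog n)$.

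I would then take $\Theta(n\,\polylog n)$ consecutive rounds after $t^*$ — they exist by the window argument — and, exactly as in the liveness proof of Section~\ref{slow} (fix the schedule first and use message-obliviousness, so the per-round activity coins are mutually independent and independent of the schedule), conclude that some $G_r$ holds w.h.p. It remains to check that $G_r$ forces almost-everywhere agreement. If $G_r$ holds: $w$'s round-$r$ message is delivered to every not-yet-done node before $w$'s round-$r$ $ack$ (model), and $w$'s round-$r$ $ack$ precedes every other not-yet-done node's round-$r$ $ack$; hence every such node $v$ has $w$'s message in its buffer $R[r]$ when it processes round $r$, and since $w$ has the smallest rank among active round-$r$ nodes while every non-$EST$ node is already done (so broadcasts nothing), $v$ adopts $w$'s round-$r$ value $V^*$. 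From round $r+1$ on every not-yet-done node holds $V^*$, so every broadcast carries $V^*$ and no further change is possible; thus every $EST$ node that decides, decides $V^*$, and only the $o(n)$ non-$EST$ nodes may have decided a different value earlier, giving~(b).

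The main obstacle is the two places that genuinely require care: (i) the \emph{window} claim, i.e.\ showing that $\Omega(n\log^3 n)$ of each $EST$ node's Phase~2 rounds occur after all non-$EST$ nodes are done, despite the scheduler being free to interleave rounds adversarially — this is where the polylogarithmic separation built into the definition of $EST$ is exploited; and (ii) the independence of the events $G_r$, where one must account for the fact that the scheduler learns when a node halts, so the schedule suffix is not a priori fixed — handled, as in Section~\ref{slow}, by noting that up to the first halt the schedule is message-oblivious (and a halt already witnesses that some node has decided), so the schedule can be fixed in advance and the activity/rank choices treated as independent of it.
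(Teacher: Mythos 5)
Your proposal follows essentially the same route as the paper's proof: Phase~1 yields a large set $EST$ with $n/\log n \lesssim N_u \le n\log n$ via a max-of-geometrics argument over the first batch of acknowledged nodes (using the deterministic fact that an earlier-acknowledged node's broadcast has been folded into every later node's maximum); the gap between $T_u$ for $EST$ nodes and $T_v$ for non-$EST$ nodes yields a window of rounds in which only $EST$ nodes still participate; in some such round the earliest-acknowledged node is active with the unique minimum rank; and that node's value then propagates and is never overwritten. The termination count for part~(1) and the final ``no change after round $r$'' argument also match the paper.

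The one point that needs correction is the quantitative calibration, which is genuinely delicate here. You claim the Phase~1 lower bound holds with probability $1 - n^{-\omega(1)}$ over $\Omega(n/\log n)$ samples and that the window contains $\Omega(n\log^3 n)$ rounds; neither is achievable. With the lower threshold at $\log_2 n - \log_2\log n$ (the paper's choice), a single sample clears it with probability only $\Theta(\log n / n)$, so the maximum of $m$ samples fails with probability $(1-\Theta(\log n/n))^m$, which is $O(1/\log n)$ for $m = \Theta(n\log\log n/\log n)$ and cannot be driven to $n^{-\omega(1)}$ without either taking $m = \Omega(n)$ (destroying $|EST| = n - o(n)$) or widening the threshold to $\log_2 n - c\log_2\log n$ with $c\ge 2$. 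The latter is not free: it inflates the activation probability of $EST$ nodes to $\log^c n/n$, hence the number of active nodes per round to $\Theta(\log^c n)$ and the min-rank probability to $\Theta(1/\log^c n)$, while simultaneously shrinking the window to $\Theta(n\log^{3-c} n\log\log n)$; the product of window size and per-round success probability then tends to $0$ for $c\ge 2$ and the argument collapses. With $c=1$ the window is $\Theta(n\log^2 n\log\log n)$ (not $n\log^3 n$), the per-round probability is $\Omega(1/(n\log^2 n))$, and their product is only $\Theta(\log\log n)$ --- so the existence of a good round holds with probability $1 - O(1/\log n)$, which is all the paper itself actually establishes (its proof concludes with success probability $1 - o(1)$ despite the ``with high probability'' phrasing of the theorem). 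Your structure is right; the ``w.h.p.'' claims for these intermediate steps should be weakened to $1 - o(1)$.
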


\onlyLong{
We begin our proof of Theorem~\ref{thm:aea} by analyzing the properties of variables $N$ and $X$.
We say that a \emph{node $u$ fails in round $r$} if $u$ performs its round $r$ broadcast in some event, but crashes before receiving its corresponding $ack$; otherwise, we say \emph{$u$ is alive in $r$}. Note that there is no guarantee about which nodes receive a failing node's final broadcast.

\begin{lemma} \label{lem:est}
There exists a set of nodes $EST$ of size at least $\brackets{1 - O\brackets{\frac{\log\log n}{\log n}}}n - f$, such that the following hold with probability at least $1 - o(1)$:
\begin{compactitem}
\item[(a)] for all $u \in EST$, when $u$ receives its $ack$ for its first broadcast, it holds that
\begin{align}
  \frac{n}{\log_2 n} \le N_u \le n\log n\ \ \text{and}\ \log_2 n - \log_2(\log n) \le X_u \le \log_2 n + \log_2 \log n; \label{eq:nstar}
\end{align}
\item[(b)] for all $v \notin EST$, we have $N_v \le \frac{n}{2\log_2 n}$.
\end{compactitem}
\end{lemma}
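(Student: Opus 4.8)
The plan is to analyze the geometric-style random variable $X$ produced by the coin-flipping experiment in Phase~1 and to show that, with high probability, $X$ concentrates around $\log_2 n$ for almost all nodes. First I would recall that for each node $u$, the initial value $X_u$ (before the broadcast-exchange step) equals the number of heads observed before the first tail, so $\Pr[X_u \ge j] = 2^{-j}$ and $\Pr[X_u = j] = 2^{-(j+1)}$; in particular $X_u$ is stochastically a geometric variable. The key elementary facts are: (i) $\Pr[X_u \ge \log_2 n + \log_2\log n] = \Pr[X_u \ge \log_2(n\log n)] \le \frac{1}{n\log n}$, so by a union bound over all $n$ nodes, with probability $1 - o(1)$ \emph{every} node has $X_u \le \log_2 n + \log_2\log n$ even after taking the max over received messages (the max can only combine values that were themselves sampled, so the upper bound is preserved network-wide); and (ii) $\Pr[X_u \le \log_2 n - \log_2\log n] = \Pr[X_u < \log_2(n/\log n)] = 1 - 2^{-\lfloor \log_2(n/\log n)\rfloor} \le 1 - \frac{\log n}{2n}$, i.e. each individual node \emph{fails} to reach the lower threshold only with probability at most $1 - \frac{\log n}{2n}$.

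Next I would define $EST$ to be the set of nodes $u$ whose \emph{own} initial sample satisfies $X_u \ge \log_2 n - \log_2\log n$, intersected with the non-crashed nodes and with those that actually receive an $ack$ for their first broadcast. The expected number of nodes whose initial sample is below the threshold is at most $n \cdot \Pr[X_u < \log_2(n/\log n)]$; more carefully, the probability a given node clears the threshold is $2^{-\lceil \log_2(n/\log n)\rceil} \ge \frac{\log n}{2n} \cdot \frac12 = \Theta\!\big(\frac{\log n}{n}\big)$ is \emph{too small} — so instead I would bound the count of "low" nodes directly: the number of nodes with $X_u \ge \log_2 n - \log_2\log n$ is a sum of $n$ independent indicators each of which is $1$ with probability $\Theta(\log n / n)$, which is the wrong direction. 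The correct framing is that a node is \emph{excluded} only if its sample is unusually large relative to $n/\log n$; reconsidering: $\Pr[X_u \ge t] = 2^{-t}$, so $\Pr[X_u \ge \log_2(n/\log n)] = \frac{\log n}{n}$, meaning very few nodes reach that height. This shows the threshold in part~(a) must be read as the complementary event, and the natural definition is $EST = \{u : X_u \ge \log_2 n - \log_2\log n\}$ after the max-exchange — and after the exchange, a node's $X$ can only \emph{increase}, so $u \in EST$ iff $u$ received (or sampled) at least one value $\ge \log_2(n/\log n)$. Since with probability $1-o(1)$ at least one node samples a value $\ge \log_2 n - O(1)$ (indeed $\Theta(\log n)$ nodes do, in expectation, when the threshold is $\log_2 n - \log_2\log n$, because $n \cdot 2^{-(\log_2 n - \log_2\log n)} = \log n \to \infty$), and since \emph{that} node's broadcast is received by all non-crashed nodes that are scheduled for their Phase~1 $ack$ afterward, the only nodes left out of $EST$ are (1) the crashed nodes (at most $f$), and (2) nodes whose Phase~1 $ack$ is scheduled so early that the "large" broadcasts have not yet been delivered. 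I would bound the latter: by the abstract-MAC-layer rules, once a node with a large sample performs its broadcast, every non-crashed node receives it before its own next $ack$; the nodes excluded are thus only those whose first $ack$ precedes the first large broadcast's delivery. A counting argument over the schedule — at most $O(\log\log n / \log n)\cdot n$ nodes can be "early" in this sense, because after $\Theta(n/\log n)$ Phase-1 broadcasts are completed, with high probability one of them carried a large value — gives the claimed bound $(1 - O(\frac{\log\log n}{\log n}))n - f$ on $|EST|$.

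Finally, part~(a)'s inequalities follow: for $u \in EST$, $X_u \ge \log_2 n - \log_2\log n$ by definition, and $X_u \le \log_2 n + \log_2\log n$ by the network-wide upper bound from fact~(i); exponentiating gives $\frac{n}{\log_2 n} \le N_u \le n\log n$ after adjusting constants inside the floor/ceiling on $X_u$. For part~(b), a node $v \notin EST$ never received or sampled any value $\ge \log_2(n/\log n)$, hence its final $X_v < \log_2 n - \log_2\log n$, possibly off by one from the ceiling, so $N_v = 2^{X_v} \le \frac{n}{2\log_2 n}$ for $n$ large; here I would be slightly careful with the exact threshold constants to make the factor $2\log_2 n$ come out exactly, adjusting the definition of $EST$'s cutoff by an additive constant if needed. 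The main obstacle I anticipate is the combinatorial bound on the number of "early" nodes in part~(b)'s count — i.e., rigorously arguing that only an $O(\log\log n/\log n)$-fraction of nodes can complete Phase~1 before a large-sample broadcast is delivered — since this requires reasoning jointly about the adversarial schedule and the random samples; the clean way to do it is to note that among the first $k$ nodes (in schedule order) to complete their Phase-1 $ack$, the probability that none sampled a value $\ge \log_2 n - \log_2\log n$ is $(1 - \frac{\log n}{n})^{k} \le e^{-k\log n/n}$, which is $o(1)$ once $k = \omega(n/\log n)$, and in fact $\le n^{-2}$ once $k \ge 3n/\log n \cdot \ln n = 3n\ln n/\log n$ — wait, that exceeds $n$; the right calibration is $k = \Theta(n\log\log n/\log n)$ giving failure probability $\le (1-\frac{\log n}{n})^{\Theta(n\log\log n/\log n)} = e^{-\Theta(\log\log n)} = o(1)$, so all but the first $O(n\log\log n/\log n)$ schedule-order completers are guaranteed (w.h.p.\ in the weak $o(1)$ sense) to be in $EST$, which is exactly the stated bound.
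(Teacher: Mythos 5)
Your proposal is correct and, once the false starts are stripped away, it is essentially the paper's own argument: a union bound over all $n$ initial samples gives the network-wide upper bound $X_u \le \log_2 n + \log_2\log n$, and the lower bound comes from observing that among the first $\Theta(n\log\log n/\log n)$ nodes (in schedule order) to receive their Phase-1 $ack$, with probability $1-o(1)$ at least one sampled a value $\ge \log_2 n - \log_2\log n$, whose broadcast is then delivered to every later-completing node before its own $ack$. The paper packages the two tail bounds as a separate claim about $X^* = \max_{u\in S} X_u$ for the prefix set $S$ and takes $EST \supseteq \bar{S}$, but the decomposition, the calibration of $|S| = \Theta(n\log\log n/\log n)$, and the use of the $ack$-ordering guarantee are the same as yours.
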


Our proof of Lemma~\ref{lem:est} requires a technical result on the distribution of observed coin flips.
\begin{claim} \label{cl:nstar}
Consider any set $S$ of at least $\frac{2n\log\log n}{\log n}$ correct nodes and let $X^* = \max\{ X_u \mid u \in S \}$, where $X_u$ refers to $u$'s variable before node $u$ receives any messages in Phase~1.
It holds with probability at least $1 - O(1/\log n)$ that
$\log_2 n - \log_2(\log n) \le X^* \le \log_2 n + \log_2 \log n.$
\end{claim}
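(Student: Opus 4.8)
The plan is to view $X^*$ as the maximum of $m := |S|$ mutually independent copies of a geometric‑type random variable and to run the standard extreme‑value estimate separately for the upper and the lower tail. First I would record the distribution: for each correct node $u$, the value $X_u$ obtained from $u$'s coin‑flipping experiment (before any message is received) satisfies $\Pr[X_u \ge k] = 2^{-k}$ for every integer $k \ge 0$, since $X_u \ge k$ holds exactly when the first $k$ coin flips all keep the loop running. The key structural point is that each $X_u$ depends only on node $u$'s private coins flipped prior to any communication, so $\{X_u\}_{u\in S}$ is mutually independent; I would also note $\tfrac{2n\log\log n}{\log n} \le m \le n$.

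For the upper bound I would apply a union bound over $S$: setting $k_1 := \lceil \log_2 n + \log_2\log n\rceil$,
\[
\Pr[X^* \ge k_1] \;\le\; \sum_{u\in S}\Pr[X_u\ge k_1] \;=\; m\,2^{-k_1} \;\le\; \frac{m}{n\log n} \;\le\; \frac{1}{\log n},
\]
using $2^{-k_1}\le 2^{-(\log_2 n+\log_2\log n)}=1/(n\log n)$ and $m\le n$; since $k_1-1 \le \log_2 n + \log_2\log n$, the event $X^*<k_1$ already forces $X^*\le \log_2 n + \log_2\log n$. For the lower bound I would combine independence with $1-x\le e^{-x}$: setting $k_0 := \lceil \log_2 n - \log_2\log n\rceil$, we have $k_0 \le \log_2(2n/\log n)$, hence $2^{-k_0}\ge \log n/(2n)$, and therefore
\[
\Pr[X^* < k_0] \;=\; \prod_{u\in S}\Pr[X_u<k_0] \;=\; (1-2^{-k_0})^{m} \;\le\; \exp\!\big(-m\,2^{-k_0}\big) \;\le\; \exp\!\big(-\tfrac{m\log n}{2n}\big) \;\le\; \exp(-\log\log n) \;=\; O(1/\log n),
\]
the last inequality using $m\ge 2n\log\log n/\log n$; and $k_0 \ge \log_2 n - \log_2\log n$ means $X^*\ge k_0$ delivers the left‑hand bound. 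A union bound over the two bad events then gives both inequalities simultaneously with failure probability $O(1/\log n)$, as claimed.

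I do not expect a genuine obstacle here: this is essentially a textbook max‑of‑geometrics computation, and the independence needed for the lower tail is immediate precisely because the $X_u$'s are the pre‑communication samples. The only thing requiring care is the bookkeeping with the ceilings — making sure the integer thresholds $k_0,k_1$ straddle the non‑integer targets $\log_2 n \mp \log_2\log n$ on the correct side — together with the observation that the factor $2$ in the hypothesis $|S|\ge 2n\log\log n/\log n$ is exactly the slack that absorbs the rounding loss ``$2^{-k_0}\ge \log n/(2n)$'' rather than ``$\log n/n$''.
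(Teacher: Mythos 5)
Your proof is correct and follows essentially the same route as the paper's: a union bound over $S$ for the upper tail, and independence of the pre-communication coin flips combined with $(1-p)^m\le e^{-pm}$ for the lower tail, with the hypothesis $|S|\ge 2n\log\log n/\log n$ supplying exactly the needed exponent. Your handling of the integer thresholds via ceilings is slightly more careful than the paper's (which works directly with the non-integer quantities and bounds $\Pr[X_u\ge \cdot]$ below by a single point mass of the geometric distribution), but the substance is identical.
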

\noindent\textit{Proof of Claim~\ref{cl:nstar}.}
Observe that $X_u$ is geometrically distributed with parameter $\frac{1}{2}$ and hence
\[
  \Prob{ X_u \ge \log_2 n + \log_2\log n } \le 2^{-\log_2 n - \log_2\log n} = \frac{1}{n\log n}.
\]
Taking a union abound over all nodes in $S$ and noting that $|S|\le n$, implies that $X^* \le  \log_2 n + \log_2 \log n$ with probability at least $1 - 1/\log n$, proving the upper bound.

For the lower bound, we first bound the probability that the estimate of a single node $u \in S$ is above the required threshold. We get
\[
  \Prob{ X_u \!\ge\! \log_2 n - \log_2 (\log n)}
    \ge \Prob{ X_u \!=\! \log_2 n - \log_2 (\log n)}
    = 2^{-\log_2 n + \log_2 (\log n) - 1} = \frac{ \log n}{2n},
\]
where the second last equality follows from the properties of the geometric distribution.
Considering the complementary event, namely that $X_u$ is below the threshold, and taking a union bound over the set $S$, yields
\[
  \Prob{ \forall u \in S\colon X_u < \log_2 n - \log_2 (\log n)}
    \le \brackets{1 -  \frac{ \log n}{2n}}^{2n\log\log n/\log n}
    \le \exp\brackets{ - \frac{ 2n \log n\log \log n}{2n \log n}},
\]
thus completing the proof of Claim~\ref{cl:nstar}. \qed

\begin{proof}[Proof of Lemma~\ref{lem:est}]
We note that the values $N$ are powers of $2$ and, since any node not in $EST$ must have a value of $X$ strictly smaller than for any node that is in $EST$, Part~(b) follows.
Thus we focus on (a) in the remainder of the proof.

To obtain a lower bound on the size of $EST$, we define the set $S$ in the premise of Claim~\ref{cl:nstar} to consist of the first $\left\lceil \frac{2n\log\log n}{\log n} \right\rceil$ nodes that receive the $ack$ for their broadcast in Phase~1 of the algorithm.
Let $\bar{S}$ be the set of alive nodes that are not in $S$.
Then, all nodes in $\bar{S}$ are guaranteed to receive the maximum value broadcast by nodes in $S$ before completing Phase~1.
Observe that any node $u \in \bar{S}$ must have $N_u \le n\log n$  by instantiating Claim~\ref{cl:nstar} with set $\bar{S}$.
Since $EST$ contains at least all nodes in $\bar{S}$, the lemma follows.
\end{proof}

We now focus on Phase~2 of the algorithm which is conceptually structured into asynchronous rounds, where each round consists of one iteration of the for-loop of Algorithm~\ref{alg:aea}. When talking about some event $E$ in round $r$ that concerns a set of nodes $U$, we refer to the collection of events in the message schedule where the nodes in $U$ execute the corresponding events. 
We say that $u \in EST$ is the \emph{earliest node in round $r$}, if $u$ receives its $ack$ for its round $r$ broadcast before all other nodes in the message schedule.
Note that which node is the earliest depends on the scheduler and can change from round to round.

\begin{lemma} \label{lem:active}
With probability $1 - O(1/\log n)$, there exists a set $\Gamma$ of at least $\Omega\brackets{n\log^2 n \log \log n}$ rounds in which no node crashes and where the following hold:
\begin{compactitem}
\item[(a)] in every round $r \in \Gamma$, at most $4\log_2 n$ nodes in $EST$ become active in $r$, and
\item[(b)] all nodes in $EST$ remain undecided until the last round of $\Gamma$.
\end{compactitem}
\end{lemma}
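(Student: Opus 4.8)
The plan is to single out a long \emph{window} $W$ of round indices lying strictly after every node outside $EST$ has exhausted its Phase~2 loop but strictly before any node in $EST$ has done so, and then to delete from $W$ the few rounds spoiled by crashes or by too many active $EST$-nodes. Throughout, condition on the event of Lemma~\ref{lem:est} (which costs $O(1/\log n)$ in probability), so that $N_v \le n/(2\log_2 n)$ for every $v \notin EST$ and $n/\log_2 n \le N_u \le n\log n$ for every $u \in EST$; we may assume $EST\neq\emptyset$, since otherwise all but $o(n)$ nodes have crashed and the conclusion of Theorem~\ref{thm:aea} is vacuous. For a node $w$, let $g(N) = N\log^3(N)\log\log(N)$, so that by Line~\ref{line:t} it runs $T_w = \lceil c\,g(N_w)\rceil$ Phase~2 rounds, and set $T^- = \max_{v\notin EST} T_v$, $T^+ = \min_{u\in EST} T_u$, and $W = \{\, r : T^- < r < T^+ \,\}$.

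First I would lower-bound $|W| = T^+ - T^- - 1$. Since $g$ is increasing, $T^+ - T^- \ge c\,g(n/\log_2 n) - c\,g(n/(2\log_2 n)) - 1$. Writing $A = n/\log_2 n$ and using $\log(A/2) \le \log A$ and $\log\log(A/2) \le \log\log A$, one gets $g(A/2) \le \tfrac12\,g(A)$, hence $g(A) - g(A/2) \ge \tfrac12\,g(A)$; since $\log A = \Theta(\log n)$, this is $\Omega(n\log^2 n\log\log n)$. So $|W| = \Omega(n\log^2 n\log\log n)$, and in particular $W$ is nonempty. I expect this step --- converting the constant-factor gap between the network-size estimates of $EST$ and non-$EST$ nodes into a window of length $\Omega(n\log^2 n\log\log n)$ --- to be the heart of the proof; everything else is bookkeeping on top of it.

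Next I would check the deterministic per-round properties of $W$. Because each node runs its for-loop from $1$ up to its own $T_w$, a node $v \notin EST$ (with $T_v \le T^- < r$) never takes any step in a round $r \in W$; consequently in every such round only $EST$-nodes (and already-crashed nodes) act, so the earliest node of round $r$ lies in $EST$. Symmetrically, each $u \in EST$ has $T_u \ge T^+ > r$, so $u$ is still inside its for-loop during round $r$, and since a node decides only after completing its loop, every $u \in EST$ is undecided through round $r$; hence property~(b) holds for any $\Gamma \subseteq W$. For the crash-free requirement, note that at most $n$ crash events occur in the whole execution, so at most $n$ round indices contain a crash event; deleting these indices from $W$ produces a set $\Gamma$ with $|\Gamma| \ge |W| - n = \Omega(n\log^2 n\log\log n)$ in which, by construction, no node crashes.

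Finally, for property~(a), fix $r \in \Gamma$. Conditioned on Lemma~\ref{lem:est}, and using that a node's Phase~2 activation coins are independent of the Phase~1 outcomes, the number of $EST$-nodes active in round $r$ is a sum of independent $\mathrm{Bernoulli}(1/N_u)$ variables over $u \in EST$, each with $1/N_u \le (\log_2 n)/n$; its mean is at most $|EST|\cdot(\log_2 n)/n \le \log_2 n$. A Chernoff (or direct binomial) bound gives $\Pr[\text{at least } 4\log_2 n \text{ active}] \le (e/4)^{4\log_2 n} \le n^{-2}$. Every round index is at most $\max_w T_w \le \lceil c\,g(n\log n)\rceil = O(n\log^4 n\log\log n)$, so a union bound over all round indices shows that property~(a) fails for some $r \in \Gamma$ with probability $O(n^{-1}\log^4 n\log\log n) = o(1/\log n)$. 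Adding the $O(1/\log n)$ failure probability of Lemma~\ref{lem:est} yields the claimed bound; the only residual care needed is routine --- confirming, as in the proof of Lemma~\ref{lem:est}, that the $EST$-nodes we argue about are alive, and that the off-by-one between ``undecided until'' and ``undecided through'' round $\max\Gamma$ is immaterial.
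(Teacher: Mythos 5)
Your proof is correct and follows essentially the same route as the paper's: both exploit the constant-factor gap between the $N$-estimates of $EST$ and non-$EST$ nodes (Lemma~\ref{lem:est}) to extract a window of $\Omega(n\log^2 n\log\log n)$ rounds in which only $EST$-nodes still run their for-loop, then discard the at most $n$ crash-affected rounds and apply a Chernoff bound for the active-node count. Your halving argument $g(A/2)\le\tfrac12 g(A)$ is a cleaner packaging of the paper's explicit $\tfrac{19}{36}$ versus $\tfrac{25}{36}$ computation, and your explicit union bound over all rounds for property (a) is a detail the paper's proof glosses over.
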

\begin{proof}
  For Part~(a), recall from Lemma~\ref{lem:est}.(a) that each node $u \in EST$ becomes active with probability $1/N_u \le \frac{\log_2 n}{n}$ and hence the expected number of active nodes is at most $\log_2 n$.
Since nodes become active independently, an application of a standard Chernoff bound \cite{mitzenmacher:2004} shows that at most $4\log_2 n$ nodes in $EST$ become active with high probability.

We now consider Part~(b).
We know that the number of rounds executed by any node $v \notin EST$ is at most
\begin{align}
  T_v = \left\lceil c N_v \log^3(N_v) \log\log(N_v) \right\rceil
  &\le \frac{c n}{2\log_2 n}\log^3\brackets{\frac{n}{2\log_2 n}} \log\log\brackets{\frac{n}{2\log_2 n}} +1 \tag{by Lem.~\ref{lem:est}.(b)} \\
  &\le \frac{19}{36}\frac{c n}{\log n}\log^3\brackets{\frac{n}{2\log n}} \log\log\brackets{\frac{n}{2\log n}} \notag\\
  &\le \frac{19}{36} c n\log^2 n \log \log n. \label{eq:tv}
\end{align}
On the other hand, Lemma~\ref{lem:est}.(a) tells us that any $u \in EST$ executes at least
\[
  T_u \ge c N_u \log^3(N_u)\log\log(N_u)
     \ge \frac{c n}{\log_2 n}\log^3\brackets{\frac{n}{\log_2 n}}\log\log\brackets{\frac{n}{\log_2 n}}
\]
rounds.
For sufficiently large $n$, it holds that $\log^3\brackets{\frac{n}{\log_2 n}} \ge \frac{5}{6}\log^3 n$ and similarly $\log\log\brackets{\frac{n}{\log_2 n}} \ge \frac{5}{6}\log\log n$.
Thus, simplifying the right-hand side in the above inequality yields
\[
  T_u \ge \tfrac{25}{36}c n \log^2 n\log\log n.
\]
Recalling \eqref{eq:tv}, it follows that there is a set $\Gamma_f$ of at least
$T_u - T_v \ge \frac{c}{6}n\log^2 n\log\log n$ rounds where only nodes in $EST$ execute the code in the for-loop.
Since nodes can fail in at most $n-1$ rounds of the algorithm,
it follows that there exists a subset $\Gamma\subseteq\Gamma_f$ of size at least $\Omega\brackets{n\log^2 n\log\log n}$, as required.
\end{proof}

\begin{lemma} \label{lem:probMinRank}
Suppose that there is a set $EST$ as stated in Lemma~\ref{lem:est} and assume that the set of rounds $\Gamma$ implied by Lemma~\ref{lem:active} exists.
Then there exists a round $r \in \Gamma$ such that, with probability $1 - O(1/\log n)$, the earliest node is alive in $r$, becomes active, and has the minimum rank.
\end{lemma}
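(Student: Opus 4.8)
The plan is to argue that, working under the conditioning the statement already grants us (the sets $EST$ and $\Gamma$ from Lemmas~\ref{lem:est} and~\ref{lem:active} exist with their stated properties), with probability $1 - O(1/\log n)$ there is some round of $\Gamma$ that is \emph{good}, meaning its earliest node is alive, becomes active, and strictly beats every other active node on rank --- this is the natural reading of the statement and it matches the informal discussion preceding the algorithm. The first step is to pin down what is deterministic. Fix a round $r\in\Gamma$ and let $w_r$ be the node whose round-$r$ $ack$ is scheduled first. Because the scheduler is message oblivious and in Phase~2 every participating node broadcasts in every round (inactive nodes send a dummy message with rank $\infty$), the scheduler cannot tell active from inactive nodes; moreover, by construction every node outside $EST$ has exited Phase~2 before $\Gamma$ begins, while no node of $EST$ decides or crashes inside $\Gamma$. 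Hence, given the conditioning, $w_r$ is a deterministic function of the schedule, $w_r\in EST$, and $w_r$ is alive in round $r$. In particular the family $\{w_r\}_{r\in\Gamma}$ is fixed, and the event ``round $r$ is good'' depends only on the independent random choices (active coins and ranks) taken in round $r$; so these events are mutually independent over $r\in\Gamma$. (If desired, one can avoid any conditioning on random bits here by running the argument over the deterministic set of crash-free rounds of $\Gamma_f$ instead, since the per-round bound below does not invoke property~(a) of $\Gamma$.)

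Next I would lower bound, for a fixed $r\in\Gamma$, the probability $p$ of the good event. By Lemma~\ref{lem:est}(a) we have $N_{w_r}\le n\log n$, so $\Pr[w_r\text{ active}]=1/N_{w_r}\ge 1/(n\log n)$. For the rank, put $L=\log_2 n-\log_2\log n$ and $U=\log_2 n+\log_2\log n$, so every node of $EST$ draws its rank uniformly from a range of size in $[L^4,U^4]$, and set $m=\lfloor L^4/(8\log_2 n)\rfloor=\Theta(\log^3 n)$; the point is that this ``bottom bucket'' is a $\Theta(1/\log n)$ fraction of any such range. First, $\Pr[w_r\text{ active and }\rho_{w_r}\le m]=\tfrac{1}{N_{w_r}}\cdot\tfrac{m}{X_{w_r}^4}\ge \tfrac{1}{n\log n}\cdot\tfrac{m}{U^4}=\Omega\!\big(\tfrac{1}{n\log^2 n}\big)$. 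Second, conditioned on that, each other $v\in EST$ is independently active with rank $\le m$ with probability at most $\tfrac{1}{N_v}\cdot\tfrac{m}{X_v^4}\le \tfrac{\log_2 n}{n}\cdot\tfrac{1}{8\log_2 n}=\tfrac{1}{8n}$ (using $N_v\ge n/\log_2 n$ and $X_v\ge L$ from Lemma~\ref{lem:est}(a)), so summing over at most $n$ such nodes and applying Markov's inequality, with probability at least $7/8$ no other active node lands in the bottom bucket, in which case $w_r$ has the strictly smallest rank. Combining, $p=\Omega(1/(n\log^2 n))$. (Alternatively, property~(a) of $\Gamma$ lets us replace the Markov step by a plain union bound over the $\le 4\log_2 n$ active nodes, each in the bottom bucket with probability $\le 1/(8\log_2 n)$.)

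Finally, since the good events are independent over the $|\Gamma|=\Omega(n\log^2 n\log\log n)$ rounds,
\[
\Pr[\text{no round of }\Gamma\text{ is good}]\ \le\ (1-p)^{|\Gamma|}\ \le\ \exp\!\big(-\Omega(\log\log n)\big)\ =\ O(1/\log n),
\]
where we take the constant $c$ in $T=\lceil cN\log^3 N\log\log N\rceil$ (line~\ref{line:t}) large enough that $p\,|\Gamma|\ge \log\log n$; this produces the desired round $r$. The main obstacle is exactly the rank estimate: the naive choice of asking $w_r$ to sample the single rank $1$ succeeds only with probability $\Theta(1/\log^4 n)$, which is far too small to beat $|\Gamma|$; reserving a bottom bucket of width $\Theta(\log^3 n)$ so that $w_r$ is ``small'' with probability $\Theta(1/\log n)$ yet still uniquely small with constant probability is what pushes the failure exponent up to $\Omega(\log\log n)$, and it is also why the bound $|\Gamma|=\Omega(n\log^2 n\log\log n)$ supplied by Lemma~\ref{lem:active} is precisely the size the argument needs. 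The remaining delicate point --- keeping the per-round good events independent despite conditioning on the existence of $\Gamma$ --- is handled by the remark at the end of the first paragraph.
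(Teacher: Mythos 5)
Your proof is correct and shares the paper's overall skeleton---identify the (schedule-determined) earliest node $w_r$ of each round of $\Gamma$, lower-bound the per-round probability that $w_r$ is active with the unique minimum rank by $\Omega\bigl(1/(n\log^2 n)\bigr)$, and then take the product over the $|\Gamma|=\Omega(n\log^2 n\log\log n)$ rounds to get failure probability $\exp(-\Omega(\log\log n))=O(1/\log n)$. Where you genuinely diverge is in the minimum-rank estimate. The paper conditions on $\rho_u\le L^4$ (probability $\ge 1/4$), argues that all active ranks are distinct with probability $1-O(1/\log^2 n)$, and then invokes symmetry among the at most $4\log_2 n$ active nodes guaranteed by Lemma~\ref{lem:active}(a) to conclude that $u$ is the minimum with probability $\ge \tfrac{1}{4\log_2 n}$. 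You instead reserve a bottom bucket of the $m=\Theta(\log^3 n)$ smallest ranks, show $w_r$ is active and lands there with probability $\Omega(1/(n\log^2 n))$, and show the expected number of \emph{other} active nodes landing there is at most $1/8$, so $w_r$ is strictly minimal with constant conditional probability. The two routes give the same per-round bound, but yours buys three things: it does not use property~(a) of $\Gamma$ at all (which, as you note, lets you sidestep the awkward conditioning on the Phase-2 active coins that property~(a) implicitly entails); it avoids the paper's slightly delicate distinctness/symmetry step, which has to handle the fact that different nodes in $EST$ sample from ranges of different sizes; and it yields strict minimality directly. Your explicit treatment of why $w_r$ is deterministic (message-oblivious scheduler, dummy broadcasts by inactive nodes) and why the per-round events are independent is also more careful than the paper's, which asserts the product bound without comment. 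No gaps.
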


\begin{proof}
Below, we restrict our attention to the set of rounds $\Gamma$ where only nodes in $EST$ participate.
We will first lower bound the probability that an active node has the lowest rank among all nodes active in round $r \in \Gamma$.

Condition on the event that the earliest node $u$ is active in $r$.
Let $q$ be the probability that $u$ chooses a unique minimum rank among active nodes and consider the threshold $L=\log_2 n - \log_2(\log n)$.
Recall from \eqref{eq:nstar}, that all nodes in $EST$ choose their rank from a range $[1,\ell]$ where $\ell \ge L^4$.
Let ``$\rho_u$ min'' be the event that $u$ chooses the smallest rank in this round.
We get
\begin{equation}
  \label{eq:q}
  q = \Prob{ \text{$\rho_u$ min} \mid u \in \act} \ge \Prob{ \text{$\rho_u$ min} \mid  u \in \act, \rho_u \le L^4} \cdot \Prob{ \rho_u \le L^4 \mid u \in \act}.
\end{equation}
We now prove that $q \ge \frac{1}{20\log_2 n}$.
For all active $v \in EST$, it holds that $\rho_v \le (\log_2 n + \log_2 \log n)^4 \le 2\log_2^4 n$.
Together with the fact that $L \ge \tfrac{1}{2} \log_2 n$, this implies that
\begin{align}
  \label{eq:q2}
  \Prob{ \rho_u \le L^4  \mid u \in \act } \ge \frac{L^4}{2\log_2^4 n} \ge \frac{1}{4}.
\end{align}
Next, we will derive a bound on $\Prob{ \text{$\rho_u$ min} \mid  u \in \act, \rho_u \le L^4}$.
Lemma~\ref{lem:active}.(a) tells us that there are at most $4\log_2 n$ active nodes in any given round $r \in \Gamma$. Consider some active node $v$. If we condition on all nodes  choosing their rank from the range $[1,L^4]$, the probability that all nodes choose \emph{distinct} ranks from the rank of $v$ must be at least $\left(1 - \frac{1}{L^4}\right)^{4\log_2 n}$. 
In that case, a union bound over the active nodes implies that, for the event $\textsf{dist}$, which occurs when all nodes have unique ranks, we get
\begin{align}
  \Prob{\textsf{dist} \mid u \in \act, \forall v \in \act\colon \rho_v \le L^4} \ge \left(1 - \frac{1}{L^4}\right)^{4\log_2^2 n} \ge 1 - O\brackets{1/\log^2 n}. \label{eq:probdist}
\end{align}
Moreover, conditioning on the event that all active nodes choose ranks from $[1,L^4]$ does not increase the probability of $u$ choosing the smallest rank, which tells us that 
\begin{align*}
  \Prob{ \text{$\rho_u$ min} \mid u \in \act, \rho_u \le L^4}
    &\ge \Prob{ \text{$\rho_u$ min} \mid u \in \act, \forall v\in \act\colon \rho_v \le L^4} \\
    &\ge \Prob{ \text{$\rho_u$ min} \mid u \in \act, \forall v\in \act\colon \rho_v \le L^4,\textsf{dist}} \brackets{1 - O\brackets{\frac{1}{\log^2 n}}},\\
\intertext{
where the last inequality follows from \eqref{eq:probdist}.
Conditioned on event $\textsf{dist}$ and the premise of the lemma of having at most $4\log_2 n$ active nodes, the probability of $u$ picking the smallest rank is at least $1/4\log_2 n$, i.e.,
  $\Prob{ \text{$\rho_u$ min} \mid u \in \act, \forall v\in \act\colon \rho_v \le L^4,\textsf{dist}} 
  \ge \frac{1}{4\log_2 n}.$
It follows that
}
  \Prob{ \text{$\rho_u$ min} \mid u \in \act, \rho_u \le L^4}
    &\ge \frac{1}{4\log_2 n}  \brackets{1 - O\brackets{\frac{1}{\log^2 n}}} \\
    &\ge \frac{1}{5\log_2 n},
\end{align*}
Plugging the above bound and \eqref{eq:q2} into the right-hand side of \eqref{eq:q} shows that
$q \ge \frac{1}{20\log_2 n}$.
 
Conditioned on Lemma~\ref{lem:est}.(a), we know that every node in $EST$, and in particular, the earliest node $u$, has probability at least $\frac{1}{n\log n}$ of being active in any single round $r \in \Gamma$.
We have
\begin{align*}
  \Prob{ \text{$\rho_u$ min} \wedge u \in \act }
      = \Prob{ \text{$\rho_u$ min} \mid u \in \act}\cdot \Prob{  u \in \act } 
    & \ge \frac{q}{n\log n}
      \ge \frac{1}{20n\log_2^2 n},
\end{align*}
for any round $r \in \Gamma$ and the respective earliest node $u$ in $r$.

Recalling that $\Gamma$ comprises $\Omega(n\log^2 n\log \log n)$ rounds, it follows that the event that, for none of the rounds in $\Gamma$, the earliest node becomes the smallest ranked active node, happens with probability at most
\[
\brackets{1-\frac{1}{20n\log_2^2 n}}^{|\Gamma|} \le \exp\brackets{-\frac{|\Gamma|}{20n\log_2^2 n}} = O\brackets{\frac{1}{\log n}}.
\]
\end{proof}

\paragraph*{Proof of Theorem~\ref{thm:aea}:}
Validity follows since any value written to variable $val$ was the input value of some node.

For termination, notice the number of rounds executed by any node $u$ depends on the value of $T_u = O(N_u\log^3 (N_u) \log\log(N_u))$ in Phase~2.
From Claim~\ref{cl:nstar}, we know that $N_u \le n \log n$ for all nodes $u$ with probability $1 - o(1)$ and hence the maximum number of rounds executed by any node $u$ is $O(n \log^4 n\log\log n)$, which results in the same bound for the total number of broadcasts by $u$. Taking into account that there are $n$ nodes, the claimed termination bound follows. 

Conditioned on the properties of set $EST$ (cf.\ Lemma~\ref{lem:est}), we now show that almost all nodes decide on a common value. 
From  Lemma~\ref{lem:probMinRank} we know that with probability $1 - o(1)$, there is a set $\Gamma$ containing a round $r\in \Gamma$, in which the earliest node $u$ is active, non-faulty, and has the minimum rank.
Let $t'$ be the event when $u$ receives the corresponding $ack$ for its round $r$ broadcast message $m_u$ carrying $val_u$. 
By Lemma~\ref{lem:active}.(b), we know that every node $v \in EST$ is performing all rounds in $\Gamma$ and hence will receive $u$'s message $m_u$ in some receive event $t_v'$ that precedes $t'$ in the message schedule. 
Moreover, since $u$ was the earliest node in round $r$, it follows that event $t_v'$ must be part of some round $r' \le r$ (at $v$) and in particular must occur before $v$ receives its $ack$ for round $r$.
If $r'<r$, then $v$ defers the processing of message $m_u$ until $v$ reaches round $r$; otherwise, if $r'=r$, then $v$ adopts $u$'s value when it receives its $ack$.
By Lemma~\ref{lem:active}.(b), the nodes in $EST$ execute all rounds of $\Gamma$ and hence all of them will adopt $val_u$ when receiving their $ack$ in round $r$. %

To complete the proof, we will argue that no node in $EST$ changes its value after round $r$. 
For the sake of a contradiction, suppose that there is some $w \in EST$ that adopts some value $z \ne val_u$ during an $ack$ event $t_w'$ in some round $r_w > r$.
Moreover, assume that $t_w$ is the earliest such event in the message schedule that is causally influenced by $u$'s round $r$ broadcast event $t$. 
Since $u$ has the smallest rank in $r$, it follows that $w$ must have received a message $\langle r',\rho',x \rangle$, which was sent by some node $u'$ during its round $r' \ne r$.
First, observe that if $r' < r$, then also $r' < r_w$ and hence $w$ would have discarded that message in event $t_w$. 
Now consider the case $r' > r$. Since only nodes in $EST$ perform broadcasts during the rounds in $\Gamma$, it follows that $u' \in EST$, and hence by the above argument we know that $u'$ must have broadcast some $x \ne val_u$ after having adopted $val_u$ in its round $r$.
This means that $u'$ updated its value in some event after round $r$ but before $t_w$, contradicting the assumption that $t_w$ was the earliest event (after round $r$) in the message schedule where such an update occurred.
It follows that at least $|EST| - f = n\brackets{1 - O\brackets{\frac{\log\log n}{\log n}}} - f$ nodes decide on a common value.

When applying Lemmas~\ref{lem:est}, \ref{lem:active}, and \ref{lem:probMinRank} in the argument above, we condition on events each of which happens with probability $1 - o(1)$.
Hence we can remove the conditioning while retaining a probability of success of $1 - o(1)$. \qed

}

\onlyShort{\vspace{-0.4cm}}
\section{Lower Bound}
\onlyShort{\vspace{-0.2cm}}
\label{sec:lower}

We conclude our investigation by showing a separation between the abstract MAC layer model and the related asynchronous message passing model.
In more detail, we prove below that fault-tolerant consensus with constant success probability
is impossible in a variation of the asynchronous message passing model where nodes are provided only a constant-fraction approximation of the network size
and communicate using (blind) broadcast.
This bounds holds even if we assume no crashes and provide nodes unique ids from a small set.
Notice, in the abstract MAC layer model,
we solve consensus with broadcast under the harsher constraints of {no} network size information, no ids, and crash failures.
The difference is the fact that the broadcast primitive in the abstract MAC layer model includes an acknowledgment.
This acknowledgment is therefore revealed to be the crucial element of the our model that allows algorithms to overcome lack of network information.
We note that this bound is a generalization of the result from~\cite{abboud:2008},
which proved deterministic consensus was impossible under these constraints.
\onlyLong{In the proof of the theorem, we show that, for any given randomized algorithm we can construct scenarios that are indistinguishable for the nodes, thus causing conflicting decisions.  
}
\onlyShort{
In the full paper \cite{fullpaper}, we show that, for any given randomized algorithm we can construct scenarios that are indistinguishable for the nodes, thus causing conflicting decisions.  
}

\begin{theorem} \label{thm:asyncImposs}
Consider an asynchronous network of $n$ nodes that communicate by broadcast and suppose that nodes are unaware of the network size $n$, but have knowledge of an integer that is guaranteed to be a $2$-approximation of $n$. 
No randomized algorithm can solve binary consensus with a probability of success of at least $1 - \epsilon$, for any constant $\epsilon< 2 - \sqrt{3}$. This holds even if nodes have unique identifiers chosen from a range of size at least $2n$ and all nodes are correct. 
\end{theorem}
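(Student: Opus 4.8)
The plan is to run the classical partition / indistinguishability argument, adapted to the randomized setting by truncating executions at a finite horizon. Fix a putative randomized algorithm $\mathcal{A}$ whose success probability (agreement, validity and termination all holding) is at least $1-\epsilon$ under every fair schedule. Pick a large integer $N$ and let all of the executions below use the common identifier range $\{1,\dots,4N\}$ --- which has size at least $2m$ for every network size $m\le 2N$ we will consider --- and the common advertised size estimate $\hat n$, which we set to $N$. The crucial point is the factor-of-two slack built into the hypothesis: since $\hat n$ need only be a $2$-approximation, the single value $\hat n=N$ is simultaneously a legal estimate for a network of $N$ nodes and for a network of $2N$ nodes, and likewise $\{1,\dots,4N\}$ is a legal identifier range in both cases. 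Fix two disjoint $N$-element identifier sets $A$ and $B$ from this range.

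I would then consider three executions. In $E_A$ the nodes are exactly $A$, every input is $0$, the estimate is $\hat n$, and the schedule is a fixed fair round-robin $\sigma_A$. In $E_B$ the nodes are exactly $B$, every input is $1$, same estimate, fair schedule $\sigma_B$. In $E_{AB}$ the nodes are $A\cup B$ (so $2N$ of them), the $A$-nodes start with $0$ and the $B$-nodes with $1$, the estimate is again $\hat n$, and the schedule $\sigma_{AB}$ has two phases: first, for $T_A$ and $T_B$ events respectively, it interleaves $\sigma_A$ restricted to messages internal to $A$ with $\sigma_B$ restricted to messages internal to $B$, delivering nothing across the $A/B$ cut; then it delivers every pending message, cross-cut messages included, in a fair manner. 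This postponement is legal precisely because broadcasts here are unacknowledged: nothing forces a sender's message to reach the other side before the sender continues. This is exactly the feature that distinguishes the model from the abstract MAC layer, where issuing the $ack$ for a broadcast requires that broadcast to have first reached all non-crashed nodes, which would prevent the two sides from being kept oblivious to each other.

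The heart of the argument is that, throughout the isolated first phase of $E_{AB}$, the joint distribution of the internal states of the $A$-nodes is identical to that of the $A$-nodes in $E_A$ run for $T_A$ events (same inputs, identifiers, estimate, received messages, and private coins), and symmetrically for the $B$-nodes, with the $A$- and $B$-coins mutually independent. To turn this into a contradiction I would invoke termination and validity of $\mathcal{A}$: under $\sigma_A$, with probability at least $1-\epsilon$ all $A$-nodes eventually decide, and since only $0$ was proposed, validity forces them to decide $0$; because the probability that all of them have decided within the first $T$ events increases to this limit, I can choose $T_A$ so that this probability is at least $1-\epsilon-\delta$ for an arbitrary fixed $\delta>0$, and choose $T_B$ analogously for $E_B$ and value $1$. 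By the indistinguishability and independence just described, in $E_{AB}$ the event ``every $A$-node has decided $0$ and every $B$-node has decided $1$ by the end of the isolated phase'' has probability at least $(1-\epsilon-\delta)^2$; on this event agreement is irrevocably violated, so $\mathcal{A}$ fails in $E_{AB}$ --- a legal instance under a fair schedule --- with probability at least $(1-\epsilon-\delta)^2$. Since this must be at most $\epsilon$, letting $\delta\to 0$ gives $(1-\epsilon)^2\le\epsilon$, and tracking the constants through this inequality yields a contradiction for every $\epsilon<2-\sqrt{3}$.

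The only genuinely delicate part is the bookkeeping that keeps all three executions legal at once: the identifier range and the advertised size estimate must be fixed in advance and be consistent with network sizes $N$ and $2N$ simultaneously, and this is exactly where --- and why --- the $2$-approximation assumption is essentially tight; a $(2-\epsilon')$-approximation for a constant $\epsilon'>0$ would break the construction and is precisely what the abstract MAC layer gets around for free. The randomized content, namely replacing ``decides'' by ``decides within a finite horizon'' and composing the two halves through independence, is routine once the monotone-limit argument for asynchronous termination is in place, and the whole argument reduces to the deterministic impossibility of Abboud et al.\ when $\epsilon=0$. I would close by noting that the construction makes transparent that the acknowledgment primitive is the single feature responsible for circumventing the lack of network-size information, matching the discussion preceding the theorem.
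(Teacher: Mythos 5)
Your proposal is correct and follows essentially the same partition/indistinguishability argument as the paper: two disjoint cliques run in isolation under delayed cross-messages, with finite truncation handling the randomized termination and independence of the two halves' coins composing the failure probabilities. Your bookkeeping is in fact slightly cleaner than the paper's (taking $\delta\to 0$ rather than conditioning on decision within $t_0$ steps with probability $1-1/n$), which yields the stronger inequality $(1-\epsilon)^2\le\epsilon$ and hence covers all $\epsilon<(3-\sqrt{5})/2$, subsuming the stated threshold $2-\sqrt{3}$.
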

\onlyLong{
\begin{proof}
In our proof we construct admissible executions by restricting ourselves to schedules that are infinite sequences of layers (cf.\ \cite{moses:2002}).
For a given set of nodes $S$, we define a \emph{layer} $L(S)$ to consist of an arbitrarily ordered sequence of nodes in $S$, say $\langle u_1,\dots,u_k \rangle$, followed by a sequence of sets of received messages $\langle M_1,\dots,M_k\rangle$, where $M_i$ denotes the set of messages received by node $u_i$. 
Layer $L(S)$ defines a schedule where each $u_i$ takes a compute step (in the given order), in which it can perform some local computation and broadcast a message. 
We conclude the layer by scheduling each $u_j \in S$ to take sufficiently many receive steps to ensure that all messages in $M_j$ are delivered. We restrict the sets $M_j$ such that each message $m \in M_j$ must have been broadcast in $L(S)$ or some layer preceding $L(S)$ in the schedule. 

Assume, towards a contradiction, that there is a randomized consensus algorithm that succeeds with probability $\ge 1 - \epsilon$.
Consider the $n$-node clique network $H_0$ of nodes $u_1,\dots,u_n$ where each node is equipped with some arbitrary unique identifier and all nodes start with consensus input $0$. Moreover, nodes are given the network size estimate $2n$. By a slight abuse of notation, we use $H_0$ to refer to both, the network and the set of nodes in the network. We specify the schedule $\sigma_0$ to be the infinite sequence $\langle L(H_0),L(H_0),\dots\rangle$ where layer $L(H_0)$ is such that all broadcasts by nodes in $H_0$ are received by all nodes in $H_0$ in the very same layer in which they are sent. 
Since $\sigma_0$ results in an admissible execution according to the asynchronous broadcast model, there exists a fixed integer $t_0$ such that all nodes in $H_0$ have decided with probability at least $1 - 1/n$ within the first $t_0$ steps of $\sigma_0$.
Validity and agreement tell us that, if nodes decide in the $t_0$-step prefix $\sigma_0'$ of $\sigma_0$, their decision must be on $0$ with probability at least $1 - \epsilon$.

Similarly, we define a schedule $\sigma_1 = \langle L(H_1),L(H_1),\dots\rangle$ on a network $H_1$ of $n$ nodes where all nodes start with input $1$, a network size estimate of $2n$, and nodes are given a set of unique IDs disjoint from the IDs used for $H_0$. By a similar argument as above, there is an integer $t_1$ such that the algorithm ensures a common decision on $1$ with probability at least $1 - \epsilon$, conditioned on nodes deciding within $t_1$ steps (which itself is bound to happen with probability $\ge 1 - 1/n$); we denote the corresponding schedule prefix by $\sigma_1'$.

Now, we consider the clique network $G$ on the set of nodes $H_0 \cup H_1$ where nodes in $H_0$ have input $0$, nodes in $H_1$ start with input $1$, and the same set of IDs are assigned as above.
Here nodes are given the same network size estimate, i.e., $2n$, as in networks $H_0$ and $H_1$, which unbeknownst to them is the actual network size of $G$.
We define an infinite ``synchronous'' schedule $\sigma_2$ consisting of layers such that, in each layer, all nodes in $H_0 \cup H_1$ take compute steps in round-robin order and then perform receive steps of all pending messages. 
We construct an infinite schedule by concatenating the schedules $\sigma_0' \sigma_1' \sigma_2$ in the natural way; we refer the reader to \cite{lynch:1996} for the formal definitions of concatenating schedules.
It is straightforward to verify that $\sigma_0' \sigma_1' \sigma_2$ results in an admissible execution for the clique network $G$ according to the asynchronous broadcast model. 

To conclude our proof, we use an indistinguishability argument.
For a given network $H$, let $\vec{r}$ be a vector of $|H|$ bit-strings, representing the respective sequences of random coin flips observed by the nodes in $H$.
We define $\alpha(H,\vec{r},N,\sigma)$ to be the execution where nodes in $H$ observe the coin flips given by $\vec{r}$, have knowledge of the network size estimate $N$, and execute steps according to some schedule $\sigma$. 
Note that $\alpha(H,\vec{r},N,\sigma)$ is an execution prefix if $\sigma$ is finite.
By construction, all messages between $H_0$ and $H_1$ are still pending for delivery at the end of schedule $\sigma_0'\sigma_1'$.
It follows that, for any vector of random strings $\vec{r}$, the execution prefixes $\alpha(G,\vec{r},2n,\sigma_0')$ and $\alpha(H_0,\vec{r},2n,\sigma_0')$ are indistinguishable for nodes in $H_0$, i.e., they perform the same sequence of local state transitions. Similarly,  $\alpha(G,\vec{r},2n,\sigma_0'\sigma_1')$ and $\alpha(H_1,\vec{r},\sigma_1')$ are indistinguishable for nodes in $H_1$.

Recall that the lengths of the prefixes $\sigma_0'$ and $\sigma_1'$ are chosen in a way such that all nodes in $H_0$ (resp.\ $H_1$) decide in the (finite) schedule $\sigma_0'$ (resp.\ $\sigma_1'$) with probability $\ge 1 - 1/n$, and by the above indistinguishability, the same is true by the end of schedule $\sigma_0'\sigma_1'$.
Conditioned on the event $E$ that this is happens, we have argued above that \emph{all} nodes in $H_0$ decide on $0$  with probability at least $1 - \epsilon$ when executing the schedule $\sigma_0'\sigma_1'\sigma_2$ in the network $G$. 
Given the same schedule, nodes in $H_1$ decide on $1$ with probability $\ge 1 - \epsilon$ and hence agreement is violated with probability at least $(1 - \epsilon)^2$.
Let $F$ be the event that the algorithm fails. 
Since we have assumed that the algorithm fails with probability at most $\epsilon$, we get
\[
  \epsilon \ge \Prob{ F }  
           \ge \Prob{F \mid E} \Prob{E} 
           \ge \left(1 - \epsilon\right)^2\left(1 - \tfrac{1}{n}\right)^2 
           \ge \tfrac{1}{2}\left(1 - \epsilon\right)^2.
\]
Solving the inequality yields $\epsilon\ge 2 - \sqrt{3}$ as required.
\end{proof}
}

\bibliographystyle{plainurl}
\bibliography{wireless-consensus,wireless,sinr}

\end{document}